\begin{document}
\newtheorem{theorem}{Theorem}
\newtheorem{lemma}{Lemma}
\newtheorem{mydef}{Definition}

\title{Capacity Bounds for Relay Channels with Inter-symbol Interference and Colored Gaussian Noise}
\author{Chiranjib Choudhuri and Urbashi Mitra}
\maketitle
\long\def\symbolfootnote[#1]#2{\begingroup%
\def\thefootnote{\fnsymbol{footnote}}\footnote[#1]{#2}\endgroup}
\symbolfootnote[0]{Chiranjib Choudhuri (cchoudhu@usc.edu) and Urbashi Mitra (ubli@usc.edu) are with the Ming Hsieh Department of Electrical Engineering, University of Southern California, University Park, Los Angeles 90089.}
\long\def\symbolfootnote[#1]#2{\begingroup%
\def\thefootnote{\fnsymbol{footnote}}\footnote[#1]{#2}\endgroup}
\symbolfootnote[0]{This research has been funded in part by the following grants and organizations: NSF OCE-0520324,  NSF CNS-0722073, NSF CNS-0821750 (MRI), and the University of Southern California's Provost Office.}
\long\def\symbolfootnote[#1]#2{\begingroup%
\def\thefootnote{\fnsymbol{footnote}}\footnote[#1]{#2}\endgroup}
\symbolfootnote[0]{Parts of this work have been previously presented in \cite{Chiru09} and \cite{ChiruWuwnet}.}
\thispagestyle{empty}
\setcounter{page}{0}
\begin{abstract}
The capacity of a relay channel with inter-symbol interference (ISI) and additive colored Gaussian noise is examined under an input power constraint.  Prior results are used to show that the capacity of this channel can be computed by examining the circular degraded relay channel in the limit of infinite block length. The current work provides single letter expressions for the achievable rates with decode-and-forward (DF) and compress-and-forward  (CF) processing employed at the relay.  Additionally, the cut-set bound for the relay channel is generalized for the ISI/colored Gaussian noise scenario.  All results hinge on showing the optimality of the decomposition of the relay channel with ISI/colored Gaussian noise into an equivalent collection of coupled parallel, scalar, memoryless relay channels. The region of optimality of the DF and CF achievable rates are also discussed. Optimal power allocation strategies are also discussed for the two lower bounds and the cut-set upper bound. As the maximizing power allocations for DF and CF appear to be intractable, the desired cost functions are modified and then optimized. The resulting rates are illustrated through the computation of numerical examples.
\end{abstract}

\newpage
\section{Introduction}
\par The relay channel was introduced by van der Meulen \cite{Meulen68, Meulen71, Meulen77} and extensively studied since that time. A significant set of contributions to the analysis of such channels was provided by Cover and El Gamal in \cite{Cover79}, wherein capacity achieving coding strategies were provided for degraded, reversely degraded and feedback relay channels. The bulk of the research on relay channels has focused on memoryless channels either with or without feedback (see {\em e.g.} \cite{Zhang:1988, Reznik:2004,Kramer:2005,ashu_it}). In the current work, we determine the achievable rates and an upper bound on the capacity of a three node simple relay channel with intersymbol interference (ISI) and additive colored Gaussian noise. Such channels are of interest as most of the wireless standards are bandlimited in nature; further, underwater acoustic channels exhibit both ISI and colored noise (see \cite{milica2006, Vajapeyam07, Cecilia06}). It is needless to say that the problem is challenging as even the capacity of the simple memoryless relay channel is a long standing open problem with solutions for scenarios under very specific conditions (see \cite{Kramer:2005},\cite{Young}).

\par In this paper, we discuss two important coding strategies at the relay (a) the Decode-and-Forward (DF) protocol, and (b) the Compress-and-Forward (CF) protocol, and derive the corresponding achievable rates. In addition, we generalize the cut-set bound for the converse to our scenario of interest.  As CF and DF have differing regimes in which they offer the best rate \cite{Kramer:2005} for memoryless channels, it is of interest to investigate both coding strategies for the relay channel with finite memory and colored Gaussian noise as we do herein.

\par Important prior work on this problem includes \cite{Andrea01}, which provided the link between circular multi-terminal networks with ISI and linear ones.  A single-letter expression for the two-user broadcast channel is given in \cite{Andrea01}; however, the computational methods used therein do not directly apply to our case due to the presence of the multihop link. In fact, the challenge for the three node relay network stems from the intermediate processing at the relay node.  While the defining expressions for the capacity of the relay channel with finite memory\footnote{Classically, capacity computations in the presence of memory require examination of the entire signal due to the memory.} are provided in \cite{Ninoslav08}, a single-letter expression is not provided.  As in \cite{Andrea01,Hirt88,Cheng93}, we employ the discrete Fourier transform (DFT) to decompose our circular relay channel into a collection of parallel scalar relay channels.  A key consequence of our work is that the parallel decomposition is optimal for the computation of both the DF and CF achievable rates and thus permuting channels at the relay \cite{wenyi} cannot improve the bounds.  The resulting parallel relays are coupled via the power constraint for the DF case and via both a power and rate constraint for the CF case which affect the optimal power allocation strategies.

\par Allocation of resources (power, bandwidth, bit rates) in the context of specific coding schemes are long studied in communication community, focusing mainly on one hop multi-user channels (see \cite{Andrea01, Cheng93, Hanly}). In this paper we have studied the optimal power policy that will achieve the modified capacity bounds and also shown that in some of the cases simple water-filling type of power allocation at the nodes are optimal.

\par The remainder of the paper is organized as follows. In Section II, we introduce the different channel models and we establish the equivalence between different channel models, while Section III computes the bounds on the capacity of $n$-block channel and the condition of optimality of the different achievable rates. In Section IV we examine the bounds in the limit of infinite block length. In Section V, we derive the power allocation strategies for our various bounds. Section VI provides a few illustrative examples. Finally, in Section VII, we present the conclusions of the work.

\begin{figure}
	\centering
		\includegraphics[width=0.50\textwidth]{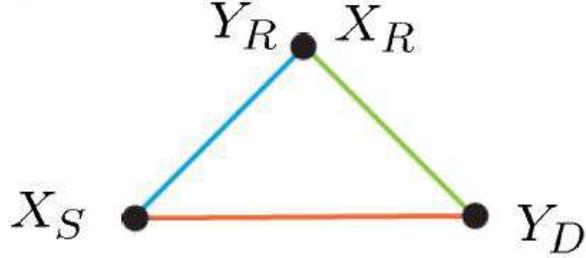}
	\caption{Channel model of a single-relay channel}
	\label{fig:new_relay}
\end{figure}

\section{Channel Models and Capacity Relationships}
\par In this section, we introduce our channel model and the related circular Gaussian relay channel building on the formulations of \cite{Andrea01}.
We consider the capacity of the discrete-time relay channel model as shown in Fig. 1. The signal transmitted by the source and relay are given by \{$x_{Sk}$\} and \{$x_{Rk}$\}, respectively.  The stationary, additive Gaussian noise processes at the relay and destination, denoted by \{$v_{Rk}$\} and \{$v_{Dk}$\}, have  zero-mean, and autocorrelation functions $R_R[i]$ and $R_D[i]$, respectively, of finite support $i_{max}$. Let $\{h_{qi}\}_{i=0}^m,\;\;q\in\{SR,SD,RD\}$ denote the channel impulse responses (CIRs) of the three links, with common memory length $m$. Without loss of generality, we only consider the case,  $m\geq i_{max}$\footnote{If $m < i_{max}$, CIRs can be zero padded to make them equal.}.  The output sequences at the relay and destination are,  \{$y_{Rk}$\} and \{$y_{Dk}$\}, respectively, with,
\begin{eqnarray}\label{eq:in-out_lin}
y_{Rk} & = & \sum_{i=0}^mh_{SRi}x_{S(k-i)}+v_{Rk},\nonumber\\
y_{Dk} & = & \sum_{i=0}^m(h_{SDi}x_{S(k-i)}+h_{RDi}x_{R(k-i)})+v_{Dk}.
\end{eqnarray}
The following power constraints are assumed for the source and relay signals, for all $n$,
\begin{eqnarray}\label{eq:power_cons}
\frac{1}{n}\sum_{k=1}^nE[x_{qk}^2]=\frac{1}{n}\beta_q^n\leq P_q, \;\; q \in \{S,R\}.
\end{eqnarray}
For a given $m$, this channel is called the \textit{linear Gaussian relay channel} (LGRC) with finite memory $m$ (see \cite{Andrea01}) as the output is a linear convolution of the input codeword with the channel impulse response. Clearly the channels have ISI since the channel output at time $k$ depends on the input symbols $\{x_q(k)\}_{q\in\{S,R\}}$ at time $k$ as well as previous input symbols $\{x_q(i)\}_{q\in\{S,R\}}, i<k$. In addition, the noise samples $\{v_q(k)\}_{q\in\{R,D\}}$ at time $k$ is correlated with noise samples $\{v_q(i)\}_{q\in\{R,D\}}$ at times $i<k$.

\par We now define the $n$-block \textit{circular Gaussian relay channel} ($n$-CGRC) for $n>m$, a $n$-block memoryless channel obtained by modifying the LGRC with memory $m$. This channel model will play a pivotal role in capacity computation as seen in the sequel. Specifically, the $n$-CGRC over each $n$-block has input vectors $\{x_{Sk}\}_{k=1}^n, \{x_{Rk}\}_{k=1}^n$ which produce output vectors $\{{y}_{Rk}^c\}_{k=1}^n$ at the relay and $\{{y}_{Dk}^c\}_{k=1}^n$ at the destination with,
\begin{eqnarray}\label{eq:in-out_circ}
y_{R}^{nc} & = & \textbf{H}_{SR}^cx_{S}^n+v_{R}^{nc},\nonumber\\
y_{D}^{nc} & = & \textbf{H}_{SD}^cx_{S}^n+\textbf{H}_{RD}^cx_{R}^n+v_{D}^{nc},
\end{eqnarray}
where $\textbf{H}^c$ is the circulant channel matrix, whose first row is defined as,
\begin{eqnarray*}
\textbf{H}_q^c(1,:) & = & [h_{q0}, 0, \cdots, 0, h_{qm}, \cdots, h_{q2}, h_{q1}],\;\; q\in \{SR,SD,RD\},
\end{eqnarray*}
and each subsequent row is a single cyclic shift to the right.
\par The only difference with the linear channel model is that the channel output is the circular convolution of the input codeword with the channel impulse response instead of a linear one. The circular noise processes over each $n$-block $\{{v}_{qk}^c\}_{k=1}^n,\;\;q\in\{R,D\}$ have periodic autocorrelation function and can be found in \cite{Andrea01}. Noise samples from different $n$-blocks are independent since the channel is $n$-block memoryless. The $n$-CGRC inherits the LGRC's power constraints. 
\par We next define key notation which will be used throughout the paper.  We let
$\Sigma_q = E[x_q^n(x_q^n)^{\dag}],$ and $\textbf{N}_q = E[v_q^{nc}(v_q^{nc})^{\dag}], \; \; q \in \{S,R\}$ be the source and relay input correlation matrices and noise correlation matrices, respectively. We shall repeatedly make use of the fact that circulant matrices can be diagonalized by the Discrete Fourier Transform (DFT) matrix, which we denote as \textbf{F}. Thus, $X_q^n=\textbf{F}x_q^n, \; \; q \in \{S,R\}$ is the DFT of the input signal $x_q^n$ with $\Psi_{q} =E[X_q^n(X_q^n)^{\dag}],\; \; q \in \{S,R\}$. Also for a matrix $\textbf{A}$, $|\textbf{A}|$ denotes the absolute value of the determinant of $\textbf{A}$. Additionally, the following diagonal matrices are defined:
\begin{eqnarray}
\textbf{D}_l & = & \textbf{F}\textbf{H}_l^c\textbf{F}^{\dag}, \;\; l\in \{SR,SD,RD\}, \label{eq:dl}\\
\textbf{C}_q & = & \textbf{F}\textbf{N}_q\textbf{F}^{\dag}, \;\; q\in \{R,D\}. \label{eq:cq}
\end{eqnarray} 
Finally, we define $x^n=x_S^n|x_R^n$. Then, $\Sigma=E[x^n(x^n)^{\dag}]=\Sigma_S - \Sigma_{SR}\Sigma_{R}^{-1}\Sigma_{SR}^{\dag}$ and $\Psi = \textbf{F}\Sigma \textbf{F}^{\dag} = \Psi_S - \Psi_{SR}\Psi_{R}^{-1}\Psi_{SR}^{\dag}$.

\par Direct computation of the capacity of the n-block LGRC is challenged by the presence of inter-block interference.
In \cite{Andrea01}, it is shown that if we extend the definitions of the $n$-LGRC, the $n$-CGRC to a synchronous Gaussian multi-terminal channel, then the capacity region of the two multi-terminal channels is the same in the limit as $n$ goes to infinity. As the relay channel is a special case of a synchronous multi-terminal channel, we get our desired result. Thus, the capacity $C$ of the LGRC can be computed as the limit of the $n$-CGRC, as $n$ grows to infinity.  In the sequel, we derive single-letter expressions for a variety of rate bounds:  achievable rates for DF and CF, and a generalization of the cut-set bound for the $n$-LGRC by exploiting this equivalence. 

\section{Bounds on the Capacity of $n$-CGRC}

\subsection{Achievable Rate: Decode-and-Forward}

\par Since the $n$-CGRC defined in Eqn. (\ref{eq:in-out_circ}) is an $n$-block memoryless relay channel, its achievable rate under DF coding strategy follows directly from \cite{Cover79} if we replace $(X,X_1,Y_1,Y)$ by $(x_S^n,x_R^n,y_R^{nc},y_D^{nc})$. The DF achievable rate is thus given by
\begin{eqnarray*}
C_{nDF}^c(P_S,P_R)=\sup_{p(x_S^n,x_R^n)} \frac{1}{n} \min\{I(x_S^n;y_R^{nc}|x_R^n),I(x_S^n,x_R^n;y_D^{nc})\},
\end{eqnarray*}
satisfying the power constraints given by (\ref{eq:power_cons}).
\par To simplify notation, we define the following power constraint set
\begin{eqnarray*}
{\cal P}^D & = & \left \{\alpha(\cdot),P_S(\cdot), P_R(\cdot): 0\leq\alpha(\omega_i)\leq 1, \frac{1}{n}\sum_{i=1}^{n}P_{S}(\omega_i)\leq P_S, \frac{1}{n}\sum_{i=1}^{n}P_R(\omega_i)\leq P_R \right\}.
\end{eqnarray*}
\begin{theorem} \label{thm:df}
The achievable rate for a $n$-block CGRC with finite memory $m$, where the relay employs DF, is given by
\begin{eqnarray}
C_{nDF}^c(P_S,P_R)& = & \max_{{\cal P}^D} \min \{C_{1nDF}^c, C_{2nDF}^c\}, \nonumber\\
\mbox{where,}\; \:
C_{1nDF}^c & = &\frac{1}{2n}\sum_{i=1}^{n}C \left(\frac{\alpha(\omega_i)\left|H_{SR}(\omega_i)\right|^2P_S(\omega_i)}{N_{R}(\omega_i)}\right), \nonumber\\
C_{2nDF}^c & = & \frac{1}{2n}\sum_{i=1}^{n}C \left(\frac{P(\omega_i)}{N_{D}(\omega_i)} \right), \nonumber\\
 H_q(\omega_i)& = & \textbf{D}_{qii},q\in \{SR,SD,RD\}, \label{eq:channel_i}\\
N_q(\omega_i)& = & \textbf{C}_{qii},q\in \{R,D\},\label{eq:noise_i}
\end{eqnarray}
\noindent and
\begin{eqnarray*}
P(\omega_i) & = & \left|H_{SD}(\omega_i)\right|^2P_S(\omega_i)+\left|H_{RD}(\omega_i)\right|^2P_R(\omega_i)+ 2\sqrt{\bar{\alpha}(\omega_i)\left|H_{SD}(\omega_i)H_{RD}(\omega_i)\right|^2P_S(\omega_i)P_R(\omega_i)}.
\end{eqnarray*}
Additionally, $C(x)= \log(1+x)$.  The $ H_q(\omega_i)$ are the channel components for frequency bin $i$ and link $q$; the $N_q(\omega_i)$ are similarly defined noise components.  The $P_S(\omega_i)$ and $P_R(\omega_i)$ are the powers allocated by the source and the relay, respectively, for $i$'th component of the channel and $0 \leq \alpha(\omega_i) \leq 1$ is the cross-correlation between the input signals as defined in \cite{Cover79} and $\bar{\alpha}(\omega_i)=1-\alpha(\omega_i)$.
\end{theorem}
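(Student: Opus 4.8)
The plan is to start from the multi-letter expression for $C_{nDF}^c$ and reduce it to the stated per-bin form in three movements: first restrict the input distribution to jointly Gaussian, then pass to the DFT domain where the channel and noise matrices are diagonal, and finally show that an input whose DFT components are independent across frequency bins is optimal, which collapses each mutual information into a sum of scalar terms. For the Gaussian reduction I would write both arguments of the min as differential-entropy differences, $I(x_S^n;y_R^{nc}|x_R^n)=h(y_R^{nc}|x_R^n)-h(v_R^{nc})$ and $I(x_S^n,x_R^n;y_D^{nc})=h(y_D^{nc})-h(v_D^{nc})$. For an input with fixed second-order structure the relevant output covariances are determined, so the maximum-entropy principle shows each entropy term is maximized by a jointly Gaussian input with that covariance; since the same Gaussian input simultaneously maximizes both terms, it maximizes their minimum, and the supremum is attained within the jointly Gaussian class parametrized by $\Sigma_S,\Sigma_R,\Sigma_{SR}$.

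Next, because $\textbf{F}$ is unitary, applying it to inputs and outputs leaves every mutual information unchanged while diagonalizing the circulant matrices into $\textbf{D}_{SR},\textbf{D}_{SD},\textbf{D}_{RD}$ and $\textbf{C}_R,\textbf{C}_D$. Writing the two terms as log-determinants gives $C_{1nDF}^c=\frac{1}{2n}\log\frac{|\textbf{C}_R+\textbf{D}_{SR}\Psi\textbf{D}_{SR}^\dagger|}{|\textbf{C}_R|}$, with $\Psi=\Psi_S-\Psi_{SR}\Psi_R^{-1}\Psi_{SR}^\dagger$ the conditional covariance of $X_S^n$ given $X_R^n$, and $C_{2nDF}^c=\frac{1}{2n}\log\frac{|\textbf{C}_D+\textbf{G}\Psi_{\mathrm{tot}}\textbf{G}^\dagger|}{|\textbf{C}_D|}$, where $\textbf{G}=[\textbf{D}_{SD}\ \textbf{D}_{RD}]$ and $\Psi_{\mathrm{tot}}$ is the joint DFT-domain covariance. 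The power constraints carry over by Parseval, yielding exactly the set $\mathcal{P}^D$.

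The crux, and the step I expect to be the main obstacle, is to show that replacing $\Psi_S,\Psi_R,\Psi_{SR}$ by their diagonals (making the DFT components independent across bins, coupled only within a bin between source and relay) cannot decrease either term. For $C_{2nDF}^c$ this is a direct Hadamard argument: since $\textbf{G}$ is diagonal, zeroing the off-diagonal entries of the three covariances zeroes the off-diagonal entries of $\textbf{G}\Psi_{\mathrm{tot}}\textbf{G}^\dagger$ while preserving its diagonal, and Hadamard's inequality makes the resulting diagonal determinant dominate. For $C_{1nDF}^c$ the subtlety is that $\Psi$ is a Schur complement, so diagonalizing the covariances does not obviously preserve its diagonal; here I would invoke ``conditioning reduces variance'', observing that the $i$-th diagonal entry of the full $\Psi$ equals $\mathrm{Var}(X_S(\omega_i)\mid X_R^n)$, which is no larger than $\mathrm{Var}(X_S(\omega_i)\mid X_R(\omega_i))$, the entry obtained after diagonalization, and then combine this monotonicity with Hadamard's inequality to conclude $C_{1nDF}^c$ cannot decrease either. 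Since the diagonalized input satisfies the same power constraints and the min is not reduced, the optimum lies among bin-independent inputs.

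Finally I would read off the scalar parameters on each bin: $P_S(\omega_i)=[\Psi_S]_{ii}$, $P_R(\omega_i)=[\Psi_R]_{ii}$, and a within-bin correlation of magnitude $\sqrt{\bar\alpha(\omega_i)}$, so that $[\Psi]_{ii}=\alpha(\omega_i)P_S(\omega_i)$ produces the first term $C(\alpha(\omega_i)|H_{SR}(\omega_i)|^2P_S(\omega_i)/N_R(\omega_i))$. For the destination term the coherent-combining gain is the real part of the source--relay cross term, which is maximized over the free phase of the correlation to give $2\sqrt{\bar\alpha(\omega_i)|H_{SD}(\omega_i)H_{RD}(\omega_i)|^2P_S(\omega_i)P_R(\omega_i)}$, producing $P(\omega_i)$ and hence $C(P(\omega_i)/N_D(\omega_i))$. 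Summing over bins, dividing by $n$, and taking the maximum over $\mathcal{P}^D$ then gives the claimed expression.
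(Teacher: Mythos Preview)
Your proof is correct and takes a genuinely different route from the paper. The paper's argument leans on an external structural result (Lemma~\ref{lemma:ninoslav}, from \cite{Ninoslav08}) that the maximizing distribution has a causal Markov form, so that in the regression $X_S^n=\textbf{V}X_R^n+\textbf{W}X_{S0}^n$ the matrices $\textbf{V},\textbf{W}$ are lower triangular; Lemma~\ref{lemma:diag} then characterizes when $\Psi$ and the destination-side matrix $\textbf{D}$ are simultaneously diagonal in terms of $\textbf{V},\textbf{W}$. The final step uses the Generalized Matrix-Determinant Lemma plus Hadamard to force $\Psi_R$ diagonal, and then invokes Lemma~\ref{lemma:diag} to get $\Psi_S,\Psi_{SR}$ diagonal as well. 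You bypass all of this. For $C_{2nDF}^c$ you apply Hadamard directly to $\textbf{C}_D+\textbf{G}\Psi_{\mathrm{tot}}\textbf{G}^\dagger$, noting that because $\textbf{G}=[\textbf{D}_{SD}\ \textbf{D}_{RD}]$ is diagonal the diagonal of this matrix depends only on the diagonals of $\Psi_S,\Psi_R,\Psi_{SR}$. For $C_{1nDF}^c$ you handle the Schur complement by the observation $[\Psi]_{ii}=\mathrm{Var}(X_S(\omega_i)\mid X_R^n)\le \mathrm{Var}(X_S(\omega_i)\mid X_R(\omega_i))$, so that bin-wise conditioning only raises each diagonal entry, and then Hadamard finishes the job. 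Since diagonalization does not decrease either argument of the $\min$ and preserves the trace constraints, the $\max$--$\min$ cannot drop. The advantage of your route is that it is fully self-contained and does not require the causal-structure lemma from \cite{Ninoslav08}; the paper's route, on the other hand, produces an explicit parametrization of the optimizing input via the regression matrices $\textbf{V},\textbf{W}$, which is reused in the discussion of why permuting subcarriers at the relay is suboptimal.
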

Before proving the theorem, we introduce two key lemmas.  We first require a property of the maximizing input probability distribution  from \cite{Ninoslav08}.
\begin{lemma} \label{lemma:ninoslav}
\cite{Ninoslav08} The capacity of the degraded relay channel with finite memory of length m is
\begin{eqnarray*}
C_{DF} & = & \lim_{n\rightarrow\infty}\sup_q \frac{1}{n} \min\{I(x_S^n;y_R^n|x_R^n),I(x_S^n,x_R^n;y_D^n)\},
\end{eqnarray*}
where the maximization is taken over the input distribution $q = \prod_{i=1}^n p(x_{Si}|x_{Si-1}, x_{Ri})p(x_{Ri}|x_{Ri-1})$.
\end{lemma}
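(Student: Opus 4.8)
The plan is to establish Lemma \ref{lemma:ninoslav} by the standard two-sided argument: prove a converse that upper-bounds every achievable DF rate by the stated multi-letter quantity restricted to the Markov class, and a matching achievability via block-Markov decode-and-forward. Because the expression retains the $\frac{1}{n}\sum$ and the limit $n\to\infty$, no single-letterization is attempted; the content of the lemma is (i) that the min-of-two-mutual-informations form survives the presence of memory, and (ii) that the maximizing input law factorizes as $q=\prod_{i=1}^n p(x_{Si}\mid x_{Si-1},x_{Ri})\,p(x_{Ri}\mid x_{Ri-1})$, where $x_{qi-1}$ denotes the length-$m$ history that serves as the channel state.

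First I would fix the degraded structure in sequence form: physical degradedness of the memory-$m$ relay channel means $x_S^n\to(x_R^n,y_R^n)\to y_D^n$, so that $p(y_D^n\mid x_S^n,x_R^n,y_R^n)=p(y_D^n\mid x_R^n,y_R^n)$. For the converse I would start from Fano's inequality, $nR\le I(W;y_D^n)+n\epsilon_n$, and split the block into the two cut-set-like terms. Using degradedness together with the chain rule over the $n$ symbols, each increment of the form $I(x_{Si};y_{Ri}\mid\text{past})$ and $I(x_{Si},x_{Ri};y_{Di}\mid\text{past})$ is bounded by conditioning on only the finite state given by the previous $m$ symbols, which is precisely where the memory length enters. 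This yields $R\le \frac{1}{n}\min\{I(x_S^n;y_R^n\mid x_R^n),I(x_S^n,x_R^n;y_D^n)\}$ under some input law, and I would then argue that restricting to laws in the stated Markov class loses nothing: the source increment depends on the relay symbol through superposition and on the past $m$ source symbols through the ISI state, while the relay increment depends only on its own $m$-symbol history.

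Next I would prove achievability by block-Markov coding. I partition transmission into $B$ blocks of length $n$; in each block the source superimposes a fresh message on the relay's codeword, the relay decodes the previous block's message and forwards a function of it, and the destination performs backward decoding. The constraint $\frac{1}{n}I(x_S^n;y_R^n\mid x_R^n)$ arises from reliable relay decoding and $\frac{1}{n}I(x_S^n,x_R^n;y_D^n)$ from the destination decoding the combined codeword; degradedness guarantees these are the binding decoders, which produces the $\min$. To force the achievable law into the Markov class, the codebook is generated according to $p(x_{Ri}\mid x_{Ri-1})$ at the relay and $p(x_{Si}\mid x_{Si-1},x_{Ri})$ at the source, so the finite memory is absorbed into a per-symbol state and the normalized mutual informations converge as $n\to\infty$.

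I expect the main obstacle to be the two-sided treatment of the channel memory. In the converse, the difficulty is peeling the $n$-letter mutual informations down to increments that depend on only a bounded state while preserving the degradedness Markov chain at every step; in achievability, the matching obstacle is designing the block-Markov scheme so that inter-block interference from the ISI does not corrupt decoding. Reconciling these two requirements is exactly what forces the per-symbol conditioning in $q$ to align with the physical channel state, and it is this alignment that pins down the claimed factorization.
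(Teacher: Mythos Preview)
The paper does not prove this lemma; it is quoted verbatim from \cite{Ninoslav08} and used as a black box to justify the causal dependence structure of the maximizing input law. There is therefore no ``paper's own proof'' to compare your proposal against.

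That said, your outline is a reasonable reconstruction of how the cited result is established: a Fano-based converse that uses degradedness to get the two cut-set terms, a block-Markov DF achievability, and a reduction of the conditioning to the finite-memory state. The one place where your sketch is loose is the claim that restricting to the Markov class $q=\prod_i p(x_{Si}\mid x_{Si-1},x_{Ri})\,p(x_{Ri}\mid x_{Ri-1})$ ``loses nothing'' in the converse. That step is the actual content of the lemma as the present paper uses it (see the application in Lemma~\ref{lemma:diag}, where the causal dependence is what forces $\mathbf{V},\mathbf{W}$ to be lower triangular), and it does not follow automatically from bounding the per-symbol increments; it requires an argument that any input law can be replaced by one of the stated Markov form without decreasing either mutual information, typically by identifying the finite channel state and showing the optimal policy is Markov in that state. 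Your plan names this step but does not supply it.
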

\noindent
Lemma~\ref{lemma:ninoslav} implies that the process $x_R$ is allowed to evolve without any dependence on the process $x_S$, while the process $x_S$ may be causally dependent on $x_R$.  The key to showing Theorem~\ref{thm:df} is proving that the DFT decomposition is optimal for DF.  To this end, we must show that a certain correlation structure holds for the source and relay signals, $x_R$ and $x_S$.

\begin{lemma} \label{lemma:diag}
Given $\Psi_R$ diagonal, for jointly Gaussian input $(X_S^n,X_R^n)$ of the form of Lemma 1, the matrices $\Psi, \textbf{D}$ will be diagonal if and only if $\Psi_S$  and $\Psi_{SR}$ are diagonal matrices, where
\begin{eqnarray}\label{eq:D}
\textbf{D}& = & \textbf{D}_{SD}\Psi_S \textbf{D}_{SD}^{\dag}+\textbf{D}_{RD}\Psi_R \textbf{D}_{RD}^{\dag}+2Re(\textbf{D}_{SD}\Psi_{SR}\textbf{D}_{RD}^{\dag})+\textbf{C}_D.
\end{eqnarray}
\end{lemma}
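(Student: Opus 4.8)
The statement is an equivalence, and I would prove the two directions separately, spending almost all of the effort on necessity. The rate expressions for the $n$-CGRC depend on the input only through the conditional source covariance $\Psi$ (which governs $I(x_S^n;y_R^{nc}|x_R^n)$) and through the destination output covariance $\mathbf{D}$ in (\ref{eq:D}) (which governs $I(x_S^n,x_R^n;y_D^{nc})$); the purpose of the lemma is to identify the regime in which these two matrices are diagonal---so that a Hadamard argument turns the determinant objectives into per-bin sums---with the regime of diagonal input covariances $\Psi_S,\Psi_{SR}$ that actually decouple the channel into parallel scalar relays. Throughout I would use that $\mathbf{D}_{SD},\mathbf{D}_{RD}$ and $\mathbf{C}_D$ are diagonal (DFTs of circulant matrices) and that $\Psi_R$ is diagonal with nonzero entries, so that $\Psi_R^{-1}$ and $\Psi_R^{1/2}$ are well defined and diagonal.

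For sufficiency (the \emph{if} direction) I would simply substitute. If $\Psi_S$ and $\Psi_{SR}$ are diagonal, then $\Psi=\Psi_S-\Psi_{SR}\Psi_R^{-1}\Psi_{SR}^{\dagger}$ is a sum and product of diagonal matrices, hence diagonal; likewise each of the four terms $\mathbf{D}_{SD}\Psi_S\mathbf{D}_{SD}^{\dagger}$, $\mathbf{D}_{RD}\Psi_R\mathbf{D}_{RD}^{\dagger}$, $2\,\mathrm{Re}(\mathbf{D}_{SD}\Psi_{SR}\mathbf{D}_{RD}^{\dagger})$ and $\mathbf{C}_D$ is diagonal, so $\mathbf{D}$ is diagonal. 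This direction is routine and is the only one strictly needed for the achievability half of Theorem~\ref{thm:df}.

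For necessity (the \emph{only if} direction) I would begin from the assumption that $\Psi$ is diagonal and use it to eliminate $\Psi_S$: from $\Psi=\Psi_S-\Psi_{SR}\Psi_R^{-1}\Psi_{SR}^{\dagger}$ write $\Psi_S=\Psi+\Psi_{SR}\Psi_R^{-1}\Psi_{SR}^{\dagger}$ and substitute into (\ref{eq:D}). Completing the square in $\Psi_{SR}$ collapses the $\Psi_S$-term and the two cross-terms into a single quadratic form, giving $\mathbf{D}=\mathbf{D}_{SD}\Psi\mathbf{D}_{SD}^{\dagger}+L\Psi_R L^{\dagger}+\mathbf{C}_D$, where $L=\mathbf{D}_{SD}\Psi_{SR}\Psi_R^{-1}+\mathbf{D}_{RD}$. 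Since $\mathbf{D}_{SD}\Psi\mathbf{D}_{SD}^{\dagger}$ and $\mathbf{C}_D$ are already diagonal, the hypothesis that $\mathbf{D}$ is diagonal forces the positive-semidefinite matrix $L\Psi_R L^{\dagger}$ to be diagonal. If I can upgrade this to ``$L$ is diagonal,'' then the off-diagonal entries of $L$, which equal $(\mathbf{D}_{SD})_{jj}(\Psi_{SR}\Psi_R^{-1})_{jk}$ for $j\neq k$, must vanish; since the diagonal channel entries are nonzero this yields $\Psi_{SR}$ diagonal, and then $\Psi_S=\Psi+\Psi_{SR}\Psi_R^{-1}\Psi_{SR}^{\dagger}$ is diagonal as well, completing the proof.

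The main obstacle is precisely this last implication: diagonality of $L\Psi_R L^{\dagger}=(L\Psi_R^{1/2})(L\Psi_R^{1/2})^{\dagger}$ says only that the rows of $L\Psi_R^{1/2}$ are mutually orthogonal, which does not by itself force $L$ to be diagonal (any $L$ with orthogonal rows satisfies it). To remove this residual off-diagonal freedom I would invoke the structure of the admissible inputs from Lemma~\ref{lemma:ninoslav}: the relay process is generated without dependence on the source, while the source is only causally dependent on the relay, so the time-domain cross-correlation $\Sigma_{SR}$ is constrained in a way that, after the DFT, pins down the off-diagonal phases of $\Psi_{SR}$ (equivalently, the maximizing joint process is circularly stationary, so that $\Sigma_S,\Sigma_{SR},\Sigma_R$ are jointly circulant and are simultaneously diagonalized by $\mathbf{F}$). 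I expect that it is this causal/stationarity input structure, rather than pure linear algebra, that is essential for ruling out the orthogonal-but-nondiagonal $L$ and hence for establishing necessity.
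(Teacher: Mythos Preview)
Your approach is essentially the paper's: the sufficiency direction is identical, and your completion-of-the-square for necessity produces exactly the quadratic form the paper works with (your $L$ is the paper's $\mathbf{V}+\mathbf{I}$ once channels are normalized to identity). You also correctly locate the one nontrivial step---that diagonality of $L\Psi_R L^\dagger$ alone does not force $L$ diagonal---and that Lemma~\ref{lemma:ninoslav} must supply the missing structure.

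Where the paper is more concrete than your final paragraph: rather than appealing to stationarity or ``pinning down phases,'' it parametrizes the Gaussian input explicitly as a linear regression $X_S^n=\mathbf{V}X_R^n+\mathbf{W}X_{S0}^n$ with $X_{S0}^n$ independent of $X_R^n$ and $\Psi_{S0}$ diagonal, and then invokes Lemma~\ref{lemma:ninoslav} to restrict $\mathbf{V},\mathbf{W}$ to be \emph{lower triangular}. In this parametrization $\Psi=\mathbf{W}\Psi_{S0}\mathbf{W}^\dagger$ and (with identity channels) $\mathbf{D}=(\mathbf{V}+\mathbf{I})\Psi_R(\mathbf{V}+\mathbf{I})^\dagger+\mathbf{W}\Psi_{S0}\mathbf{W}^\dagger+\mathbf{C}_D$, and the key elementary fact is that a product (nonsingular lower triangular)$\times$(diagonal)$\times$(upper triangular) is diagonal if and only if the triangular factors are themselves diagonal. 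That is the precise mechanism that rules out your ``orthogonal-but-nondiagonal $L$'' and forces $\mathbf{V},\mathbf{W}$---hence $\Psi_{SR}=\mathbf{V}\Psi_R$ and $\Psi_S$---to be diagonal.
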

\begin{proof} For a given diagonal $\Psi_R$, if $\Psi_S$ and $\Psi_{SR}$ are diagonal matrices, it is easy to see that both $\Psi$ and $\textbf{D}$ are diagonal. To show that diagonal $\Psi$ and $\textbf{D}$ implies diagonal $\Psi_S$ and $\Psi_{SR}$ for a given diagonal $\Psi_R$, we proceed as follows.
\par As the input vectors are jointly multivariate Gaussian we can decompose our source, in the DFT domain, into
\begin{eqnarray}\label{eq:Lin_reg}
X_S^n & = & \textbf{V}X_R^n + \textbf{W}X_{S0}^n,
\end{eqnarray}
where, $X_S^n,X_R^n$ are the DFTs of the input symbols $x_S^n,x_R^n$, $\textbf{V}$ and $\textbf{W}$ are general $n\times n$ matrices and $X_{S0}^n$ are a set of $n$ independent Gaussian random variables, also independent of $X_R^n$.  From Lemma~\ref{lemma:ninoslav}, it is sufficient to consider only lower triangular matrices $\textbf{V}$ and $\textbf{W}$.  Substituting the value of $X_S^n$ from Eqn. (\ref{eq:Lin_reg}) in $\Psi$, we get
\begin{eqnarray*}
\Psi & = & \textbf{V}\Psi_{R}\textbf{V}^{\dag} + \textbf{W}\Psi_{S0}\textbf{W}^{\dag} - \textbf{V}\Psi_{R}\Psi_{R}^{-1}\Psi_{R}\textbf{V}^{\dag}
= \textbf{W}\Psi_{S0}\textbf{W}^{\dag},
\end{eqnarray*}
where $\Psi_{S0}$ is the covariance matrix of $X_{S0}^n$, and is diagonal by construction. As the product of a non-singular lower and upper triangular matrix is diagonal if and only if they themselves are diagonal, $\textbf{W}$ must be diagonal for $\Psi$ to be diagonal, as $\textbf{W}$ is lower triangular and $\Psi_{S0}\textbf{W}^{\dag}$ is upper triangular.
\par To show the related result for $\textbf{D}$, we first assume, without loss of generality, that the channel diagonal matrices are the identity.  Then, substituting Eqn. (\ref{eq:D}), we have,
\begin{eqnarray*}
\textbf{D} & = & (\textbf{V}+\textbf{I})\Psi_R(\textbf{V}+\textbf{I})^{\dag}+\textbf{W}\Psi_{S0}\textbf{W}^{\dag}.
\end{eqnarray*}
As argued above for $\textbf{W}$, only a diagonal $\textbf{V}$, diagonalizes $(\textbf{V}+\textbf{I})\Psi_R(\textbf{V}+\textbf{I})^{\dag}$ for a diagonal $\Psi_R$. Since $\textbf{V}, \textbf{W}$ and $\Psi_R$ are diagonal, $\Psi_S$ and $\Psi_{SR}$ must be diagonal as well.
\end{proof}

With Lemma~\ref{lemma:ninoslav} and ~\ref{lemma:diag} in hand, we can prove Theorem~\ref{thm:df}.

\begin{proof}
For the Gaussian relay channel
\begin{eqnarray}\label{eq:c1ndf}
C_{1nDF}^c & \equiv & I(x_S^n;y_R^{nc}|x_R^n) = h(y_R^{nc}|x_R^n) - h(y_R^{nc}|x_R^n, x_S^n)\nonumber\\
& = & h(\textbf{H}_{SR}^cx_{S}^n+v_{R}^{nc}|x_R^n) - \frac{1}{2}\log 2\pi e\left|\textbf{N}_R\right|\nonumber\\
& \stackrel{(a)}{\leq} &\frac{1}{2}\log 2\pi e\left|\textbf{H}_{SR}^c\Sigma \textbf{H}_{SR}^{c\dag}+\textbf{N}_R\right|- \frac{1}{2}\log 2\pi e\left|\textbf{N}_R\right|\nonumber\\
& \stackrel{(b)}{\leq} & \frac{1}{2}\log 2\pi e\left|\textbf{D}_{SR}\Psi\textbf{D}_{SR}^{\dag}+\textbf{C}_R\right|- \frac{1}{2}\log 2\pi e\left|\textbf{C}_R\right|,
\end{eqnarray}
where (a) follows from the fact that a Gaussian distribution maximizes the entropy; (b) follows from the fact that both the channel impulse response  $\textbf{H}_l^c$ and noise correlation $\textbf{N}_R$ matrices are circulant and thus can be diagonalized by the DFT matrix $\textbf{F}$ and due to the fact that $\textbf{F}$ is unitary and hence has a unity determinant.  Similarly,
\begin{eqnarray}
\lefteqn{C_{2nDF}^c \equiv I(x_S^n,x_R^n;y_D^{nc})}\nonumber\\
& \leq &  \frac{1}{2}\log 2\pi e\left|\underbrace{\textbf{D}_{SD}\Psi_S \textbf{D}_{SD}^{\dag}+\textbf{D}_{RD}\Psi_R \textbf{D}_{RD}^{\dag}+2Re(\textbf{D}_{SD}\Psi_{SR}\textbf{D}_{RD}^{\dag})}_{\equiv \textbf{D}}+\textbf{C}_D\right|- \frac{1}{2}\log 2\pi e\left|\textbf{C}_D\right|\nonumber\\
& = & \frac{1}{2}\log 2\pi e \left|\textbf{C}_{D}+\textbf{D}\right|- \frac{1}{2}\log 2\pi e\left|\textbf{C}_D\right| \label{eq:c2ndf}.
\end{eqnarray}
Equality occurs when the input vectors are multivariate Gaussian distributed.
\par  We assume, as in Lemma~\ref{lemma:diag}, that without loss of generality, the diagonal channel matrices are identity matrices. Thus we have

\begin{eqnarray}
C_{2nDF}^c & \stackrel {(a)}{\leq}  & \frac{1}{2}\log\left|\textbf{D}_{1}+\textbf{V}_1\Psi_R\textbf{V}_1^{\dag}\right|\nonumber\\
& \stackrel {(b)}{=} & \frac{1}{2}\log\left|\Psi_R^{-1}+\textbf{V}_1\textbf{D}_{1}^{-1}\textbf{V}_1^{\dag}\right|\left|\Psi_R\right|\left|\textbf{D}_{1}\right|,
\end{eqnarray}
where, $\textbf{V}_1=\textbf{V}+\textbf{I}$ and $\textbf{D}_1=\textbf{C}_D+\textbf{W}\Psi_{S0}\textbf{W}^{\dag}$ and (a) follows from making these substitutions in Eqn.(\ref{eq:c2ndf}) and (b) follows from the Generalized Matrix-Determinant Lemma \cite{David}.  By Hadamard's inequality (see {\em e.g.}\cite{Cover_book}), the determinant of a positive definite matrix is maximized when the matrix is diagonal, thus a diagonal $\Psi_R$ maximizes $C_{2nDF}^c$. As we have that a diagonal $\Psi_R$ is necessary, Lemma~\ref{lemma:diag} provides our final desired result.  The final expression in the theorem results from manipulating Eqns.(\ref{eq:c1ndf}) and (\ref{eq:c2ndf}) and employing the definitions in Eqns. (\ref{eq:channel_i}) and (\ref{eq:noise_i}).
\end{proof}

\begin{figure}
	\centering
		\includegraphics[width=0.50\textwidth]{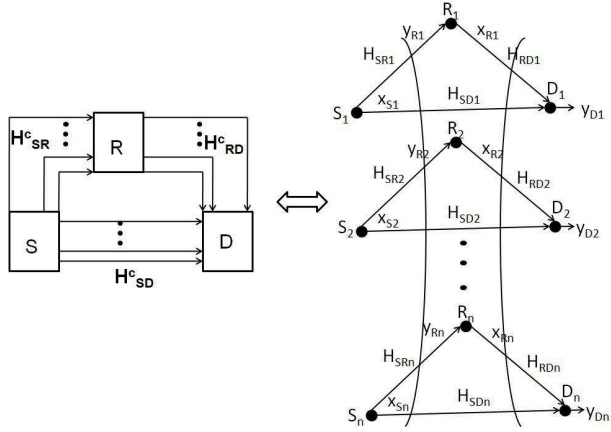}
	\caption{Decomposition of relay with ISI into parallel memoryless relays in frequency domain}
	\label{fig:n_parallel_relay}
\end{figure}

\par The implication of Theorem~\ref{thm:df} is that if we design $x_S$ in the DFT domain, a codeword which is white across sub-channels is optimal. This implies that treating the relay channel as a set of $n$ parallel and independent scalar relay channels is optimum for the computation of the DF rate (as shown in Fig. 2). Thus, other kinds of relay processing such as permuting the channels via channel matching as was done in \cite{wenyi} for the multi-hop channel is sub-optimal, as one cannot exploit potential cooperative gain in a single sub-channel. Observe however, that the input power constraints are {\em coupled} for the $n$ parallel channels.

\subsection{Achievable Rate: Compress-and-Forward with Gaussian Input}
\par In DF, the relay completely decodes the source codeword and then retransmits a related signal of lower rate to the destination. In contrast, in CF, the relay quantizes the received signal and transmits this quantized information to the destination. Since the maximizing input distribution is not known for the Gaussian CF relay channel, we consider inputs with normal pdf.

\begin{theorem}\label{thm:cf}
For $n$-block CGRC defined in the last section, the CF achievable rate with Gaussian inputs is given by,
\begin{eqnarray}\label{eq:CFrate}
C_{nCF}^c & = &  \sup_{\hat{N}_R(\omega_i)\geq 0, 1\leq i \leq n} \sum_{i=1}^n \frac{1}{2n} C\left(P_S(\omega_i)\left(\frac{\left|H_{SD}(\omega_i)\right|^2}{N_D(\omega_i)}+\frac{\left|H_{SR}(\omega_i)\right|^2}{\hat{N}(\omega_i)}\right)\right),
\end{eqnarray}
subject to the input power constraint (\ref{eq:power_cons}) and
\begin{eqnarray}\label{eq:Comp_rate_cons}
\sum_{i=1}^n \log \hat{N}_{R}(\omega_i) & \geq & \sum_{i=1}^n \log\left( \frac{P_S(\omega_i)\left(\left|H_{SR}(\omega_i)\right|^2N_D(\omega_i)+\left|H_{SD}(\omega_i)\right|^2\hat{N}(\omega_i)\right)+\hat{N}(\omega_i)N_D(\omega_i)} {\left|H_{SD}(\omega_i)\right|^2P_S(\omega_i)+\left|H_{RD}(\omega_i)\right|^2P_R(\omega_i)+N_D(\omega_i)}\right),\nonumber\\
\end{eqnarray}
where, $\hat{N}(\omega_i)=N_R(\omega_i)+\hat{N}_{R}(\omega_i)$ and $\hat{N}_{R}(\omega_i)$ is the variance of the quantization noise in the $i$-th sub-band. As in Theorem \ref{thm:df}, the $ H_q(\omega_i)$ are the channel components for frequency bin $i$ and link $q$; the $N_q(\omega_i)$ are similarly defined noise components.  The $P_S(\omega_i)$ and $P_R(\omega_i)$ are the powers allocated by the source and the relay, respectively, for $i$'th component of the channel.
\end{theorem}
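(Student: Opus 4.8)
The plan is to start from the Cover--El Gamal compress-and-forward rate \cite{Cover79}, specialized to the $n$-block memoryless $n$-CGRC. Because the $n$-CGRC is memoryless across $n$-blocks, the CF rate applies verbatim with the single-letter variables $(X,X_1,Y_1,Y,\hat{Y}_1)$ replaced by the per-block vectors $(x_S^n, x_R^n, y_R^{nc}, y_D^{nc}, \hat{y}_R^{nc})$ and a $1/n$ normalization. First I would fix the achievability choices that define ``CF with Gaussian inputs'': independent Gaussian codewords $x_S^n \sim \mathcal{N}(0,\Sigma_S)$ and $x_R^n \sim \mathcal{N}(0,\Sigma_R)$ (so $\Psi_{SR}=0$), together with a Gaussian quantizer $\hat{y}_R^{nc}=y_R^{nc}+z^{nc}$ in which $z^{nc}\sim\mathcal{N}(0,\hat{\textbf{N}}_R)$ is independent of everything. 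With these choices the achievable rate is $\frac{1}{n} I(x_S^n; \hat{y}_R^{nc}, y_D^{nc}\,|\,x_R^n)$ subject to the compression constraint $I(y_R^{nc}; \hat{y}_R^{nc}\,|\,x_R^n, y_D^{nc}) \le I(x_R^n; y_D^{nc})$.

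The second step is to diagonalize everything with the DFT, exactly as in Theorem~\ref{thm:df}. The circulant channel matrices $\textbf{H}_{SR}^c, \textbf{H}_{SD}^c, \textbf{H}_{RD}^c$ and the circular noise covariances $\textbf{N}_R, \textbf{N}_D$ are simultaneously diagonalized by $\textbf{F}$; I would then restrict $\Sigma_S, \Sigma_R$ and $\hat{\textbf{N}}_R$ to be circulant as well, so that $\Psi_S, \Psi_R$ and the quantization covariance are diagonal in the DFT domain and the block channel splits into $n$ parallel scalar relay channels. To show this restriction is without loss of optimality I would reuse the mechanism behind Lemma~\ref{lemma:diag}: conditioning on $x_R^n$ removes the known contribution $\textbf{H}_{RD}^c x_R^n$, so the objective depends only on $\Psi_S$ and $\hat{\textbf{N}}_R$ and, by Hadamard's inequality, is maximized by diagonal $\Psi_S, \hat{\textbf{N}}_R$; likewise $I(x_R^n; y_D^{nc})$ is maximized over $\Psi_R$ by a diagonal choice, which also loosens the constraint. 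The simplification relative to the DF case is that independence of $x_S$ and $x_R$ forces $\Psi_{SR}=0$, so the source--relay coupling handled in Lemma~\ref{lemma:diag} does not arise here.

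The third step is the per-band evaluation. After conditioning on $x_R^n$ and subtracting $\textbf{H}_{RD}^c x_R^n$ from $y_D^{nc}$, the source symbol $X_S(\omega_i)$ is observed in band $i$ through two independent scalar Gaussian channels: gain $H_{SR}(\omega_i)$ with noise variance $\hat{N}(\omega_i)=N_R(\omega_i)+\hat{N}_R(\omega_i)$, and gain $H_{SD}(\omega_i)$ with noise variance $N_D(\omega_i)$. A short determinant (maximal-ratio-combining) computation gives the per-band mutual information $\frac12 C\!\left(P_S(\omega_i)\big(\tfrac{|H_{SD}(\omega_i)|^2}{N_D(\omega_i)}+\tfrac{|H_{SR}(\omega_i)|^2}{\hat{N}(\omega_i)}\big)\right)$, and summing over $i$ with the $1/n$ factor yields (\ref{eq:CFrate}). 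For the constraint I would compute $I(y_R^{nc}; \hat{y}_R^{nc}\,|\,x_R^n, y_D^{nc})$ band-by-band via the Schur complement: the conditional variance of $\hat{Y}_R(\omega_i)$ given the interference-subtracted destination observation is $\big(P_S(|H_{SR}|^2 N_D + |H_{SD}|^2\hat{N}) + \hat{N}N_D\big)/(|H_{SD}|^2 P_S + N_D)$, so the per-band compression term equals $\tfrac12\log$ of this quantity divided by $\hat{N}_R(\omega_i)$. Bounding the sum of these terms by $I(x_R^n; y_D^{nc})=\sum_i \tfrac12\log\!\big((|H_{SD}|^2 P_S + |H_{RD}|^2 P_R + N_D)/(|H_{SD}|^2 P_S + N_D)\big)$, the common $\log\!\big(|H_{SD}(\omega_i)|^2 P_S(\omega_i)+N_D(\omega_i)\big)$ terms cancel on both sides, and rearranging reproduces (\ref{eq:Comp_rate_cons}).

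I expect the main obstacle to be the optimality of the diagonal (parallel) structure, i.e.\ the second step. Unlike a plain point-to-point capacity problem, CF is a coupled optimization: the quantization covariance $\hat{\textbf{N}}_R$ enters both the objective and the rate constraint, and $\Psi_S$ appears both in the objective and as interference in $I(x_R^n; y_D^{nc})$. The delicate point is therefore to show that a single circulant choice of $(\Sigma_S,\Sigma_R,\hat{\textbf{N}}_R)$ can be simultaneously optimal for the objective and feasible for the constraint. I would argue this by a cyclic-shift averaging argument: averaging a feasible covariance over the $n$ cyclic shifts preserves the power budget and the constraint, while by concavity of $\log\det$ it does not decrease the objective and renders every covariance circulant. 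Consequently, restricting the optimization to diagonal DFT-domain covariances is exact, and the supremum collapses to the stated maximization over $\hat{N}_R(\omega_i)\ge 0$.
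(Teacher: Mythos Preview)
Your overall scheme matches the paper: apply the vector CF rate of \cite{Cover79} to the block-memoryless $n$-CGRC with independent Gaussian codebooks and a Gaussian test channel $\hat y_R^{nc}=y_R^{nc}+\hat v_R^{nc}$, pass to the DFT domain, then evaluate per band. The paper reaches the per-band form by inserting an explicit $2n\times 2n$ permutation $\textbf{P}$ that interleaves the $y_D$- and $\hat y_R$-coordinates so that the $2n\times 2n$ covariance becomes $2\times 2$ block diagonal; your Schur-complement/MRC computation gives the same numbers and is just a different bookkeeping.

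Where you diverge is in justifying that diagonal $(\Psi_S,\Psi_R,\hat{\textbf{C}}_R)$ suffices. The paper does \emph{not} use cyclic-shift averaging. It isolates this step as Lemma~\ref{lem:opt_CF} (proved in Appendix~A) and argues in two pieces: (i) Hadamard's inequality shows a diagonal $\Psi_S$ maximizes the objective (\ref{eq:cf1}) for any fixed diagonal $\hat{\textbf{C}}_R$; and (ii) a separate eigendecomposition plus extreme-value/KKT argument shows that restricting to diagonal $(\Psi_S,\Psi_R)$ does not shrink the compression-rate feasible set (\ref{eq:cf4}), i.e., every constraint value attained by a non-diagonal $\Psi_S$ is also attained by some diagonal one. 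Your first suggestion (reuse the Hadamard mechanism behind Lemma~\ref{lemma:diag}) is exactly piece~(i); what you are missing is the paper's piece~(ii).

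Your cyclic-averaging proposal, as stated, has a gap. Both the objective $I(x_S^n;\hat y_R^{nc},y_D^{nc}\mid x_R^n)$ and the constraint slack $I(x_R^n;y_D^{nc})-I(y_R^{nc};\hat y_R^{nc}\mid x_R^n,y_D^{nc})$ are \emph{differences} of concave $\log\det$ terms in $(\Psi_S,\Psi_R,\hat{\textbf{C}}_R)$. Cyclic shifts leave each individual term invariant, but averaging the covariances pushes the positive \emph{and} the negative $\log\det$ terms upward by Jensen, and the net effect on their difference is indeterminate. In particular, nothing in ``concavity of $\log\det$'' prevents the averaged triple from violating the compression constraint or from lowering the objective, because the problem is not jointly concave in $(\Psi_S,\Psi_R,\hat{\textbf{C}}_R)$. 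To repair this you would need a finer argument (for instance, treat $\Psi_R$ first, where the constraint is genuinely monotone, and then handle $(\Psi_S,\hat{\textbf{C}}_R)$ by a separate mechanism), or else fall back on the paper's direct analysis of the constraint set in Appendix~A.
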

\par  Before proving the Theorem \ref{thm:cf}, we introduce one key lemma which states the optimality of decomposed relay for the computation of CF rate.
\begin{lemma}\label{lem:opt_CF}
A diagonal $\Psi_S$  and $\Psi_R$ maximizes $C_{nCF}^c$ in Theorem \ref{thm:cf}.
\end{lemma}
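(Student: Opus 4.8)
The plan is to mirror the structure of the proof of Lemma~\ref{lemma:diag}, but to account for the two features that distinguish CF from DF. First, in CF the source and relay codewords are generated independently, so $\Psi_{SR}=0$ and I only need to show that $\Psi_S$ and $\Psi_R$ are \emph{individually} diagonal; there is no cross term to control. Second, I would work from the Cover--El Gamal CF characterization with a Gaussian test channel, modeling the compressed observation as $\hat y_R^{nc}=y_R^{nc}+\hat v_R^{nc}$ with Gaussian quantization noise. The rate is then $\frac1n I(x_S^n;\hat y_R^{nc},y_D^{nc}\mid x_R^n)$ subject to $I(y_R^{nc};\hat y_R^{nc}\mid x_R^n,y_D^{nc})\le I(x_R^n;y_D^{nc})$. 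Passing to the DFT domain makes $\textbf{D}_l$, $\textbf{C}_q$, and the per-subband $\hat{\textbf{C}}_R$ diagonal by Eqns.~(\ref{eq:dl})--(\ref{eq:cq}). A key structural observation is that conditioning on $x_R^n$ removes the term $\textbf{H}_{RD}^c x_R^n$ from both observations, so the objective depends only on $\Psi_S$ and $\hat{\textbf{C}}_R$; the relay covariance $\Psi_R$ enters only through the rate constraint as a ``budget.''

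For the objective, I would subtract the known relay contribution and stack $(\hat y_R^{nc},y_D^{nc})$ into a single point-to-point MIMO channel driven by $x_S^n$. Expanding the mutual information as a difference of conditional differential entropies and invoking the Gaussian maximum-entropy bound (as in steps (a)--(b) of the proof of Theorem~\ref{thm:df}), the matrix-determinant identity recasts the objective as $\frac{1}{2n}\log\bigl|\textbf{I}+\textbf{G}^{1/2}\Psi_S\textbf{G}^{1/2}\bigr|$, where $\textbf{G}=\textbf{D}_{SR}^{\dag}\hat{\textbf{C}}^{-1}\textbf{D}_{SR}+\textbf{D}_{SD}^{\dag}\textbf{C}_D^{-1}\textbf{D}_{SD}$ is diagonal and $\hat{\textbf{C}}=\textbf{C}_R+\hat{\textbf{C}}_R$. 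Hadamard's inequality then yields $\frac{1}{2n}\sum_i\log\bigl(1+G_{ii}(\Psi_S)_{ii}\bigr)$ as an upper bound, attained with equality precisely when $\Psi_S$ is diagonal; the diagonal entries reproduce the per-subband summand of Eqn.~(\ref{eq:CFrate}). For the budget, $I(x_R^n;y_D^{nc})=\frac12\log\bigl|\textbf{I}+\textbf{A}^{-1}\textbf{D}_{RD}\Psi_R\textbf{D}_{RD}^{\dag}\bigr|$ with $\textbf{A}=\textbf{D}_{SD}\Psi_S\textbf{D}_{SD}^{\dag}+\textbf{C}_D$; once $\Psi_S$ is diagonal, $\textbf{A}$ and $\textbf{D}_{RD}^{\dag}\textbf{A}^{-1}\textbf{D}_{RD}$ are diagonal, and the same Hadamard argument forces a diagonal $\Psi_R$ to maximize this budget and thereby loosen the constraint. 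With $\Psi_S$ and $\Psi_R$ diagonal, the left side $I(y_R^{nc};\hat y_R^{nc}\mid x_R^n,y_D^{nc})$ also factors across subbands, and the matrix constraint collapses to Eqn.~(\ref{eq:Comp_rate_cons}).

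The main obstacle is the coupling: $\Psi_S$ appears in the objective \emph{and} on both sides of the rate constraint, so diagonalizing $\Psi_S$ to maximize the objective simultaneously perturbs the constraint, and $\Psi_R$ is diagonal-optimal only after $\Psi_S$ has been diagonalized. Simply zeroing the off-diagonal entries is not manifestly feasibility-preserving, because $I(y_R^{nc};\hat y_R^{nc}\mid x_R^n,y_D^{nc})$ is matrix-concave in $\Psi_S$ through the parallel-sum term $(\Psi_S^{-1}+\textbf{M})^{-1}$ with $\textbf{M}=\textbf{D}_{SD}^{\dag}\textbf{C}_D^{-1}\textbf{D}_{SD}$. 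My plan to resolve this is to exploit the phase symmetry of the diagonalized channel: the diagonal unitaries $\textbf{U}_\theta=\mathrm{diag}(e^{\mathrm{i}\theta_1},\dots,e^{\mathrm{i}\theta_n})$ commute with every $\textbf{D}_l$, $\textbf{C}_q$, and $\hat{\textbf{C}}$, so the substitution $(X_S^n,X_R^n)\mapsto(\textbf{U}_\theta X_S^n,\textbf{U}_\theta X_R^n)$ conjugates $\textbf{A}$ and the relevant covariances by $\textbf{U}_\theta$ and hence leaves the objective and \emph{each} mutual information in the constraint invariant, while mapping $\Psi_S\mapsto\textbf{U}_\theta\Psi_S\textbf{U}_\theta^{\dag}$ and $\Psi_R\mapsto\textbf{U}_\theta\Psi_R\textbf{U}_\theta^{\dag}$. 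Every rotated input is therefore feasible at the same rate, and averaging over independent uniform phases annihilates the off-diagonal entries, producing the diagonal covariances $\Psi_S^{d},\Psi_R^{d}$; concavity of $\log|\cdot|$ then shows the objective does not decrease. I expect the genuinely delicate point to be verifying that this averaged point remains feasible, since the constraint is not globally convex in $(\Psi_S,\Psi_R)$; I would handle it by diagonalizing in the right order --- replacing $\Psi_S$ by its optimal diagonal form first, then choosing the budget-maximizing diagonal $\Psi_R$ together with the per-subband $\hat N_R(\omega_i)$ --- and checking directly that the scalar constraint~(\ref{eq:Comp_rate_cons}) is satisfiable whenever the original matrix constraint held.
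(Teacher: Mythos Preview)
Your treatment of the objective is essentially the paper's: rewrite $I(x_S^n;\hat y_R^{nc},y_D^{nc}\mid x_R^n)$ as a log-determinant, pull the diagonal channel and noise matrices through, and apply Hadamard's inequality to force $\Psi_S$ diagonal. The paper does exactly this manipulation (Eqn.~(\ref{eq:cf2})).

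Where you diverge is in the constraint. The paper does \emph{not} use a symmetry/averaging argument. Instead it writes the constraint as $\log|\hat{\textbf{C}}_R|\ge f(\Psi_S,\Psi_R)$ and argues in two steps: first, for any fixed (possibly non-diagonal) $\Psi_S$, the $\Psi_R$ that maximally enlarges the constraint set shares the eigenvectors of $\Psi_S$ --- so once $\Psi_S$ is diagonal, the optimal $\Psi_R$ is diagonal too; second, after reducing $f$ to a scalar function $g(\Psi_S)$ of the form $\log|\textbf{I}+\Psi_S\textbf{C}_1'|-\log|\Psi_S+\textbf{C}_1|$ with $\textbf{C}_1,\textbf{C}_1'$ diagonal, it invokes the extreme value theorem and a KKT computation (citing~\cite{Gamal}) to show that the extrema of $g$ over the trace-constrained set are attained at diagonal $\Psi_S$. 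The payoff is that \emph{every} value of $g$ taken by a non-diagonal $\Psi_S$ is also taken by some diagonal one, so restricting to diagonal covariances does not shrink the feasible region. This neatly decouples the ``diagonalize the objective'' step from feasibility.

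Your phase-symmetry idea is more elegant in principle --- the torus action by $\textbf{U}_\theta$ really does leave both the objective and every term in the constraint invariant, as you observe --- but the gap you flag is real and is exactly the crux. Averaging over phases gives $\Psi_S^d=\mathrm{diag}(\Psi_S)$, and concavity plus invariance handles the objective via Jensen. For the constraint, however, $f(\Psi_S,\Psi_R)$ is a \emph{difference} of two concave log-determinants, both invariant under the torus; Jensen applied to each term pushes in opposite directions, so the inequality $f(\Psi_S^d,\Psi_R^d)\le f(\Psi_S,\Psi_R)$ does not follow. Your proposed fix --- ``diagonalize in the right order and check directly'' --- is not yet an argument: you would still need to show that whenever the matrix constraint holds for some $(\Psi_S,\Psi_R,\hat{\textbf{C}}_R)$, the scalar constraint~(\ref{eq:Comp_rate_cons}) holds for $(\mathrm{diag}(\Psi_S),\mathrm{diag}(\Psi_R))$ with some (possibly different) $\hat N_R(\omega_i)$, and that is precisely the non-convex step. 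The paper's route avoids this by arguing about the \emph{range} of $g$ rather than about a particular averaged point; if you want to salvage the symmetry approach, you would need an analogous range argument or a direct verification that the two competing Jensen effects on $f$ can be absorbed by re-optimizing $\hat{\textbf{C}}_R$.
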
 
The proof of this lemma is provided in Appendix A. With the Lemma \ref{lem:opt_CF} in hand, we can now prove Theorem \ref{thm:cf}.   
\begin{proof}
As the $n$-CGRC with memory $m$ is block memoryless, we can extend the CF achievable rate results for scalar relay channels \cite{Cover79} to the vector relay channels to yield the following lower bound on the capacity,

\begin{eqnarray}\label{eq:CF_rate}
C_{nCF}^c  & = & \sup \frac{1}{n}I(x_S^n;y_D^{nc},\hat{y}_R^{nc}|x_R^n)\nonumber\\
& & \mbox{with,}\;\; I(x_R^n;y_D^{nc}) \geq I(y_R^{nc};\hat{y}_R^{nc}|x_R^n,y_D^{nc}),
\end{eqnarray}
where $\hat{y}_R^{nc}$ is the quantized version of $y_R^{nc}$ and the supremum is taken over all joint distributions of the form,
\begin{eqnarray*}
p(x_S^n,x_R^n,y_R^{nc},\hat{y}_R^{nc},y_D^{nc}) & = & p(x_S^n)p(x_R^n)p(y_D^{nc},y_R^{nc}x_S^n,x_R^n)p(\hat{y}_R^{nc}|y_R^{nc},x_R^n).
\end{eqnarray*}

\par  Using Eqn. (\ref{eq:in-out_circ}) and our assumption of Gaussian inputs, we have,
\begin{eqnarray}
& & I(x_S^n;y_D^{nc},\hat{y}_R^{nc}|x_R^n)=  h(y_D^{nc},\hat{y}_R^{nc}|x_R^n)-h(y_D^{nc},\hat{y}_R^{nc}|x_S^n,x_R^n)\nonumber\\
& \stackrel{(a)}{=} & h(\textbf{H}_{SD}^cx_{S}^n+\textbf{H}_{RD}^cx_{R}^n+v_{D}^{nc},y_R^{nc}+\hat{v}_R^{nc}|x_R^n)-h(\textbf{H}_{SD}^cx_{S}^n+\textbf{H}_{RD}^cx_{R}^n+v_{D}^{nc},y_R^{nc}+\hat{v}_R^{nc}|x_S^n,x_R^n)\nonumber\\
& \stackrel{(b)}{=} & \frac{1}{2} \log (2\pi e)^2\left|\left[\begin{array}{cc} \textbf{H}_{SD}^c\Sigma_S\textbf{H}_{SD}^{c\dag}+\textbf{N}_D & \textbf{H}_{SD}^c\Sigma_S\textbf{H}_{SR}^{c\dag}\\\textbf{H}_{SD}^{c\dag}\Sigma_S\textbf{H}_{SR}^c & \textbf{H}_{SR}^c\Sigma_S\textbf{H}_{SR}^{c\dag}+\textbf{N}_R+\hat{\textbf{N}}_R\end{array}\right]\right|
-\frac{1}{2} \log (2\pi e)^2\left|\left[\begin{array}{cc} \textbf{N}_D & 0\\0 & \textbf{N}_R+\hat{\textbf{N}}_R\end{array}\right]\right|\nonumber
\end{eqnarray}
\begin{eqnarray}\label{eq:cf1}
& \stackrel{(c)}{=} & \frac{1}{2} \log (2\pi e)^2\left|\left[\begin{array}{cc} \textbf{D}_{SD}\Psi_S\textbf{D}_{SD}^{\dag}+\textbf{C}_D & \textbf{D}_{SD}\Psi_S\textbf{D}_{SR}^{\dag}\\\textbf{D}_{SD}^{\dag}\Psi_S\textbf{D}_{SR} & \textbf{D}_{SR}\Psi_S\textbf{D}_{SR}^{\dag}+\textbf{C}_R+\hat{\textbf{C}}_R\end{array}\right]\right|-\frac{1}{2} \log (2\pi e)^2\left|\left[\begin{array}{cc} \textbf{C}_D & 0\\0 & \textbf{C}_R+\hat{\textbf{C}}_R\end{array}\right]\right|.\nonumber\\
\end{eqnarray}
Here, (a) follows from the fact that the quantized version of $y_R^{nc}$, $\hat{y}_R^{nc}$ can be written as $y_R^{nc}+\hat{v}_R^{nc}$ from source coding theory (e.g. \cite{Cover_book}), where $\hat{v}_R^{nc}$ is a sequence of independent random variables (whose covariance matrix $\hat{\textbf{N}}_R = \textbf{F}^{\dag}\hat{\textbf{C}}_R\textbf{F}$ will be optimized to maximize the achievable rate), which are also independent of the input vectors and additive noises; (b) follows from the fact that the input vectors are jointly Gaussian, and (c) holds because both the channel matrix $\textbf{H}_l^c$ and noise covariance matrix $\textbf{N}_q$ are circulant by construction; hence they are diagonalized by the unitary matrix $\textbf{F}$. The inequality constraint with the Gaussian inputs can be simplified as follows.
\begin{eqnarray*}
I(x_R^n;y_D^{nc}) & = & \frac{1}{2} \log (2\pi e) \left|\textbf{D}_{SD}\Psi_S\textbf{D}_{SD}^{\dag}+\textbf{D}_{RD}\Psi_R\textbf{D}_{RD}^{\dag}+\textbf{C}_D\right|- h(y_D^{nc}|x_R^n).
\end{eqnarray*}
Similarly, for the RHS of Inequality (\ref{eq:CF_rate}), we have
\begin{eqnarray*}
\lefteqn{I(y_R^{nc};\hat{y}_R^{nc}|x_R^n,y_D^{nc})}\\
& = & h(\hat{y}_R^{nc},y_D^{nc}|x_R^n)-h(y_D^{nc}|x_R^n)-h(\hat{v}_R^{nc})\\
& = & \frac{1}{2}\log (2\pi e)^2\left|\left[\begin{array}{cc} \textbf{D}_{SR}\Psi_S\textbf{D}_{SR}^{\dag}+\textbf{C}_R+\hat{\textbf{C}_R} & \textbf{D}_{SR}\Psi_S\textbf{D}_{SD}^{\dag}\\\textbf{D}_{SD}\Psi_S\textbf{D}_{SR}^{\dag} & \textbf{D}_{SD}\Psi_S\textbf{D}_{SD}^{\dag}+\textbf{C}_D\end{array}\right]\right|-h(y_D^{nc}|x_R^n)-\frac{1}{2}\log (2\pi e) \left|\hat{\textbf{C}}_R\right|.
\end{eqnarray*}
Thus the inequality constraint is given by
\begin{eqnarray}\label{eq:cf_ineq1}
\log \left|\hat{\textbf{C}}_R\right| & \geq & \log \left|\left[\begin{array}{cc} \textbf{D}_{SR}\Psi_S\textbf{D}_{SR}^{\dag}+\textbf{C}_R+\hat{\textbf{C}_R} & \textbf{D}_{SR}\Psi_S\textbf{D}_{SD}^{\dag}\\\textbf{D}_{SD}\Psi_S\textbf{D}_{SR}^{\dag} & \textbf{D}_{SD}\Psi_S\textbf{D}_{SD}^{\dag}+\textbf{C}_D\end{array}\right]\right|\nonumber\\
& &  - \log \left|\textbf{D}_{SD}\Psi_S\textbf{D}_{SD}^{\dag}+\textbf{D}_{RD}\Psi_R\textbf{D}_{RD}^{\dag}+\textbf{C}_D\right|.\label{eq:cf4}
\end{eqnarray}
\par Now using Lemma \ref{lem:opt_CF}, we can not only restrict $\Psi_S$ and $\Psi_R$ to be a diagonal matrices, but as with the computation of the DF achievable rate, a stronger result, which says that the CF achievable rate with Gaussian inputs is maximized when we decompose the network into $n$ parallel, scalar relay channels, can be proved. 

\par Note that a diagonal $\Psi_S$ and $\Psi_R$ alone does not achieve the desired decomposition into parallel relay channels as a diagonal $\Psi_S$ and $\Psi_R$ does not block diagonalize the matrices in Eqn. (\ref{eq:cf1}) and (\ref{eq:cf_ineq1}). The implied statistical independence by diagonal $\Psi_S$ and $\Psi_R$ coupled with a proper orthonormal permutation matrix does achieve our desired result, which we will show next. We define such an orthonormal permutation matrix, $\textbf{P}$, such that,
\begin{eqnarray*}
\textbf{P}_{2j-1,j} & = & 1 \;\; \mbox{for} \;\; j=1,\cdots,n\\
\textbf{P}_{2j-2n,j} & = & 1 \;\; \mbox{for} \;\; j=n+1,\cdots,2n.
\end{eqnarray*}
Employing the defined $\textbf{P}$ and using $\det(\textbf{I}_n + \textbf{P}\textbf{A}\textbf{P}^T)=\det(\textbf{I}_n + \textbf{A})$, we rewrite the Eqn.(\ref{eq:cf1}) as,
\begin{eqnarray}
C_{nCF}^c & = & \lefteqn{\frac{1}{2}\log \left|\textbf{I}_{2n}+ \left[\begin{array}{cc} \textbf{C}_{D}^{-1} & 0\\0 & \left(\textbf{C}_{R}+\hat{\textbf{C}}_R\right)^{-1}\end{array}\right]\left[\begin{array}{cc} \textbf{D}_{SD}\Psi_S\textbf{D}_{SD}^{\dag} & \textbf{D}_{SD}\Psi_S\textbf{D}_{SR}^{\dag}\\\textbf{D}_{SR}\Psi_S\textbf{D}_{SD}^{\dag} & \textbf{D}_{SR}\Psi_S\textbf{D}_{SR}^{\dag}\end{array}\right]\right|}\nonumber\\
& = &  \frac{1}{2}\log \left|\textbf{I}_{2n}+\textbf{P} \left[\begin{array}{cc} \textbf{C}_{D}^{-1} & 0\\0 & \left(\textbf{C}_{R}+\hat{\textbf{C}}_R\right)^{-1}\end{array}\right]\underbrace{\textbf{P}^T\textbf{P}}_{{\bf I}}\left[\begin{array}{cc} \textbf{D}_{SD}\Psi_S\textbf{D}_{SD}^{\dag} & \textbf{D}_{SD}\Psi_S\textbf{D}_{SR}^{\dag}\\\textbf{D}_{SR}\Psi_S\textbf{D}_{SD}^{\dag} & \textbf{D}_{SR}\Psi_S\textbf{D}_{SR}^{\dag}\end{array}\right]\textbf{P}^T\right|.\label{eq:cf3}
\end{eqnarray}
It can be easily shown that,
\begin{eqnarray*}
\textbf{P}\left[\begin{array}{cc} \textbf{C}_D^{-1} & 0\\0 & \left(\textbf{C}_R+\hat{\textbf{C}}_R\right)^{-1}\end{array}\right]\textbf{P}^T,
\end{eqnarray*}
is a  purely diagonal matrix, however if we treat it as a block-diagonal matrix, then its $i$-th  $2 \times 2$ diagonal block is 
\begin{eqnarray*}
\left[\begin{array}{cc}  \textbf{C}_{Dii}^{-1} & 0\\0 & \left(\textbf{C}_{Rii}+\hat{\textbf{C}}_{Rii}\right)^{-1}\end{array}\right] & = & \left[\begin{array}{cc}  \frac{1}{N_D(\omega_i)} & 0\\0 & \frac{1}{N_R(\omega_i)+\hat{N}_R(\omega_i)}\end{array}\right],
\end{eqnarray*}
whereas
\begin{eqnarray*}
\textbf{P}\left[\begin{array}{cc} \textbf{D}_{SD}\Psi_S\textbf{D}_{SD}^{\dag} & \textbf{D}_{SD}\Psi_S\textbf{D}_{SR}^{\dag}\\\textbf{D}_{SR}\Psi_S\textbf{D}_{SD}^{\dag} & \textbf{D}_{SR}\Psi_S\textbf{D}_{SR}^{\dag}\end{array}\right]\textbf{P}^T
\end{eqnarray*}
is a $2n \times 2n$ block diagonal matrix where the $i$-th diagonal block is a $2\times 2$ matrix given by,
\begin{eqnarray*}
\left[\begin{array}{cc} \textbf{D}_{SDii}\Psi_{Sii}\textbf{D}_{SDii}^* & \textbf{D}_{SDii}\Psi_{Sii}\textbf{D}_{SRii}^*\\\textbf{D}_{SRii}\Psi_{Sii}\textbf{D}_{SDii}^* & \textbf{D}_{SRii}\Psi_{Sii}\textbf{D}_{SRii}^*\end{array}\right] & = & \left[\begin{array}{cc} \left|H_{SD}(\omega_i)\right|^2P_S(\omega_i) & H_{SD}(\omega_i)H_{SR}^*(\omega_i)P_S(\omega_i)\\ H_{SD}^*(\omega_i)H_{SR}(\omega_i)P_S(\omega_i)& \left|H_{SR}(\omega_i)\right|^2P_S(\omega_i)\end{array}\right].
\end{eqnarray*}
Thus, the channel is decoupled, as the inputs corresponding to different coordinates do not interfere. Thus we have,
\begin{eqnarray*}
C_{nCF}^c
& = & \frac{1}{n}\sum_{i=1}^n I(x_{Si};y_{Di}^{c},\hat{y}_{Ri}^{c}|x_{Ri}),
\end{eqnarray*}
Similarly it can be shown for the compression rate constraint (\ref{eq:cf4}). With the decoupled channel, it is very easy to derive the expression in Theorem \ref{thm:cf}. 
\end{proof} 
\par As obvious from the Lemma \ref{lem:opt_CF} that the decomposition across the frequency band is also optimal for the CF achievable rate just as with the DF protocol. But note that in contrast to the case of DF, the $n$ parallel memoryless relay channels for CF are not only coupled via the input power constraints {\em but} also by the compression rate constraint.

\subsection{Capacity Upper Bound}
\par For the upper bound, we generalize the max-flow-min-cut theorem stated in \cite{Cover79, Cover_book} for the block memoryless relay channel,
\begin{eqnarray*}
C_n^{cup}(P_S,P_R) = \sup_q \frac{1}{n} \min\{C_{1n}^{cup},C_{2n}^{cup}\} \; \; \mbox{where}
 \; \;  \begin{array}{l}
C_{1n}^{cup}= I(x_S^n;{y}_R^{nc},{y}_D^{nc}|x_R^n),\\
C_{2n}^{cup}  =I(x_S^n,x_R^n;{y}_D^{nc}) \end{array}.
\end{eqnarray*}
The maximization is taken over the same input distribution as in the lower bound (see Lemma \ref{lemma:ninoslav}). For the Gaussian relay channel, using  similar matrix manipulations as done in the cases of the achievable rates, we can obtain,
\begin{eqnarray}
C_{1n}^{cup} & \leq & \sum_{i=1}^n \frac{1}{2} C\left(\alpha(\omega_i)P_S(\omega_i)\left(\frac{\left|H_{SR}(\omega_i)\right|^2}{N_R(\omega_i)}+\frac{\left|H_{SD}(\omega_i)\right|^2}{N_D(\omega_i)}\right)\right),
\end{eqnarray}
\begin{eqnarray}
C_{2n}^{cup} & = & \frac{1}{n} \sum_{i=1}^n \frac{1}{2}C \left(\frac{P(\omega_i)}{N_{D}(\omega_i)} \right),
\end{eqnarray}
where, $P(\omega_i)$ is as defined in the Theorem 1. This result implies that decomposition of the network into parallel scalar relay channels can also be effectively employed in calculating an upper bound on capacity.
\subsection{Optimality of the achievable rates}
\par We next examine scenarios under which CF and DF are capacity achieving.  To this end, we generalize the definition of degradedness \cite{Cover79} to the $n$-block memoryless relay channel as,
\begin{mydef}
A $n$ block memoryless vector relay channel $(\mathcal{X}_S^n \times \mathcal{X}_R^n, p(y_R^{nc},y_D^{nc}|x_S^n,x_R^n),\mathcal{Y}_R^{nc} \times \mathcal{Y}_D^{nc})$ is said to be degraded if,
\begin{eqnarray*}
p(y_R^{nc},y_D^{nc}|x_S^n,x_R^n) & = & p(y_R^{nc}|x_S^n,x_R^n)p(y_D^{nc}|x_R^n,y_R^n).
\end{eqnarray*}
\end{mydef}
An  alternative  statement of Definition 1 holds for the n-CGRC with ISI if we exploit properties of the DFT matrix and undertake some matrix manipulation:
\begin{mydef}
A $n$-block memoryless Gaussian circular relay channel is said to be \textit{degraded} if the following condition is satisfied,
\begin{eqnarray*}
%\textbf{D}_{SR}^{-1}\textbf{C}_R(\textbf{D}_{SR}^{\dag})^{-1} & \leq & \textbf{D}_{SD}^{-1}\textbf{C}_D(\textbf{D}_{SD}^{\dag})^{-1}\\
%\Rightarrow\;\; 
\frac{\left|H_{SR}(\omega_i)\right|^2}{N_R(\omega_i)} & \geq & \frac{\left|H_{SD}(\omega_i)\right|^2}{N_D(\omega_i)},\;\; \forall \;\; 1\leq i \leq n.
\end{eqnarray*}
\end{mydef}
\noindent
where variables are defined as before. Thus, if the source-to-relay channel $\mbox{SNR}$ is better than the source-to-destination $\mbox{SNR}$ in all of the frequency sub-bands, then the circular relay channel is degraded. If the relay channel is degraded, then it can be readily seen that (see \cite{Cover79}) the achievable rate using DF coding coincides with the cut-set upper bound.
\par  In contrast, with the compress-and-forward coding scheme, capacity is achieved if $\hat{y}_R^{nc}$ is a deterministic, invertible function of the relay input $y_R^{nc}$.
Appendix B provides a proof of this condition. For example, this condition can be realized when the relay and destination are co-located.  In that case, the relay does not have to quantize its input $y_R^{nc}$ ($\hat{y}_R^{nc}=y_R^{nc}$) as the ultimate receiver at the destination can decode the uncompressed relay input because of physical proximity and hence the CF coding rate achieves the cut-set upper bound. This result is in accordance with the corresponding results on memoryless relay channels (see \cite{Kramer:2005}) and will be emphasized with examples later in the paper.
\section{Limiting Capacity of the Relay Channel with memory}
\par As implied by the results in \cite{Andrea01}, our desired capacity bounds  for the linear Gaussian relay channel with finite memory will be obtained if we take the limit as $n$ goes to infinity for the $n$-CGRC capacity bounds.  However, due to the presence of the relay, taking such limits requires careful treatment. 
In this section,  we take such limits to yield the primary results of the paper.
As before, we define the following power constraint set
\begin{eqnarray*}
{\cal P}^C & = & \left \{\alpha(\cdot),P_S(\cdot), P_R(\cdot): 0\leq\alpha(\omega)\leq 1, \frac{1}{2\pi}\int_{-\pi}^{\pi}P_S(\omega)d\omega\leq P_S, \frac{1}{2\pi}\int_{-\pi}^{\pi}P_R(\omega)d\omega\leq P_R \right\}.
\end{eqnarray*}

Thus, we have,
\begin{theorem}\label{thm:DFC}
The achievable rate for a LGRC with finite memory $m$, where the relay uses decode-and-forward coding strategy, is given by,
\begin{eqnarray*}
C_{DF}(P_S,P_R)& = & \max_{{\cal P}^C} \min \{C_{1DF}, C_{2DF}\} \; \; \; 
\mbox{where} \;\; \; \begin{array}{l}
C_{1DF} =\frac{1}{4\pi}\int_{-\pi}^{\pi}C \left(\frac{\alpha(\omega)\left|H_{SR}(\omega)\right|^2P_S(\omega)}{N_{R}(\omega)}\right) d\omega,\\
C_{2DF} = \frac{1}{4\pi}\int_{-\pi}^{\pi}C \left(\frac{P(\omega)}{N_{D}(\omega)} \right)d\omega, \end{array}
\end{eqnarray*}
and\\
\begin{eqnarray*}
P(\omega) & = & \left|H_{SD}(\omega)\right|^2P_S(\omega)+\left|H_{RD}(\omega)\right|^2P_R(\omega)+ 2\sqrt{\bar{\alpha}(\omega)\left|H_{SD}(\omega)H_{RD}(\omega)\right|^2P_S(\omega)P_R(\omega)}.
\end{eqnarray*}
\end{theorem}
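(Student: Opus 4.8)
The plan is to obtain Theorem~\ref{thm:DFC} as the infinite-block-length limit of the single-letter expression already established in Theorem~\ref{thm:df}. By the equivalence between the LGRC and the $n$-CGRC set up in Section~II following \cite{Andrea01}, the DF achievable rate for the linear channel satisfies $C_{DF}(P_S,P_R)=\lim_{n\to\infty} C_{nDF}^c(P_S,P_R)$, since the DF rate is expressed through the same mutual-information quantities whose circular/linear equivalence is asserted there. It therefore suffices to pass to the limit in $\max_{\mathcal{P}^D}\min\{C_{1nDF}^c,C_{2nDF}^c\}$ and identify it with $\max_{\mathcal{P}^C}\min\{C_{1DF},C_{2DF}\}$.

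First I would record the exact spectral meaning of the DFT quantities. Because each link has finite memory $m$ and each noise process has finite-support autocorrelation, the eigenvalues of the circulant matrices $\textbf{H}_q^c$ and $\textbf{N}_q$ are \emph{exactly} the uniform samples $H_q(\omega_i),N_q(\omega_i)$ at $\omega_i=2\pi i/n$ of the continuous frequency responses $H_q(\omega)=\sum_{k=0}^m h_{qk}e^{-j\omega k}$ and noise power spectral densities $N_q(\omega)$, each of which is a trigonometric polynomial, hence continuous and (for the PSDs) bounded away from zero. Consequently $C_{1nDF}^c$ and $C_{2nDF}^c$ are precisely one half of the normalized Riemann sums $\frac1n\sum_i g(\omega_i)$ of the bounded continuous integrands defining $C_{1DF}$ and $C_{2DF}$, evaluated on a grid $\{\omega_i\}$ that becomes dense in $[-\pi,\pi]$ as $n\to\infty$; the discrete power-constraint sums of $\mathcal{P}^D$ are likewise Riemann sums of the integral constraints defining $\mathcal{P}^C$, and the normalizing factor $\frac{1}{2n}$ passes to $\frac{1}{4\pi}$ exactly as in the theorem statement.

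The core of the argument is interchanging the limit with the outer $\max$ over power allocations and the inner $\min$, which I would establish by two matching inequalities. For the lower bound $\liminf_n C_{nDF}^c\ge C_{DF}$, I take any profile $(\alpha(\cdot),P_S(\cdot),P_R(\cdot))\in\mathcal{P}^C$ that is $\epsilon$-optimal for the integral problem, sample it on the grid $\{\omega_i\}$ (rescaling the sampled powers by $1-\delta_n\to 1$ so the finite-$n$ sum constraints of $\mathcal{P}^D$ hold for every $n$), and use Riemann convergence of each functional together with continuity of $\min\{\cdot,\cdot\}$ to conclude that the sampled discrete rate tends to $\min\{C_{1DF},C_{2DF}\}$. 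For the upper bound $\limsup_n C_{nDF}^c\le C_{DF}$, I build from the near-optimal discrete profile for each $n$ the piecewise-constant functions on $[-\pi,\pi]$; since $\alpha^{(n)}\in[0,1]$ and $P_S^{(n)},P_R^{(n)}$ are uniformly $L^1$-bounded by the power constraints, a Helly-type (weak-$*$) subsequence converges to a feasible limiting profile in $\mathcal{P}^C$, and upper semicontinuity of the concave integral functionals under this convergence, combined with $\limsup\min\{a_n,b_n\}\le\min\{\limsup a_n,\limsup b_n\}$, gives the bound. Taking the supremum over profiles then yields the claimed integral max-min.

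The step I expect to be the main obstacle is exactly this interchange on the upper-bound side. Unlike the point-to-point spectral-capacity limit, the optimizing discrete profiles vary with $n$ and are only $L^1$-bounded rather than equicontinuous, so I cannot appeal to uniform Riemann convergence across the feasible class; I must extract a convergent subsequence, verify feasibility of the limit, and check that the functionals pass to the limit in the right direction. This is complicated further by the cross term $2\sqrt{\bar{\alpha}(\omega)\left|H_{SD}(\omega)H_{RD}(\omega)\right|^2 P_S(\omega)P_R(\omega)}$ in $P(\omega)$, which couples $\alpha,P_S,P_R$ nonlinearly and is not separately convex in its three arguments, so the semicontinuity of $C_{2DF}$ under weak limits must be argued with care rather than read off term by term; the concavity and monotonicity of $C(x)=\log(1+x)$ is the structural tool I would lean on here. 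The outer $\min$ itself is benign by continuity, but it is precisely what blocks a decoupled per-term limiting argument and forces the joint treatment above.
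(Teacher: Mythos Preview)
Your outline is sound in spirit, but it takes a genuinely different route from the paper's own argument, and the route you chose is precisely the harder one. The paper does \emph{not} extract weak-$*$ limits of discrete optimizers or argue semicontinuity of the nonlinear functional $C_{2DF}$ under weak convergence. Instead it linearizes the inner $\min$ by writing $\min\{C_{1nDF}^c,C_{2nDF}^c\}=\min_{0\le\lambda\le1}\bigl(\lambda C_{1nDF}^c+(1-\lambda)C_{2nDF}^c\bigr)$ and then invokes a saddle-point (minimax) theorem to commute the outer $\max$ over the convex, compact power-allocation set with this $\min$ over $\lambda$. Once the order is $\min_\lambda\max_{\mathcal{P}^D}$, the $n\to\infty$ limit is taken \emph{for fixed $\lambda$} on the strictly concave inner maximization, where the optimizer is unique and the Riemann-sum argument goes through cleanly (the paper cites Lemma~3 of \cite{verdu} for this step). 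A second application of the minimax theorem on the limiting problem then restores the $\max$--$\min$ form. The minimax device thus completely sidesteps the obstacle you correctly flagged: you never need compactness of the family of near-optimal discrete profiles, nor upper semicontinuity of the awkward cross term $2\sqrt{\bar\alpha|H_{SD}H_{RD}|^2P_SP_R}$ under weak limits, because for each fixed $\lambda$ the limit is of a single concave maximum rather than of a varying family of optimizers.

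Your approach can in principle be made to work, but the step you yourself single out---passing the nonlinear, non-separately-convex integrand to a weak limit---is a real gap as written: $L^1$-boundedness gives only weak-$*$ subsequential limits as measures, and concavity of $x\mapsto\log(1+x)$ alone does not deliver upper semicontinuity of $\int C(P(\omega)/N_D(\omega))\,d\omega$ for the coupled $P(\omega)$ without an additional equi-integrability or tightness argument. The paper's saddle-point trick is both shorter and avoids this analytic detour entirely.
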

\par The proof is found in Appendix C. The cut-set bound for the relay channel with memory is similarly derived, yielding the result below,
\begin{theorem}\label{thm:CutsetC}
An upper bound on the capacity of a LGRC with memory m is given by,
\begin{eqnarray*}
C^{up}(P_S,P_R)& = & \max_{{\cal P}^C} \min \{C_{1}^{up}, C_{2}^{up}\} \; \; \; 
\mbox{where} \; \; \; \begin{array}{l} C_{1}^{up}=\frac{1}{4\pi}\int_{-\pi}^{\pi}C \left(\alpha(\omega)P_S(\omega)\left(\frac{\left|H_{SR}(\omega)\right|^2}{N_R(\omega)}+\frac{\left|H_{SD}(\omega)\right|^2}{N_D(\omega)}\right)\right) d\omega,\\
C_{2}^{up} = C_{2DF} , \end{array}
\end{eqnarray*}
where $C_{2DF}$ is as given in Theorem 3.
\end{theorem}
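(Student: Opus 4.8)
The plan is to obtain Theorem~\ref{thm:CutsetC} as the infinite-block-length limit of the single-letter $n$-CGRC cut-set bound already established in Section~III-C, following exactly the route used for the decode-and-forward rate in Theorem~\ref{thm:DFC} (Appendix C). The starting point is the equivalence inherited from \cite{Andrea01}: the cut-set upper bound on the capacity of the LGRC with finite memory $m$ equals $\lim_{n\to\infty} C_n^{cup}(P_S,P_R)$. Writing $C_n^{cup} = \sup_q \frac{1}{n}\min\{C_{1n}^{cup},C_{2n}^{cup}\}$ and inserting the frequency-domain expressions for $C_{1n}^{cup}$ and $C_{2n}^{cup}$ derived in Section~III-C, the task reduces to passing this quantity to the limit while tracking both the inner $\min$ and the outer supremum over the power policy.

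First I would dispatch the second cut, since $C_{2n}^{cup}$ is literally the $n$-CGRC quantity $C_{2nDF}^c$ appearing in Theorem~\ref{thm:df}; its limit is therefore identical to $C_{2DF}$ already computed in Theorem~\ref{thm:DFC}, which is precisely the claimed $C_2^{up}$, so no new work is needed there. For the first cut, the single-letter bound $C_{1n}^{cup} = \sum_{i=1}^n \frac{1}{2} C\bigl(\alpha(\omega_i)P_S(\omega_i)(|H_{SR}(\omega_i)|^2/N_R(\omega_i)+|H_{SD}(\omega_i)|^2/N_D(\omega_i))\bigr)$ is a Riemann sum over the uniform frequency grid $\{\omega_i\}$, which becomes dense in $[-\pi,\pi]$ as $n\to\infty$. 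Because the transfer functions $H_q(\cdot)$ (DFTs of the finite-memory impulse responses) and the noise spectra $N_q(\cdot)$ are continuous and bounded away from zero on $[-\pi,\pi]$, and $C(x)=\log(1+x)$ is continuous, for a fixed power policy $\frac{1}{n}C_{1n}^{cup}$ converges to $C_1^{up}=\frac{1}{4\pi}\int_{-\pi}^{\pi}C(\cdots)\,d\omega$. Continuity of $\min\{\cdot,\cdot\}$ then gives, for each fixed policy, $\frac{1}{n}\min\{C_{1n}^{cup},C_{2n}^{cup}\}\to\min\{C_1^{up},C_2^{up}\}$, while the discrete constraint set $\mathcal{P}^D$ collapses to the continuous set $\mathcal{P}^C$ in the limit.

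The hard part will be interchanging the limit with the supremum over power policies in the presence of the $\min$, which is the \emph{careful treatment} flagged at the opening of Section~IV. I would argue the two inequalities separately. For $\liminf_n C_n^{cup}\ge \max_{\mathcal{P}^C}\min\{C_1^{up},C_2^{up}\}$, take any continuous optimizer in $\mathcal{P}^C$, sample it on the grid $\{\omega_i\}$ to produce feasible discrete policies, and use the Riemann-sum convergence above together with continuity of $\min$ to recover the integral value. For $\limsup_n C_n^{cup}\le \max_{\mathcal{P}^C}\min\{C_1^{up},C_2^{up}\}$, interpolate any near-optimal discrete policy into a piecewise-constant element of $\mathcal{P}^C$ of no smaller objective, so that no discrete policy can asymptotically beat the continuous optimum. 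The $\min$ is permitted to pass through the limit because the integrands are uniformly continuous and bounded on the compact constraint set, making the convergence of each cut uniform in the policy; this is what lets the Appendix~C argument carry over essentially verbatim once $C_1^{up}$ replaces $C_{1DF}$ and $C_2^{up}=C_{2DF}$ is reused unchanged.
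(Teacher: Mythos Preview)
Your high-level plan---take the $n$-CGRC cut-set bound of Section~III-C and pass $n\to\infty$, reusing the Appendix~C machinery with $C_{1}^{up}$ in place of $C_{1DF}$ and observing $C_{2}^{up}=C_{2DF}$ verbatim---is exactly what the paper does; the text simply says the bound is ``similarly derived'' and offers no separate proof. So at that level you are aligned.

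Where you diverge is in the mechanism for exchanging the limit with the outer supremum in the presence of the inner $\min$. The paper's Appendix~C does \emph{not} argue a $\liminf/\limsup$ sandwich via sampling and interpolation of policies. Instead it linearizes the $\min$ as $\min\{C_1,C_2\}=\min_{0\le\lambda\le 1}(\lambda C_1+\bar\lambda C_2)$, invokes a minimax theorem (Luenberger) to commute the $\max$ over policies with the $\min$ over $\lambda$, reduces the problem to $\lim_n \max_{Z_n}(\lambda C_{1n}+\bar\lambda C_{2n})$ for fixed $\lambda$---a single strictly concave maximization whose limit follows from uniqueness of the optimizer plus Riemann convergence---and then swaps back. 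Your route is more elementary in that it avoids the minimax machinery, while the paper's route avoids having to control convergence uniformly over the whole policy class.

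One point to tighten: your final sentence asserts uniform convergence because ``the integrands are uniformly continuous and bounded on the compact constraint set,'' but $\mathcal P^D$ as written is not compact---only average power is constrained, not per-subband peak power, so $P_S(\omega_i)$ can be arbitrarily large. Appendix~C quietly inserts the assumption ``assuming the peak power in a sub-band is bounded'' precisely to secure compactness; your argument needs the same assumption (or an additional step showing, via concavity of $C(\cdot)$, that near-optimal policies can always be chosen with bounded peak power). With that caveat addressed, either route closes the proof.
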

Finally, we have the associated CF result:
\begin{theorem}\label{CFC}
The achievable rate for a LGRC with finite memory $m$, where the relay uses compress-and-forward coding strategy, is given by,
\begin{eqnarray*}
C_{CF}(P_S,P_R) & = & \sup  \frac{1}{4\pi}\int_{-\pi}^{\pi} C\left(P_S(\omega)\left(\frac{\left|H_{SD}(\omega)\right|^2}{N_D(\omega)}+\frac{\left|H_{SR}(\omega)\right|^2}{N_R(\omega)+\hat{N}_R(\omega)}\right)\right)d\omega,
\end{eqnarray*}
subject to the input power constraints as in the earlier bounds and,
\begin{eqnarray*}
\int_{-\pi}^{\pi} \log \hat{N}_{R}(\omega)d\omega & \geq & \int_{-\pi}^{\pi} \log\left( \frac{P_S(\omega)\left(\left|H_{SR}(\omega)\right|^2N_D(\omega)+\left|H_{SD}(\omega)\right|^2\hat{N}(\omega)\right)+\hat{N}(\omega)N_D(\omega)} {\left|H_{SD}(\omega)\right|^2P_S(\omega)+\left|H_{RD}(\omega)\right|^2P_R(\omega)+N_D(\omega)}\right)d\omega.
\end{eqnarray*}
\end{theorem}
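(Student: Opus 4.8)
The plan is to derive Theorem~\ref{CFC} as the $n\to\infty$ limit of the per-block CF rate $C_{nCF}^c$ established in Theorem~\ref{thm:cf}, mirroring the way Theorem~\ref{thm:DFC} is obtained from Theorem~\ref{thm:df} in Appendix C. By the equivalence of the LGRC and the $n$-CGRC in the limit of infinite block length (shown in \cite{Andrea01} and invoked in Section II), it suffices to evaluate $\lim_{n\to\infty}C_{nCF}^c$. First I would fix the frequency grid $\omega_i=2\pi i/n$, $i=1,\dots,n$, and identify the diagonal entries $H_q(\omega_i)=\textbf{D}_{qii}$ and $N_q(\omega_i)=\textbf{C}_{qii}$ as the $n$-point DFT samples of the finite-support channel impulse responses $\{h_{ql}\}_{l=0}^m$ and noise autocorrelations $R_q[\cdot]$, respectively. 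Because $m$ and $i_{max}$ are fixed while $n$ grows, these samples become dense in $[-\pi,\pi]$ and converge to the channel frequency responses $H_q(\omega)$ and the noise power spectral densities $N_q(\omega)$; this is precisely the asymptotic equivalence between the circulant ($n$-CGRC) and Toeplitz (LGRC) operators that underlies \cite{Andrea01}.

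With this identification in place, the second step converts the finite sums into integrals. The objective $\frac{1}{2n}\sum_{i=1}^n C(\cdot)$ and both sides of the compression-rate constraint (\ref{eq:Comp_rate_cons}) are Riemann sums with spacing $\Delta\omega=2\pi/n$, so they converge to $\frac{1}{4\pi}\int_{-\pi}^{\pi}(\cdot)\,d\omega$ and to the stated integral inequality, respectively. The finite per-bin optimization variables become functions: the quantization-noise profile $\hat{N}_R(\omega_i)$ passes to a measurable $\hat{N}_R(\omega)\ge 0$, and the power allocations $P_S(\omega_i),P_R(\omega_i)$ pass to $P_S(\omega),P_R(\omega)$, with the averaged power constraints $\frac{1}{n}\sum_i P_q(\omega_i)\le P_q$ becoming $\frac{1}{2\pi}\int_{-\pi}^{\pi}P_q(\omega)\,d\omega\le P_q$. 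For these convergences I would assume the noise PSDs are bounded away from zero and the channel responses are bounded, so that every integrand is continuous and uniformly bounded in $\omega$; this justifies the Riemann-sum limits and, where the interchange of limit and supremum is required, supplies the domination needed for a dominated-convergence argument.

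The main obstacle, and what distinguishes the CF case from the DF case of Theorem~\ref{thm:DFC}, is the presence of \emph{two} coupling constraints rather than one: the parallel sub-channels are linked not only through the source and relay power budgets but also through the compression-rate constraint (\ref{eq:Comp_rate_cons}). I therefore have to establish the limiting optimization carefully in both directions. For achievability of the limiting rate, given any admissible function triple $(\hat{N}_R(\cdot),P_S(\cdot),P_R(\cdot))$ satisfying the integral constraints, I would sample it on the grid $\{\omega_i\}$ to produce feasible finite-$n$ allocations whose rate converges to the integral; for the converse I would show that any feasible finite-$n$ allocation, interpolated into a function, remains feasible for the integral constraints in the limit and cannot exceed the supremum over functions. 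The delicate point is that feasibility of the compression-rate constraint must be preserved under this passage to the limit \emph{simultaneously} with the power constraints, which is exactly the ``careful treatment'' flagged before Theorem~\ref{thm:DFC}. The boundedness assumptions above are what make both constraints close up in the limit and allow the supremum over sequences to coincide with the supremum over functions, yielding the expression in Theorem~\ref{CFC}.
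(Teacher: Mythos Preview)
Your proposal is correct and follows essentially the same route as the paper: pass from the $n$-CGRC expression of Theorem~\ref{thm:cf} to the LGRC by recognizing the sums in (\ref{eq:CFrate}) and (\ref{eq:Comp_rate_cons}) as Riemann sums and invoking boundedness/continuity of the integrands to justify the limit. The paper's own proof is in fact terser than yours---it simply notes that the constraint set is closed (though non-convex), that the objective is continuous, bounded on $[-\pi,\pi]$, concave in $P_S(\omega)$ and convex in $\hat N_R(\omega)$, and then appeals directly to Riemann-integral convergence; your explicit two-direction (sample/interpolate) treatment of the $\sup$--$\lim$ interchange is a more careful version of the same idea.
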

\begin{proof}
It is easy to see that the input power constraints and the inequality (\ref{eq:Comp_rate_cons}) form a closed but non-convex constraint set. But as the objective function is continuous and bounded in $[-\pi,\pi]$ and concave in $P_S(\omega)$ and convex in $\hat{N}_R(\omega)$, by the properties of the Riemann integral Eqns. (\ref{eq:CFrate}) and (\ref{eq:Comp_rate_cons}) converge to the desired result.
\end{proof}
\section{Power Allocation}
In this section, we consider the input power spectral densities (PSDs) for our bounds on the capacity derived in the previous sections (Theorems 3-5). As the maximizing PSDs appear to be intractable to compute, for both the DF and CF protocol, approximations will be considered to yield closed form solutions. To simplify the notation, we let the $\mbox{SNR}$ of each link in each of the sub-bands be defined as follows,
\begin{eqnarray*}
a_{SRi}=\frac{|H_{SR}(\omega_i)|^2}{N_R(\omega_i)},a_{SDi}=\frac{|H_{SD}(\omega_i)|^2}{N_D(\omega_i)},a_{RDi}=\frac{|H_{RD}(\omega_i)|^2}{N_D(\omega_i)}.
\end{eqnarray*}

\subsection{Power allocation for DF}
\par We assume a total power constraint: $
\frac{1}{2\pi}\int_{-\pi}^{\pi}(P_S(\omega)+P_R(\omega))d\omega \leq P_t.$  The direct maximization over the power allocation appears to be intractable, thus we consider a lower bound on the achievable rate of Theorem \ref{thm:DFC} by exchanging the integration and infimum operations as,
\begin{eqnarray*}
C_{DF} & = & \max \min  \left\{\int C_{1}(\omega),\int C_{2}(\omega) \right\}d\omega\geq \max \int \min \left\{C_{1}(\omega), C_{2}(\omega)\right\}d\omega,
\end{eqnarray*}
where equality occurs if and only if $C_{1}(\omega)\leq C_{2}(\omega), \;\; \forall \omega$ and vice versa. Theorem \ref{thm:maximin} of Appendix C enables us to consider the power allocation problems in the continuous frequency domain via a finite dimension domain.  We then take the limit as $n\rightarrow\infty$ to get the desired result.  Exchanging the integration and infimum operations further simplifies the $n$-dimensional optimization problem into $n$-parallel optimization problems corresponding to each of the frequency sub-bands, which will be solved subsequently.  We observe that the resultant solution will be of a water-filling form.

\par It can be shown that the optimization problem at each of the sub-bands is a convex optimization problem, which will yield the following solution,
\begin{theorem}
If $a_{SRi}<a_{SDi}$
\begin{eqnarray*}
P_{t}^*(\omega_i) =\left(\nu_t-\frac{1}{a_{SRi}}\right)^+\\
P_{S}^*(\omega_i)= P_{t}^*(\omega_i),P_{R}^*(\omega_i)=0;
\end{eqnarray*}
And if $a_{SRi}\geq a_{SDi}$
\begin{eqnarray*}
P_{t}^*(\omega_i) & = & \left(1+\frac{a_{SRi}-a_{SDi}}{a_{SDi}+a_{RDi}}\right)\left(\nu_t-\frac{1}{a_{SRi}}\right)^+\\
P_{S}^{*}(\omega_i) & = & \left(1+a_{SDi}\frac{a_{SRi}-a_{SDi}}{(a_{SDi}+a_{RDi})^2}\right)\left(\nu_t-\frac{1}{a_{SRi}}\right)^+\\
P_{R}^*(\omega_i) & = & a_{RDi}\frac{a_{SRi}-a_{SDi}}{(a_{SDi}+a_{RDi})^2}\left(\nu_t-\frac{1}{a_{SRi}}\right)^+;\\
& & \mbox{Such that,}\;\;\;\; \frac{1}{n}\sum_{i=1}^n P_{t}^*(\omega_i) = P_t,
\end{eqnarray*}
where, $x^+=\max\{0,x\}$.
\end{theorem}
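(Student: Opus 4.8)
The plan is to first collapse the functional optimization to a finite-dimensional one, then decouple it across sub-bands, and finally solve each scalar sub-problem by a change of variables that restores concavity. By Theorem~\ref{thm:maximin} of Appendix C it suffices to work with the $n$-sub-band surrogate and pass to the limit $n\to\infty$ at the end. After the integration--minimum exchange already performed in the text, the objective is $\frac{1}{n}\sum_i \min\{C_{1}(\omega_i),C_2(\omega_i)\}$ under the single coupling constraint $\frac{1}{n}\sum_i(P_S(\omega_i)+P_R(\omega_i))\le P_t$. Introducing a Lagrange multiplier $\lambda$ for this constraint splits the program into $n$ independent per-sub-band problems of the form $\max[\min\{C_1,C_2\}-\lambda(P_S+P_R)]$, the common multiplier being tuned so that $\frac{1}{n}\sum_i P_t^*(\omega_i)=P_t$, which is precisely the role of $\nu_t$. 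I would first record that each sub-band problem is convex so that the stationarity (KKT) conditions are sufficient.

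The crux is a reparametrization that linearizes the bilinear and square-root couplings. Fixing a sub-band and dropping the index, I would split the source power into a private part $u=\alpha P_S$ and a coherent part $w=\bar\alpha P_S$, so that $C_1=\tfrac12\log(1+a_{SR}u)$ depends only on $u$, while the argument of $C_2$ becomes $1+a_{SD}u+(a_{SD}w+a_{RD}P_R+2\sqrt{a_{SD}a_{RD}\,wP_R})$. For any fixed coherent budget $Q=w+P_R$ the bracketed term equals $(\sqrt{a_{SD}w}+\sqrt{a_{RD}P_R})^2$, which by Cauchy--Schwarz is maximized at the ratio $w:P_R=a_{SD}:a_{RD}$ with value $(a_{SD}+a_{RD})Q$. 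Thus each sub-band collapses to an equivalent scalar channel in $(u,Q)$ with $C_1=\tfrac12\log(1+a_{SR}u)$ and $C_2=\tfrac12\log(1+a_{SD}u+(a_{SD}+a_{RD})Q)$; in the variables $(u,w,P_R)$ both $C_1$ and $C_2$ are concave (the term $2\sqrt{a_{SD}a_{RD}\,wP_R}$ is concave and the logarithm is concave and increasing), which establishes the claimed convexity. This step also fixes the optimal split $P_R:w=a_{RD}:a_{SD}$, which is what produces the $a_{RD}$ and $a_{SD}$ weights appearing in $P_R^*(\omega_i)$ and $P_S^*(\omega_i)$.

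With the scalar reduction in hand I would separate the two regimes by monotonicity. At fixed total power, $C_1$ increases in $u$ while $C_2$ decreases in $u$ (its $u$-coefficient is $-a_{RD}$ after substituting $Q=P_t^{\mathrm{sub}}-u$). If $a_{SR}<a_{SD}$, then $C_1\le C_2$ even at $Q=0$, so the relay is switched off: $P_R=0$, $\alpha=1$, and the sub-band is a point-to-point link of gain $a_{SR}$, giving the plain water-filling $P_S^*=P_t^*=(\nu_t-1/a_{SR})^+$ of the first case. If $a_{SR}\ge a_{SD}$ the two cuts cross in the feasible range and the optimum is balanced, $C_1=C_2$; imposing this equality gives $Q=\frac{a_{SR}-a_{SD}}{a_{SD}+a_{RD}}\,u$, and the stationarity condition for $u$ together with the optimal coherent split reproduces $P_S^*(\omega_i)$, $P_R^*(\omega_i)$ and $P_t^*(\omega_i)=P_S^*+P_R^*$ exactly as stated, with $\nu_t$ fixed by $\frac1n\sum_i P_t^*(\omega_i)=P_t$. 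A final verification that $P_S^*+P_R^*$ equals the stated $P_t^*$ and that the $(\cdot)^+$ truncation is consistent with complementary slackness closes the argument.

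The main obstacle is the inner joint optimization over $(\alpha,P_S,P_R)$: in the original variables the objective is non-convex because of the bilinear product $\alpha P_S$ in $C_1$ and the square-root cross term in $C_2$, so neither a Hadamard-type determinant argument nor a direct KKT calculation applies cleanly. The completing-the-square / Cauchy--Schwarz reduction is therefore the decisive move, since it simultaneously restores concavity and decouples the coherent split from the private-power allocation. A secondary subtlety, and the reason the abstract speaks of \emph{modifying} the cost function, is that the genuine total-power constraint assigns each sub-band the non-uniform cost $(a_{SR}+a_{RD})/(a_{SD}+a_{RD})$ per unit of private power $u$; presenting the solution in the clean single-level form $(\nu_t-1/a_{SR})^+$ amounts to optimizing this modified (unit-cost-in-$u$) surrogate, so I would state explicitly which functional is being optimized in order that $\nu_t$ be well defined. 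Checking continuity of the optimum across the regime boundary $a_{SR}=a_{SD}$ is the remaining routine but necessary step.
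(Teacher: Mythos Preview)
Your proposal is correct and follows essentially the same route as the paper's proof in Appendix~D: the paper also splits the source power into a private part $P_S^1=\alpha P_S$ and a coherent part $P_S^2=\bar\alpha P_S$, optimizes the coherent split $P_S^2:P_R$ for a fixed budget $P_2=P_S^2+P_R$ to obtain the reduced pair $C_1=C(a_{SR}P_S^1)$ and $C_2=C(a_{SD}P_S^1+(a_{SD}+a_{RD})P_2)$, and then separates the regimes $a_{SR}\lessgtr a_{SD}$ via a short lemma (proved by perturbation rather than your monotonicity argument, but equivalent). Your explicit use of Cauchy--Schwarz for the coherent split and your remarks on convexity make the argument slightly more self-contained, but the decomposition and the case analysis are the same.
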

\par The proof uses the concavity of the objective function and standard convex optimization techniques and is summarized in Appendix D. For our modified cost function, the optimal input PSDs derived above all have a water-filling type of structure.  Unfortunately, no such
simple water-filling type of solution  appears to exist for the optimal power allocation problem associated with the upper bound. A more detailed and complex solution to this problem is discussed in \cite{Anders}.

\subsection{Power allocation for CF}
\par In the achievable rate using CF coding, to simplify the inequality constraint, we make the following substitutions,
\begin{eqnarray*}
\hat{N}_{Ri} = \frac{1+P_{Si}(a_{SRi}+a_{SDi})}{1+a_{SDi}P_{Si}}c_i, & &
q_i = \frac{\hat{N}_{Ri}}{1+c_i},
\end{eqnarray*}
where $c_i$ is a non-negative number. It can be easily shown that the CF achievable rate after the substitution is given by
\begin{eqnarray*}
C_{nCF}^c  & = & \sup \sum_{i=1}^n \frac{1}{2} \log \frac{(1+P_{Si}(a_{SRi}+a_{SDi}))^2}{1+P_{Si}(a_{SRi}(1+q_i)+a_{SDi})},
\end{eqnarray*}
where the supremum is taken over the input power constraints and the following inequality,
\begin{eqnarray*}
\sum_{i=1}^n \log q_i \geq \sum_{i=1}^n \log \frac{1+P_{Si}(a_{SRi}+a_{SDi})}{1+P_{Si}a_{SDi}+P_{Ri}a_{RDi}}.
\end{eqnarray*}
\par The new parameter $q_i$ is thus,
\begin{eqnarray}
q_i & = & \frac{\hat{N}_R(\omega_i)}{1+\frac{\hat{N}_R(\omega_i)(1+a_{SDi}P_S(\omega_i))}{1+P_S(\omega_i)(a_{SRi}+a_{SDi})}}\nonumber\\
& = & \frac{\hat{N}_R(\omega_i)(1+P_S(\omega_i)(a_{SRi}+a_{SDi}))}{(1+\hat{N}_R(\omega_i))+P_S(\omega_i)(a_{SRi}+(1+\hat{N}_R(\omega_i))a_{SDi})}.\label{eq:qi}
\end{eqnarray}
Now, as the inputs and additive noises are Gaussian random variables, Eqn (\ref{eq:qi}) can be re-written as,
\begin{eqnarray*}
\lefteqn{\frac{1}{2} \log (2\pi e)Var[q_i] = h(\hat{Y}_{Ri}^c|Y_{Ri}^c)+h(Y_{Di}^c,Y_{Ri}^c|X_{Ri})-h(\hat{Y}_{Ri}^c,Y_{Di}^c|X_{Ri})}\\
& \stackrel{(a)}{=} & h(\hat{Y}_{Ri}^c|Y_{Ri}^c,X_{Ri}, Y_{Di}^c)+h(Y_{Di}^c,Y_{Ri}^c|X_{Ri})-h(\hat{Y}_{Ri}^c,Y_{Di}^c|X_{Ri})\\
& = & h(\hat{Y}_{Ri}^c|Y_{Ri}^c,X_{Ri}, Y_{Di}^c)+h(Y_{Di}^c|X_{Ri})+h(Y_{Ri}^c|X_{Ri},Y_{Di}^c)-h(Y_{Di}^c|X_{Ri})-h(\hat{Y}_{Ri}^c|X_{Ri},Y_{Di}^c)\\
& = & h(Y_{Ri}^c, \hat{Y}_{Ri}^c|X_{Ri},Y_{Di}^c)-h(\hat{Y}_{Ri}^c|X_{Ri},Y_{Di}^c)\\
& = & h(Y_{Ri}|\hat{Y}_{Ri}^c,X_{Ri},Y_{Di}^c),
\end{eqnarray*}
where, (a) follows from the fact that $\hat{Y}_{Ri}^c$ is a quantized version of $Y_{Ri}^c$ and thus given $Y_{Ri}^c$, $\hat{Y}_{Ri}^c$ is independent of every other random variable. Thus, $q_i= Var[Y_{Ri}^c|\hat{Y}_{Ri}^c,X_{Ri},Y_{Di}^c]$ and can be interpreted as the variance of the backward test channel with the side information $(X_{Ri},Y_{Di}^c)$ or $Y_{Di}^{c'}=Y_{Di}^c - \sqrt{a_{RDi}}X_{Ri}$.
\par It is easy to see that the optimization problem described above is a non-convex one. The necessary conditions for the optimal solution using the Karush-Kuhn-Tucker (KKT) conditions are summarized in \cite{Chang} and the solution is given below for completeness,
\begin{eqnarray*}
\lefteqn{\frac{a_{SRi}(1-q_i^*)+(a_{SDi}+a_{SRi})(a_{SDi}+a_{SRi}(1+q_i^*))P_S^*(\omega_i)}{2(1+(a_{SDi}+a_{SRi})P_S^*(\omega_i))(1+(a_{SDi}+a_{SRi}(1+q_i^*))P_S^*(\omega_i))}}\\
& & - \frac{\lambda_1^*(a_{SRi}+(a_{SRi}+a_{SDi})a_{RDi}P_R^*(\omega_i))}{2(1+(a_{SDi}+a_{SRi})P_S^*(\omega_i))(1+a_{SDi}P_S^*(\omega_i)+a_{RDi}P_R^*(\omega_i))}-\lambda_2^* = 0,
\end{eqnarray*}
\begin{eqnarray}\label{eq:power_CF}
P_R^*(\omega_i) & = & \left[\frac{\lambda_1^*}{2\lambda_3^*}-\frac{1}{a_{RDi}}(1+a_{SDi}P_S^*(\omega_i))\right]^+,\nonumber\\
q_i^* & = & \left[\frac{\lambda_1^*}{1-\lambda_1^*}\left(1+\frac{1+a_{SDi}P_S^*(\omega_i)}{a_{SRi}P_S^*(\omega_i)}\right)\right]^+.
\end{eqnarray}
\par It is notable that $P_R^*(\omega_i)$ is given by a water-filling solution and independent of $a_{SRi}$, {\em i.e.}, the relay power allocation does not depend upon the characteristics of its incoming link. We can also see that $q_i^*$ is independent of $a_{RDi}$ and decreases as $a_{SRi}$ increases, which implies that the relay bit-rate allocation does not depend upon the characteristics of its outgoing link.

\par As is obvious from Eqn (\ref{eq:power_CF}), the optimal power allocation does not admit a classical water-filling form as in the case of the modified DF strategy discussed earlier. But by following \cite{Lee}, a much simpler and a convex optimization problem can be obtained by choosing,
\begin{eqnarray*}
q_i & = & \frac{1+(a_{SDi}+a_{SRi})P_S(\omega_i)}{1+a_{SDi}P_S(\omega_i)+a_{RDi}P_R(\omega_i)}.
\end{eqnarray*}
It can be easily shown that under this condition the achievable rate will be,
\begin{eqnarray*}
C_{CFm} & = & \sup_{P_S(\omega_i),P_R(\omega_i)} \sum_{i=1}^n \frac{1}{2}C\left(a_{SDi}P_S(\omega_i)+\frac{a_{SRi}P_S(\omega_i)}{1+\frac{1+(a_{SDi}+a_{SRi})P_S(\omega_i)}{a_{RDi}P_R(\omega_i)}}\right).
\end{eqnarray*}
such that the power constraints are satisfied. However,  $C_{nCF}^c\geq C_{CFm}$.

\section{Illustrative Examples}
\par In this section, we discuss several simple examples to illustrate the computation and relationships of different achievable rates and the upper bound. We will start with simple example of relay channel with equal transmission bandwidth on all the links to illustrate how the theorems are applied and then move on to other, more practical, examples.

\begin{figure}
	\centering
		\includegraphics[width=0.50\textwidth]{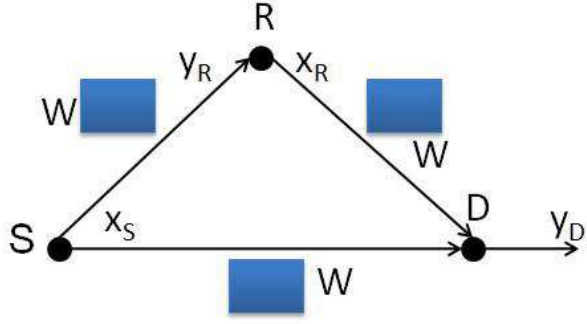}
	\caption{Relay with equal bandwidth}
	\label{fig:same_bandwidth}
\end{figure}

\subsection{Relay with equal bandwidths} 
\par We examine the simplest case when all the channels are the same ideal low-pass filters of bandwidth $W$ and the noise is AWGN (see Fig. 3). Let,
\begin{eqnarray*}
N_R(\omega)= N_1, N_D(\omega)=N_1+N_2=N, \hspace{0.1cm}0\leq\omega\leq\frac{W}{2\pi}.   
\end{eqnarray*}
Due to the common channel assumption and the noise variances, the relay channel is degraded. We make the following idealized assumption: we can achieve both strict bandwidth and time limitation simultaneously, the key is that we require finite memory. It can be shown that the channel capacity is given by,
\begin{eqnarray*}
C =\max_{0\leq \alpha \leq 1}\min W \left\{C \left(\frac{\alpha P_S}{N_1W} \right),C\left(\frac{P_S+P_R+2\sqrt{\bar{\alpha}P_SP_R}}{NW}\right) \right \}.
\end{eqnarray*}
\par Uniform input PSD's achieve capacity, which is expected since all the parallel degraded relay channels are identical. This result is a generalization of the discrete memoryless Gaussian relay channel \cite{Cover79} in the bandwidth limited case. If $\frac{P_S}{N_1}\leq \frac{P_R}{N_2}$, then $\alpha=1$ maximizes the capacity and its given by $W C \left(\frac{P_S}{N_1W} \right)$ and if $\frac{P_S}{N_1}> \frac{P_R}{N_2}$ then the capacity is given by $W C \left(\frac{\alpha^* P_S}{N_1W} \right)$, where $\alpha^*$ is solution of $\frac{\alpha P_S}{N_1W}=\frac{P_S+P_R+2\sqrt{\bar{\alpha}P_SP_R}}{NW}$.

\begin{figure}
	\centering
		\includegraphics[width=0.80\textwidth]{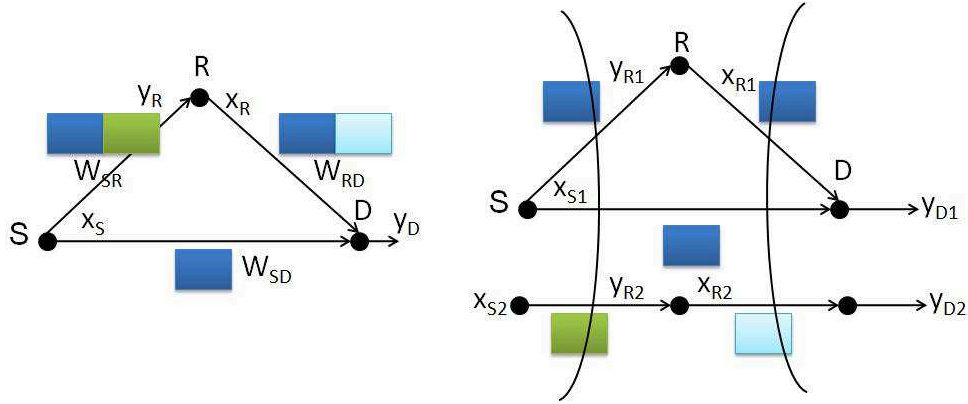}
	\caption{Decomposition of a relay with unequal bandwidths into a relay with equal bandwidth and a two hop channel with remaining bandwidths}
	\label{fig:two_hop}
\end{figure}

\subsection{Relay with unequal bandwidths}

\par Our second example considers a single-relay channel with different bandwidths across different links. This type of channel is common in underwater communication where channel bandwidth depends on internode separation (see \cite{Wenyi08, Wenyi10}). Let, $W_{SD},W_{SR}$ and $W_{RD}(W_{SD}<W_{SR},W_{SD}<W_{RD})$ be the bandwidths of the three links (see Fig. 4). Furthermore, we assume that all channels are ideal lowpass filters. Noise is AWGN and has the same PSD as in the earlier example. It can be readily seen that the relay channel is degraded and it can be shown that the network reduces to a degraded relay of bandwidth $W_{SD}$ and a 2-hop channel with link bandwidths $W_{SR}-W_{SD}$ and $W_{RD}-W_{SD}$. Hence the capacity of this channel is given by,
\begin{eqnarray*}
C & = & \min W_{SD} \left \{C \left(\frac{\alpha P_{S1}}{N_1W_{SD}} \right),C\left(\frac{P_{S1}+P_{R1}+2\sqrt{\bar{\alpha}P_{S1}P_{R1}}}{NW_{SD}}\right)\right \}\\
& & +\min \left\{(W_{SR}-W_{SD})C \left(\frac{P_{S2}}{N_1(W_{SR}-W_{SD})} \right),
 (W_{RD}-W_{SD})C \left(\frac{P_{R2}}{N(W_{RD}-W_{SD})}\right)\right \},
\end{eqnarray*}
where, $P_{S1}+P_{S2}\leq P_S, P_{R1}+P_{R2}\leq P_R$. The decomposition is very similar to the decomposition observed in \cite{ashu_it} for the rate-constrained relay channel, where because of the constraint on the relay encoding rate, the source splits its rate between direct transmission and cooperative transmission using relay. In our example, due to the excess bandwidth available on the $2$ hop link, the source splits the rate between the two parallel sub-channels.

\subsection{Suboptimality of random permutation at relay}

\par In this simple example, we show that when sending indepepndent messages (the covariance matrices of $X_S^n$ and $X_R^n$ are diagonal) across the sub-carriers, the best permutation at the relay for the channel decomposition is to match the $i$-th sub-carrier of the source-relay channel to the $i$-th subcarrier of the relay-destination channel, i.e., any other permutation at the relay would be as good as no permutation at all. Consider a parallel relay channel with $2$ sub-carriers, i.e.,
\begin{eqnarray}
\textbf{H}_q & = & \left[\begin{array}{cc} H_{q1} & 0\\0 & H_{q2}\end{array}\right],
\end{eqnarray}
where, $q\in\{SR, SD, RD\}$. Let the relay employ DF processing, .i.e., the relay decodes the received messages in each of the sub-carriers of the source-relay channel and then according to a deternministic permutation function $\pi:\{1,2,\cdots,n\}\mapsto \{1,2,\cdots,n\}$ ($n=2$ in this case), the relay sends the decoded message of the $i$-th sub-carrier of source-relay channel in the $j$-th sub-carrier of the relay-destination channel, .i.e., $\pi(i)=j$.

\begin{figure}
	\centering
		\includegraphics[width=0.60\textwidth]{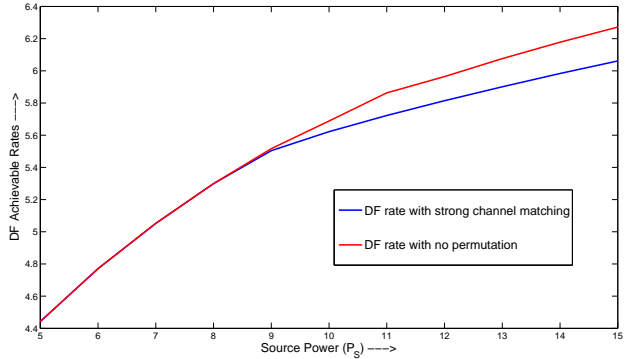}
	\caption{DF achievable rates with a) no permutation at the relay and b) strong channel matching at the relay as source power $P_S$ varies from $5$ to $15$. Relay and destination noises are Gaussian and of unit variance. $|H_{SR}|^2=[1.5 1.8], |H_{SD}|^2=[0.7 0.5],|H_{RD}|^2=[0.8 1]$ and $P_R=10$.}
	\label{fig:two_hop}
\end{figure}

\par In our example, two different permutation functions are possible, a) $\pi(i)=3-i$, and b) $\pi(i)=i$, for $i=1,2$. We will compare the DF achievable rate of both the cases for an example channel and will see that it is enough to consider the permuting function $\pi(i)=i$ for the relay (see Fig. 5).

\par We observe in Fig. 5 that the two achievable rates match for low values of source power $P_S$. This is due to the fact that if the relay power $P_R$, exceeds that of the source, the relay to destination link can fully support the rate of the source to relay channel. Owing to the weak source to destination link - $|H_{SD}|^2$ is smaller compare to other two channel transfer functions - the overall relay channel effectively behaves like a two-hop channel without the source to destination link. This verifies the fact that the channel matching with a modified amplify-and-forward works well in multihop channels (see \cite{wenyi}).    

\par We explain the loss in performance due to relay permutation as follows.  If we permute the relay inputs, the only affected term in the DF achievable rate expression in Theorem \ref{thm:df} is $C_{2nDF}^c$, which increases when the source and relay cooperatively send the bin index of the message to the destination. By sending the $i$-th sub-carrier relay input through  the $j$-th sub-channel in the relay-destination link ($i\neq j$), one cannot exploit the potential cooperative gain in a single sub-channel.

\subsection{Comparison of achievable rates}

\par The example in this subsection is motivated by underwater acoustic communication. UW channels are characterized by unique physical and statistical properties. The physical property that we are interested in is the attenuation which depends on the propagation distance and carrier frequency of the transmitted signal. For the statistical part, the channel is assumed to be WSSUS (Wide Sense Stationary with Uncorrelated Scattering). We model a UW channel by taking into account both properties to form a frequency-dependent fading multipath channel. 
\par \textit{Frequency-dependent Path Loss}: For signals propagated through UW medium, its energy is attenuated as a
function of both the distance $d$ and signal frequency $f$. This attenuation or path loss is a combination of the geometric
spreading and absorption and can be written as in \cite{milica2006} 
\begin{eqnarray*}
A(d,f) & = & A_0d^ka(f)^d.
\end{eqnarray*}
The spreading factor $k$ is set to 1.5. The absorption coefficient $a(f)$ can be expressed emperically in dB/Km for $f$ in KHz, using the Thorp's formula as \cite{Berkhovskikh}
\begin{eqnarray*}
10\log a(f) & = & 0.11 \frac{f^2}{1+f^2}+44\frac{f^2}{4100+f^2}+2.75\times10^{-4}f^2+0.003. 
\end{eqnarray*}

\par \textit{Statistical Channel Model:} We consider that a transmission bandwidth of $W>>(\Delta f)_c$ (coherence bandwidth) is available to the user, thus the channel is frequency selective. The channel is modeled as a tapped delay line, where the number of taps are equal to the number of multipaths present, and it is given by $L=\left\lceil \frac{(\Delta T)_c}{T}\right\rceil$, where $(\Delta T)_c =\frac{1}{(\Delta f)_c}$ is the delay spread of the channel calculated from the ray tracing model and $T=\frac{1}{W}$ is the rate of sampling of the input symbols. Let $h(t,\tau)$ denote a continuous-time channel impulse response (CIR) of linear time-variant (LTV) UW channel \cite{Proakis},
\begin{eqnarray*}
h(t,\tau) & = & \sum_{i=0}^{L-1}h_i(t)\delta(\tau-\frac{i}{W}).
\end{eqnarray*}
WSSUS is commonly assumed to characterize the UW channel \cite{Proakis, Kilfoyle} \textit{i.e.}, $E[h(t,\tau)h(t',\tau')]= R_h(t-t',\tau)\delta(\tau-\tau')$, where $R_h(t-t',\tau)$ is the autocorrelation function of the delay $\tau$ between time $t$ and $t'$. This implies that the $\{h_i(t)\}$ are mutually uncorrelated. We consider $\{h_i(t)\}$ to be Gaussian random processes, hence uncorrelated scattering assumption make them statistically independent. 

\par The ambient noise in the ocean can be modeled using four sources: turbulence, shipping, waves, and thermal noise. The following empirical formulae give the PSD of the four noise components in dB re $\mu$ Pa per Hz as a function of frequency
in kHz \cite{Coates}:
\begin{eqnarray}\label{eq:noise_var}
10\log N_t(f) & = & 17-30\log f\nonumber\\
10\log N_s(f) & = & 40+20(s-0.5)+26\log f-60\log(f+0.03)\nonumber\\
10\log N_w(f) & = & 50 +7.5w^{\frac{1}{2}}+20\log f-40\log(f+0.4)\nonumber\\
10\log N_{th}(f) & = & -15+20\log f.
\end{eqnarray}
\par Turbulence noise influences only the very low frequency region, $f < 10$ Hz. Noise caused by distant shipping is dominant in the frequency region 10 Hz -100 Hz, and it is modeled through the shipping activity factor $s$, whose value ranges between 0 and 1 for low and high activity, respectively. Surface motion, caused by wind-driven waves ($w$ is the wind speed in
m/s) is the major factor contributing to the noise in the frequency region 100 Hz - 100 kHz (which is the operating region used by the majority of acoustic systems). Finally, thermal noise becomes dominant for $f >100$ kHz.

\begin{figure}
  \centering
    \includegraphics[width=0.50\textwidth]{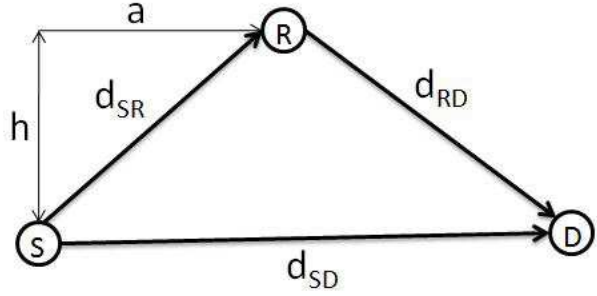}
  \caption{Channel model of a single-relay channel with ISI}\label{fig:relay}
\end{figure}

\par In this example, we compare DF achievable rate studied with respect to the direct transmission from source to destination and $2$ hop relaying with the total input power constraint remaining same in all the cases. For simulation, we have taken $(\Delta f)_c=3.33\; \mbox{KHz}$ for the SR and RD link and $(\Delta f)_c=5 \; \mbox{KHz}$ for the SD link. A carrier frequency of $f_c$ of $27\; \mbox{KHz}$ and an available transmission bandwidth of $10\; \mbox{KHz}$ are considered for all the 3 links. The physical attenuation of the channel is calculated using Thorp's formula. Rayleigh fading model is investigated where each channel tap is a complex Gaussian random process, whose variances sum up to $1$. Noise is colored Gaussian as defined in (\ref{eq:noise_var}), with $s=0$ and $w=10$m/s. A total input power constraint of $P_t=20\; \mbox{dB}$ is considered for all the schemes. We also assume that the the fading realizations are independent of each other and the channel state information is available at all the nodes. For the performance analysis, we average the achieved rate for a particular realization of the channel over all realizations of the fading states in order to capture the effects of fading. We use the channel model of Fig. 6.

\begin{figure}
  \centering
    \includegraphics[width=0.60\textwidth]{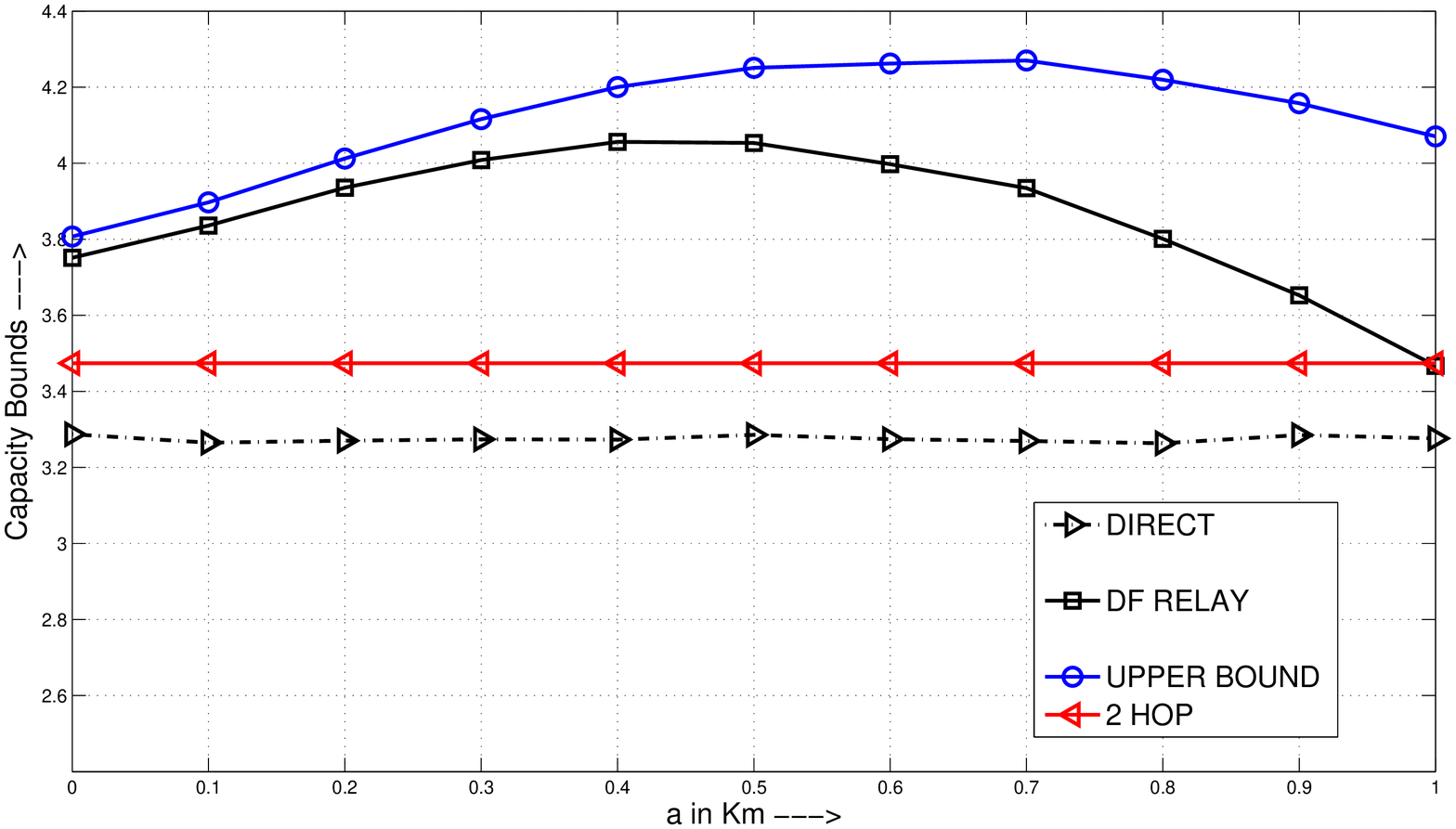}
  \caption{Achievable rates and upper bound when $h=0.25\;\mbox{Km},d_{SD}=1\;\mbox{Km}$ and $a$ varies from $0$ to $1.0\;\mbox{Km}$}\label{fig:a_vary}
\end{figure}

\par Fig. 7 plots the capacity bounds in bits/sec/Hz as we vary the relay position $a$. Comparing $2$ hop, direct transmission with the cooperative DF scheme, it can be concluded that relay cooperation performs best in terms of achievable rate in underwater acoustic communication as expected but given a fixed square area/volume and a set of $n>3$ nodes, the placement of the nodes that gives maximum rate between a given pair of source and destination, is still an open question and is worth investigating.

\subsection{Symbol asynchronous relay channel}
\par A relay channel is said to have asynchronism among the nodes if the codewords transmitted from the aource and relay do not coincide in time at the reciever. Frame synchronism refers to the ability of the nodes to receive or initiate the transmission of their codewords in unison. In many practical situations, it is perfectly reasonable to assume that this type of synchronism is achievable with the help of channel feedback or cooperation among transmitters. In contrast, \textit{symbol synchronism} is far more challenging to achieve due to the smaller time scales. Our interest in asynchronous relay channels is motivated by underwater acoustic communication networks, the slow speed of propagation of sound in water implies that significant asynchronism can occur in multi-terminal underwater networks.

\par Asynchronism in multiuser networks has been examined in multiple contexts over the years from an information theoretic perspective; (see, {\em e.g.}, \cite{Hui, verdu} and references therein). In this paper, we find the bounds on the capacity of a symbol-asynchronous Gaussian single-relay channel in which the node $i$ linearly modulates its symbols employing a fixed waveform $\xi_i(t)$-- this could be a signature code or a pulse shape. We exploit the results of \cite{verdu} , which examined the symbol asynchronous Gaussian MAC, to show that our symbol asynchronous relay network is equivalent to a relay with finite memory. With this equivalence in hand, we can exploit our prior results for relay channels with inter-symbol interference in Section IV to determine closed form expressions for upper and lower bounds on the capacity of the symbol asynchronous relay channel.

\subsubsection{Channel Model}
\par For a single-relay channel, assuming frame synchronism and additive white Gaussian noise model, we can write the continuous time received signals as,
\begin{eqnarray}\label{eq:asynch_cont}
y_{R}(t) & = & \sum_{i=1}^n x_{S}(i)\xi_S(t-iT-\tau_{SR})+n_{R}(t)\\
y_{D}(t) & = & \sum_{i=1}^n x_{S}(i)\xi_S(t-iT-\tau_{SD})+x_{R}(i)\xi_R(t-iT-\tau_{RD})+n_{D}(t),
\end{eqnarray}
where $\{x_{S}(i)\}_{i=1}^n, \{x_{R}(i)\}_{i=1}^n$ are the input sequences of the source and relay, respectively. $\xi_S(t), \xi_R(t)$ are the unit energy modulating waveforms of support $[0, T]$ used by the source and relay. The delays or offsets $\tau_{SR}, \tau_{SD}, \tau_{RD}$ account for the symbol asynchronism between the users and are known to the receiver. $n_R(t), n_D(t)$ are additive white Gaussian noise at the relay and destination with power spectral density equal to $\sigma_R^2$ and $\sigma_D^2$, respectively. We assume the same input power constraints as in ($\ref{eq:power_cons}$).

\par We can obtain an equivalent channel model with discrete-time outputs whose capacity is same as that of the continuous time one described above by considering the projection of the observation process $\{y_R(t), y_D(t)\}$ along the direction of the unit energy signals $\{\xi_S(t)\}$ and $\{\xi_R(t)\}$ and their T-shifts:     
\begin{eqnarray}\label{eq:proj}
y_R(i)& = & \int_{iT+\tau_{SR}}^{(i+1)T+\tau_{SR}}y_R(t)\xi_S(t-iT-\tau_{SR})dt \nonumber\\
y_{Dj}(i) & = &\int_{iT+\tau_{jD}}^{(i+1)T+\tau_{RD}}y_D(t)\xi_j(t-iT-\tau_{jD})dt, \nonumber\\
\end{eqnarray}
where, $j \in \{R,S\}$. By substituting ($\ref{eq:asynch_cont}$) into ($\ref{eq:proj}$) and by defining the cross-correlations between the assigned signature waveforms $\xi_S(t)$ and $\xi_R(t)$ as (assuming without loss of generality that $\tau_{RD}\leq \tau_{SD}$)
\begin{eqnarray*}
\rho_{RS} =\int_0^T \xi_R(t)\xi_S(t+\tau_{RD}-\tau_{SD})dt,\\
\rho_{SR} =\int_0^T \xi_R(t)\xi_S(t+T+\tau_{RD}-\tau_{SD})dt,
\end{eqnarray*}
it follows easily that the discrete-time channel output is given by,
\begin{eqnarray}\label{eq:asynch_discrete}
y_R(i)  & = & x_S(i)+n_R(i)\\
\left[\begin{array}{c} y_{DR}(i)\\y_{DS}(i)\end{array}\right] & = & \sum_{|j|\leq 1}\textbf{H}(j)\left[\begin{array}{c} x_R(i+j)\\x_S(i+j)\end{array}\right] + \left[\begin{array}{c} n_{DR}(i)\\n_{DS}(i)\end{array}\right],
\end{eqnarray}
where,$\textbf{H}(0), \textbf{H}(-1)=\textbf{H}^T(1)$ are given by,
\begin{eqnarray*}
\textbf{H}(0)= \left[\begin{array}{cc} 1 & \rho_{RS}\\\rho_{RS} & 1\end{array}\right], \textbf{H}(1)=\left[\begin{array}{cc} 0 & \rho_{SR}\\0 & 0\end{array}\right]
\end{eqnarray*}
and $1\leq i \leq n$ ($x_S(0)=x_S(n+1)=0, x_R(0)=x_R(n+1)=0$); the discrete-time noise process $\{[\begin{array}{cc} n_{DR}(i) & n_{DS}(i)\end{array}]^T\}$ is Gaussian with zero mean and covariance matrix:
\begin{eqnarray*}
E\left[\left[\begin{array}{c} n_{DR}(i)\\n_{DS}(i)\end{array}\right]\left[\begin{array}{cc} n_{DR}(j) & n_{DS}(j)\end{array}\right]\right] = \sigma_D^2 \textbf{H}(i-j).
\end{eqnarray*}

\begin{figure}
	\centering
		\includegraphics[width=0.40\textwidth]{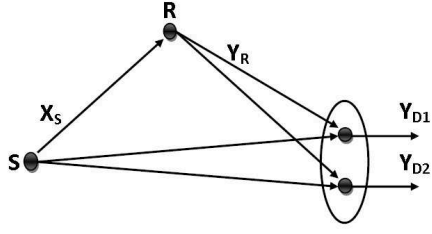}
	\caption{Discrete time equivalent of a symbol-asynchronous relay channel}
	\label{fig:relay_discrete}
\end{figure}

\par Since the receivers at the relay and destination know the assigned waveforms $\xi_S(t)$ and $\xi_R(t)$ as well as the symbol epochs $\{iT + \tau_q\}, \;\;q\in\{SR,SD,RD\}$, these receivers can compute $\{y_R(i)\}_{i=1}^n, \{y_{DR}(i)\}_{i=1}^n$ and $\{y_{DS}(i)\}_{i=1}^n$ by passing the observations through two matched filters for signals $\xi_S(t)$ and $\xi_R(t)$, respectively. The key observation is that this operation yields sufficient statistics for the transmitted messages \cite{Lehmann} and the equivalent channel model is a MIMO relay channel with memory $2$ (see Fig. 8). It is clear from the equivalent discrete time channel model that the capacity is independent of the delay of the signal coming from source to relay, as the channel impulse responses of the three links in the relay channel are functions of the relative offsets between the users in the MAC (multiple-access channel) portion of the relay. If either $\rho_{RS} = 1$ or $\rho_{SR} = 0$, then the channel becomes memoryless, as signals are symbol synchronous. For example, if the users are assigned the same signal and the channel is symbol synchronous, both outputs  coincide and are equal to
\begin{eqnarray*}
y_R(i) & = & x_S(i)+n_R(i)\\
y_D(i) & = & x_S(i)+x_R(i)+n_D(i).
\end{eqnarray*}
The channel is then a conventional discrete-time Gaussian relay channel, whose capacity is discussed in \cite{Cover79}. If the assigned signals are not equal, but the users remain symbol synchronous, then the outputs reduce to the memoryless MIMO relay channel,
\begin{eqnarray*}
y_R(i) & = & x_S(i)+n_R(i)\\
\left[\begin{array}{c} y_{DR}(i)\\y_{DS}(i)\end{array}\right] & = & \left[\begin{array}{cc} 1 & \rho\\\rho & 1\end{array}\right]\left[\begin{array}{c} x_R(i)\\x_S(i)\end{array}\right] + \left[\begin{array}{c} n_{DR}(i)\\n_{DS}(i)\end{array}\right],
\end{eqnarray*}
where $\{[n_{DR}(i) n_{DS}(i)]^T\}$ is an independent Gaussian process with covariance matrix given by,
\begin{eqnarray*}
\left[\begin{array}{cc} \sigma_D^2 & \sigma_D^2\rho\\\sigma_D^2\rho & \sigma_D^2\end{array}\right],
\end{eqnarray*}
and $\rho=\int_0^T \xi_S(t)\xi_R(t)dt$. The capacity of memoryless MIMO relay channel is studied in \cite{Anders05}.

\subsubsection{Achievable rates and upper bound}
\par We examine two relay coding strategies: a) Decode-and-Forward (DF) and b) Compress-and-forward (CF) and the "cut-set" upper bound. As CF and DF have differing regimes in which they offer the best rate \cite{Kramer:2005} for memoryless channels, it is of interest to investigate both coding strategies for the symbol-synchronous relay channel as we do herein.
\begin{theorem}\label{thm:DF_asynch}
The DF achievable rate of a symbol asynchronous relay channel with input power constraints ($\ref{eq:power_cons}$) is given by,
\begin{eqnarray*}
C_{DF}(P_S,P_R) & = & \sup \frac{1}{4\pi} \min \left\{\int_{0}^{2\pi}C\left( \frac{1}{\sigma_R^2}\alpha(\omega)P_S(\omega)\right)d\omega,\right.\\
& & \left. \int_{0}^{2\pi} C\left( \frac{1}{\sigma_D^2}(P_S(\omega)+P_R(\omega) +2\sqrt{(1-\alpha(\omega))P_S(\omega)P_R(\omega)}\rho(\omega)\right.\right.\\
& & \left.\left. +\frac{1}{\sigma_D^4}\alpha(\omega)P_S(\omega)P_R(\omega)(1-\rho^2(\omega))\right)d\omega\right\},
\end{eqnarray*}
where, $C(x)=log(1+x), \rho(\omega)=\rho_{RS}+\rho_{SR}cos(\omega)$ and $P_S(\omega)$ and $P_R(\omega)$ are the power allocated by the source and relay in the band $\omega$ and $\alpha(\omega)$ is the correlation between the source and relay codewords in the band $\omega$.
\end{theorem}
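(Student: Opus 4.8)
The plan is to reduce the symbol-asynchronous relay channel to the finite-memory vector relay channel already analyzed, and then invoke the DFT machinery of Theorems~\ref{thm:df} and~\ref{thm:DFC}. The equivalence established above (via \cite{verdu}) shows that, after matched filtering, the channel is an $n$-block relay channel of memory $2$ whose source-to-relay link is the scalar channel $y_R(i)=x_S(i)+n_R(i)$ and whose source-and-relay-to-destination link is the MIMO channel in (\ref{eq:asynch_discrete}), with a block-circulant channel built from $\mathbf{H}(0),\mathbf{H}(\pm1)$ and a destination noise process whose covariance is $\sigma_D^2\mathbf{H}(i-j)$. The feature that is new relative to Theorem~\ref{thm:DFC} is that the destination output is two-dimensional and its noise is both colored and cross-correlated; I therefore need the MIMO analogue of the scalar development.

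First I would block-diagonalize. Since $\mathbf{H}(\cdot)$ is a convolution that the circular construction of the $n$-CGRC renders block-circulant, the unitary transform $\mathbf{F}\otimes\mathbf{I}_2$ simultaneously block-diagonalizes both the channel and the noise covariance, producing $n$ parallel $2\times2$ MIMO relay channels indexed by $\omega_i$. At bin $\omega_i$ the channel matrix is $\mathbf{H}(\omega_i)=\mathbf{H}(0)+\mathbf{H}(1)e^{j\omega_i}+\mathbf{H}(-1)e^{-j\omega_i}$, which is Hermitian with off-diagonal entry $\rho_{RS}+\rho_{SR}e^{j\omega_i}$, and the destination noise covariance is $\sigma_D^2\mathbf{H}(\omega_i)$. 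Applying the Cover--El Gamal DF bound per bin (exactly as in the proof of Theorem~\ref{thm:df}) splits the rate into the two mutual informations $I(x_S;y_R|x_R)$ and $I(x_S,x_R;y_D)$.

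The first term is immediate: the scalar unit-gain link with noise $\sigma_R^2$ and new-information power $\alpha(\omega_i)P_S(\omega_i)$ yields $\tfrac12 C(\alpha(\omega_i)P_S(\omega_i)/\sigma_R^2)$, matching the first integrand. For the second term I would write the DF input covariance, in the coordinate ordering $(x_R,x_S)$, as $\mathbf{\Psi}(\omega_i)=\left[\begin{smallmatrix}P_R & g\\ g & P_S\end{smallmatrix}\right]$ with $g=\sqrt{\bar\alpha P_S P_R}$ (the cooperative correlation of \cite{Cover79}), and evaluate $C_2(\omega_i)=\tfrac12\log\frac{|\mathbf{H}\mathbf{\Psi}\mathbf{H}^\dagger+\sigma_D^2\mathbf{H}|}{|\sigma_D^2\mathbf{H}|}$. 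Using $\mathbf{H}^\dagger=\mathbf{H}$ I can factor $\mathbf{H}$ out of the numerator, cancel it against $|\sigma_D^2\mathbf{H}|$, and reduce the ratio to $\sigma_D^{-4}\,|\sigma_D^2\mathbf{I}+\mathbf{\Psi}\mathbf{H}|$; expanding this $2\times2$ determinant and collecting terms produces the closed-form argument of $C(\cdot)$ stated in the theorem, with $\rho(\omega)=\rho_{RS}+\rho_{SR}\cos\omega$ emerging as the real part of the off-diagonal entry and the $\sigma_D^{-4}$ term appearing as the genuinely MIMO (two-observation) contribution absent from the scalar Theorem~\ref{thm:DFC}.

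Finally I would justify that the decomposition is lossless and pass to the limit. As in Lemma~\ref{lemma:diag} and the Hadamard argument of Theorem~\ref{thm:df}, I would establish the MIMO analogue that a block-diagonal, white-across-bins input is optimal, so that the supremum reduces to a per-bin optimization over $(\alpha(\omega_i),P_S(\omega_i),P_R(\omega_i))$ subject to the coupled power constraints (\ref{eq:power_cons}); the $n\to\infty$ Riemann-sum limit used in the proof of Theorem~\ref{thm:DFC} then converts the sums into the stated integrals. I expect the main obstacle to be precisely this MIMO optimality-of-decomposition step --- extending Lemma~\ref{lemma:diag} to a two-dimensional, cross-correlated-noise output so that Hadamard's inequality still forces a block-diagonal maximizer --- together with the careful bookkeeping of the off-diagonal $\rho(\omega)$ terms in the $2\times2$ determinant expansion.
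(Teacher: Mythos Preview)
Your proposal is correct and follows essentially the same route as the paper: the paper packages your $\mathbf{F}\otimes\mathbf{I}_2$ block-diagonalization into two explicit rearrangement lemmas (a Verd\'u-style permutation that turns $\det(\mathbf{I}_{2n}+\sigma_D^{-2}E[x^n x^{nT}]\mathbf{G})$ into a block form with circulant $\mathbf{S}$, and a Hadamard-type inequality giving the per-bin $2\times2$ product), then computes the same $2\times2$ determinant you describe and invokes Lemma~\ref{lemma:diag} for the $I(x_S^n;y_R^{nc}\mid x_R^n)$ term before passing to the Riemann limit. The ``MIMO optimality-of-decomposition'' step you flag as the main obstacle is exactly what the paper isolates as its second lemma, proved via a permutation matrix $\mathbf{P}$ just as in the CF argument of Lemma~\ref{lem:opt_CF}.
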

\par The theorem is proved in Appendix E. 
\begin{theorem}\label{thm:CF_asynch}
The CF achievable rate of a symbol asynchronous relay channel with input power constraints ($\ref{eq:power_cons}$) is given by,
\begin{eqnarray*}
C_{CF}(P_S,P_R) & = & \sup \frac{1}{4\pi} \int_{0}^{2\pi} \log \frac{A(\omega)}{1+\hat{N}_R(\omega)}d\omega\\
\int_{0}^{2\pi}\log \frac{A(\omega)}{B(\omega)}d\omega & \leq & 0,
\end{eqnarray*}
where, $A(\omega)=1+\hat{N}_{R}(\omega)+2P_{S}(\omega)+\hat{N}_{R}(\omega)P_{S}(\omega)$, $B(\omega)= \hat{N}_{R}(\omega)((1+P_{S}(\omega))(1+P_{R}(\omega))-P_{S}(\omega)P_{R}(\omega)\rho^2(\omega))$ and $\hat{N}_{R}(\omega)$ is the compression noise at the relay, which limits the amount of compression that can be performed at the relay.
\end{theorem}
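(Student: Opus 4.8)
The plan is to exploit the equivalence, established in the channel-model subsection, between the symbol-asynchronous relay channel and a relay with finite memory: the source-to-relay link is the scalar memoryless channel $y_R(i)=x_S(i)+n_R(i)$, while the source/relay-to-destination link is the memory-two MIMO channel of Eqn.~(\ref{eq:asynch_discrete}), whose taps $\textbf{H}(-1),\textbf{H}(0),\textbf{H}(1)$ and colored-noise covariance $\sigma_D^2\textbf{H}(i-j)$ are built from the signature cross-correlations $\rho_{RS},\rho_{SR}$. Because the associated circular channel is block-memoryless, I would reuse the compress-and-forward machinery of Theorems~\ref{thm:cf} and~\ref{CFC} essentially verbatim: start from the single-letter CF rate $\frac1n I(x_S^n;y_D^{nc},\hat y_R^{nc}\mid x_R^n)$ with the compression constraint $I(x_R^n;y_D^{nc})\ge I(y_R^{nc};\hat y_R^{nc}\mid x_R^n,y_D^{nc})$ from Eqn.~(\ref{eq:CF_rate}), write the quantization as $\hat y_R^{nc}=y_R^{nc}+\hat v_R^{nc}$ with per-bin variance $\hat N_R(\omega)$, and evaluate all terms under the optimizing Gaussian inputs.

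The central step is the frequency-domain diagonalization. Since the destination taps and the noise autocorrelation are generated by the same shift operators, the circular versions of both are block-circulant and are simultaneously block-diagonalized by the DFT; at frequency $\omega$ the source-to-destination response collapses to the $2$-vector $[\rho_{RS}+\rho_{SR}e^{j\omega},\,1]^{T}$ and the destination noise to a $2\times 2$ Hermitian PSD with unit diagonal and off-diagonal entries $\rho_{RS}+\rho_{SR}e^{\mp j\omega}$ (up to the factor $\sigma_D^2$). I would then group the two destination coordinates and the scalar $\hat y_R$ into a single per-frequency observation of $X_S(\omega)$ and evaluate the two relevant log-determinants exactly as in Eqns.~(\ref{eq:cf1})--(\ref{eq:cf4}); the forward mutual information should yield the objective $\frac{1}{4\pi}\int_{0}^{2\pi}\log\frac{A(\omega)}{1+\hat N_R(\omega)}\,d\omega$ and the compression inequality the constraint $\int_0^{2\pi}\log\frac{A(\omega)}{B(\omega)}\,d\omega\le 0$, once the determinants are reduced and the real combination $\rho(\omega)=\rho_{RS}+\rho_{SR}\cos\omega$ is identified.

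Optimality of the frequency decomposition and the passage to the limit are handled as in the ISI case: white-across-frequency inputs (diagonal $\Psi_S,\Psi_R$) are optimal by the analogue of Lemma~\ref{lem:opt_CF}, so the circular channel decouples into $n$ parallel scalar relays coupled only through the power and compression constraints, and I would then let $n\to\infty$ via the Riemann-integral argument of the proof of Theorem~\ref{CFC}, the integrand being continuous and bounded on $[0,2\pi]$. This converts the finite sums and per-bin constraints into the stated integral form.

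The main obstacle, I expect, is the determinant algebra for the MIMO colored-noise destination, and specifically showing that the complex factor $\rho_{RS}+\rho_{SR}e^{j\omega}$ appearing in both the channel vector and the noise PSD collapses—after the conditioning on $x_R$ and the $2\times 2$ inversion—into the real coefficient $\rho(\omega)$ and the clean forms $A(\omega),B(\omega)$. The delicate point is that the forward rate turns out to be $\rho$-free while all of the $\rho(\omega)$-dependence is carried by $B(\omega)$ in the constraint; this cancellation is a consequence of the destination taps and the additive noise sharing the same correlation structure, and tracking it carefully is where the work lies. A secondary subtlety is re-establishing the decomposition lemma for a two-dimensional destination output, since Lemma~\ref{lem:opt_CF} was stated for a scalar-output relay; I would dispatch this with the same orthonormal permutation and block-diagonalization argument used in the proof of Theorem~\ref{thm:cf}.
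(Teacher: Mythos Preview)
Your proposal is correct and follows essentially the same route the paper has in mind: the paper explicitly states that the proof of Theorem~\ref{thm:CF_asynch} is ``very similar to the Theorem~\ref{thm:DF_asynch} and thus omitted for brevity,'' i.e., it combines the circular MIMO-relay model of Eqns.~(\ref{eq:vector_relay})--(\ref{eq:vector_dest}) with the DFT block-diagonalization (Lemmas~\ref{lem:rearrange} and~\ref{lem:diagonal}) and the CF machinery of Theorem~\ref{thm:cf}/Theorem~\ref{CFC}, exactly as you outline. The determinant algebra you flag as the main obstacle and the MIMO-destination extension of the decomposition lemma are indeed the only places where any work is required.
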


\begin{theorem}\label{thm:Upper_asynch}
The upper bound on the capacity of a symbol asynchronous relay channel with input power constraints ($\ref{eq:power_cons}$) is given by,
\begin{eqnarray*}
C_{upper}(P_S,P_R) & = & \sup \frac{1}{4\pi} \min \left\{\int_{0}^{2\pi}C\left( \left(\frac{1}{\sigma_R^2}+\frac{1}{\sigma_D^2}\right)\alpha(\omega)P_S(\omega)\right)d\omega,\right.\\
& & \left. \int_{0}^{2\pi} C\left( \frac{1}{\sigma_D^2}(P_S(\omega)+P_R(\omega) +2\sqrt{(1-\alpha(\omega))P_S(\omega)P_R(\omega)}\rho(\omega)\right.\right.\\
& & \left.\left. +\frac{1}{\sigma_D^4}\alpha(\omega)P_S(\omega)P_R(\omega)(1-\rho^2(\omega))\right)d\omega\right\},
\end{eqnarray*}
\end{theorem}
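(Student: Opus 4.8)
The plan is to treat the symbol asynchronous channel through its established equivalence to a finite-memory MIMO relay channel (the discrete-time model with memory $2$ derived above) and then invoke the cut-set bound in the form of Theorem \ref{thm:CutsetC}. The cut-set bound is the minimum of two terms, the ``broadcast'' cut $I(x_S^n;y_R^{nc},y_D^{nc}\mid x_R^n)$ and the ``multiple-access'' cut $I(x_S^n,x_R^n;y_D^{nc})$. The second of these is precisely the term $C_{2DF}$ that already appears in the DF rate of Theorem \ref{thm:DF_asynch}; since that quantity was obtained in Appendix E by diagonalizing the block-circulant channel and noise matrices and evaluating the per-frequency MAC determinant, I would reuse it verbatim, which accounts for the second integral in the statement. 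All of the genuinely new work therefore concerns the broadcast cut.

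For the broadcast cut I would extend the channel circularly over each $n$-block, diagonalize both the block-circulant channel response $\mathbf{H}(j)$ and the noise covariance $\sigma_D^2\mathbf{H}(i-j)$ by the DFT, and thereby decompose the network into parallel, per-frequency $2\times2$ MIMO relay channels. Gaussian inputs that are white across frequency are optimal here by the same diagonalization argument used for Lemma \ref{lemma:diag} and Theorem \ref{thm:df}. Conditioning on $x_R$ removes the relay's contribution at the destination, so in sub-band $\omega$ the source symbol $X_S(\omega)$ is observed through two statistically independent Gaussian ``looks'': the scalar source-to-relay link $y_R=x_S+n_R$ with noise variance $\sigma_R^2$, and the vector source-to-destination link with channel vector $\mathbf{h}_S(\omega)$ (the source column of the per-frequency matrix $\mathbf{H}(\omega)$) and noise covariance $\sigma_D^2\mathbf{H}(\omega)$. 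Because the two observations are independent, their matched-filter SNRs add, so the broadcast cut in sub-band $\omega$ equals $C\!\left(\alpha(\omega)P_S(\omega)\big(\tfrac{1}{\sigma_R^2}+\mathbf{h}_S(\omega)^{\dagger}(\sigma_D^2\mathbf{H}(\omega))^{-1}\mathbf{h}_S(\omega)\big)\right)$, where $\alpha(\omega)P_S(\omega)$ is the power carried by the part of $X_S$ that is independent of $X_R$.

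The crux of the argument, and the step I expect to be the main obstacle, is evaluating the whitened destination SNR $\mathbf{h}_S(\omega)^{\dagger}(\sigma_D^2\mathbf{H}(\omega))^{-1}\mathbf{h}_S(\omega)$ in the presence of noise that is correlated across the two matched-filter branches and coupled to the channel. The key observation that renders it tractable is that the destination noise covariance is proportional to the very same matrix $\mathbf{H}(\omega)$ that defines the channel, and that $\mathbf{h}_S(\omega)$ is exactly the source column of $\mathbf{H}(\omega)$; consequently $\mathbf{H}(\omega)^{-1}\mathbf{h}_S(\omega)$ is the second standard basis vector, the quadratic form collapses to $1/\sigma_D^2$ independently of $\rho(\omega)$, and the total SNR becomes $\tfrac{1}{\sigma_R^2}+\tfrac{1}{\sigma_D^2}$. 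Substituting this back yields the first integral of the statement. Finally, I would combine the two cuts under the $\min$, optimize over $\alpha(\omega)$ and the input power spectral densities subject to the constraints (\ref{eq:power_cons}), and pass to the limit $n\to\infty$ exactly as in the proof of Theorem \ref{thm:CutsetC}, converting the per-block sums into the $\tfrac{1}{4\pi}\int_0^{2\pi}$ integrals quoted in the theorem.
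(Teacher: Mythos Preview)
Your proposal is correct and follows the same route the paper indicates: the paper does not give a separate proof of Theorem~\ref{thm:Upper_asynch} but states that it is ``very similar to the Theorem~\ref{thm:DF_asynch},'' i.e., pass to the circular model, diagonalize via the DFT using Lemmas~\ref{lem:rearrange} and~\ref{lem:diagonal}, compute the two cuts per frequency, and take $n\to\infty$. Your treatment of the multiple-access cut by direct reuse of the $C_{2DF}$ computation from Appendix~E is exactly right, and your observation for the broadcast cut---that the destination's source column $\mathbf{h}_S(\omega)$ is a column of the very matrix $\mathbf{H}(\omega)$ that scales the noise, so $\mathbf{h}_S(\omega)^{\dagger}(\sigma_D^2\mathbf{H}(\omega))^{-1}\mathbf{h}_S(\omega)=1/\sigma_D^2$ independently of $\rho(\omega)$---is a clean way to get the $\tfrac{1}{\sigma_R^2}+\tfrac{1}{\sigma_D^2}$ factor that the paper's determinant machinery would produce after more algebra.
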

\par The proof of these two Theorem $\ref{thm:CF_asynch}$ and $\ref{thm:Upper_asynch}$ are very similar to the Theorem $\ref{thm:DF_asynch}$ and thus omitted for brevity.  
\par The results can be easily extended to the case where the transmitters only know that the crosscorrelations $( \rho_{RS}, \rho_{SR})$ that parametrize the channel belong to an uncertainty set $\Gamma$, which is determined by the choice of the signature waveforms. For example, if both users are assigned a rectangular waveform then the uncertainty set is equal to the segment $\Gamma = \{0\leq \rho_{RS} \leq 1, 0\leq \rho_{SR} \leq 1, \rho_{SR}+\rho_{RS} = 1\}$. The achievable rate of the Gaussian asynchronous relay channel under this condition is obtained by taking infimum of the rates in Theorem $\ref{thm:DF_asynch}$ and $\ref{thm:CF_asynch}$ over the set $\Gamma$. This follows simply because for reliable communication a code has to be good no matter which actual channel is in effect.
                 
\par We will now compare DF and CF achievable rate for symbol-asynchronous relay and show the respective optimal regions with an example. We place the nodes such that the source, relay and destination are aligned, and the distances of the direct paths in each of the links are given by $d_{SR}=d, d_{RD}=1-d$ and $d_{SD}=1$. This corresponds to the relay channel of Fig. 6 with $h=0$ and $a=d$. The channel impulse response at each of the three links follows inverse power law, .i.e., $H_{SR}=d^{-\frac{\alpha}{2}}, H_{SR}=(1-d)^{-\frac{\alpha}{2}}$ and $H_{SD}=1$, where $\alpha$ is the attenuation constant. Modulating waveforms $\xi_S(t)$ and $\xi_R(t)$ are rectangular waveforms of unit energy and finite support $T$. For simplicity, the additive noises at the relay and destination are white, Gaussian and of unit power. The relative delays between the users are proportional to the distance, for e.g., $\tau_{SR}=Td, \tau_{RD}=T(1-d)$ and $\tau_{SD}=T$, which means if the relay is at the source, $\tau_{SR}=0$ and $\tau_{RD}=\tau_{SD}=T$, and similarly, if the relay and destination are co-located, $\tau_{SR}=\tau_{SD}=T$ and $\tau_{RD}=0$.       

\begin{figure}
	\centering
		\includegraphics[width=0.60\textwidth]{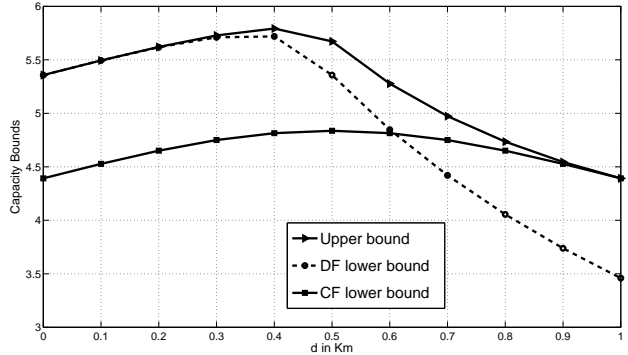}
	\caption{Rates for symbol-asynchronous relay network for $P_S=10, P_R=10, \alpha=2$ and varying $d$}
	\label{fig:Regime}
\end{figure}
 
\par With equal power constraints at the source and relay, we plot DF and CF achievable rates and upper bound on the capacity of this channel model in Fig. 9, as we move the relay from source to the destination. It can be seen that when the relay and destination are co-located ($d\rightarrow 1$), then relay can send its input directly to the destination without quantizing it and thus $\hat{N}_{Rii}=0, \;\; \forall i$. In this case CF rate performs optimally, matching the cut-set upper bound as $\hat{y}_R^{nc}=y_R^{nc}$, a deterministic function of $y_R^{nc}$. In contrast, when $d\rightarrow 0$, DF is optimal. Note that, when $d\rightarrow 0$ the channel is symbol-synchronous and memoryless as $\tau_{RD}=\tau_{SD}$ and $\xi_S(t)=\xi_R(t)$. These trends for CF and DF for the LGRC mimic those for the memoryless relay \cite{Kramer:2005}.

\section{Conclusions}
\par In this paper, we have derived single-letter expressions for the achievable rates and an upper bound on the capacity of a relay channel with inter-symbol interference and additive colored Gaussian noise. Such systems find wide application in a variety of wireless communication systems, in particular, underwater acoustic communication channels.  We have examined two important relay channel coding strategies, decode-and-forward and compress-and-forward; further, the upper bound is a generalization of the cut-set bound for multi-terminal networks.  Some of the conditions for which the channel capacity can be computed, such as degraded relay channels, are delineated.  The proof methods rely on the decomposition of the multipath channel into parallel channels via a DFT decomposition.  Thus, our results suggest the optimality of OFDM input signaling even for relay channels.  As such, optimal achievable schemes for memoryless channels are likely to have similar properties when extended to relay channels with finite memory.  Numerical examples were provided to illustrate properties of the results.  Ongoing work is extending these results to consider highly asynchronous links and developing practical coding strategies to achieve theoretical performance. Future work would be to investigate the Amplify-and-Forward (AF) strategy at the relay. Although, we numerically observe that decomposition into parallel channels is sub-optimal for AF relay, we can still probably come up with a simple linear filter at the relay, which will outperform the DF and CF achievable rates for some values of channel parameters.    

\appendices

\section{Proof of Lemma \ref{lem:opt_CF}}
\begin{proof}
To prove this Lemma, we need to show that Eqn. (\ref{eq:cf1}) subject to the power constraints (\ref{eq:power_cons}) and rate constraint (\ref{eq:cf4}) is maximized by a diagonal $\Psi_S$ and $\Psi_R$. We prove this by showing that diagonal $\Psi_S$ and $\Psi_R$ maximizes the objective function Eqn. (\ref{eq:cf1}) without altering the constraint set (\ref{eq:cf4}). Consider the first term in Eqn. (\ref{eq:cf1}).  The matrix inside the determinant operator can be decomposed into two parts.  
We observe that maximizing the expression below over $\Psi_S$
\begin{eqnarray*}
\arg \max_{\Psi_S}\frac{1}{2}\log\left|\left[\begin{array}{cc} \textbf{C}_{D} & 0\\0 & \textbf{C}_{R}+\hat{\textbf{C}}_R\end{array}\right] + \left[\begin{array}{cc} \textbf{D}_{SD}\Psi_S\textbf{D}_{SD}^{\dag} & \textbf{D}_{SD}\Psi_S\textbf{D}_{SR}^{\dag}\\\textbf{D}_{SR}\Psi_S\textbf{D}_{SD}^{\dag} & \textbf{D}_{SR}\Psi_S\textbf{D}_{SR}^{\dag}\end{array}\right]\right|\\
\end{eqnarray*}
is equivalent to maximizing,
\begin{eqnarray}
& & \arg \max_{\Psi_S} \frac{1}{2}\log\left|\left[\begin{array}{cc} \underbrace{ \textbf{D}_{SD}^{-1}\textbf{C}_{D}\textbf{D}_{SD}^{-1\dag}}_{{\textbf{C}}} & 0\\0 & \underbrace{\textbf{D}_{SR}^{-1}\left(\textbf{C}_{R}+\hat{\textbf{C}}_R\right)\textbf{D}_{SR}^{-1\dag}}_{{\textbf{D}}} \end{array}\right] + \left[\begin{array}{cc} \Psi_S & \Psi_S\\\Psi_S & \Psi_S\end{array}\right]\right| \nonumber \\
& \stackrel{(a)}{=} & \arg \max_{\Psi_S} \frac{1}{2}\log\left|\left[\begin{array}{cc} \Psi_S & \textbf{D}+\Psi_S\\\textbf{C}+\Psi_S & \Psi_S\end{array}\right]\right|
\nonumber\\
& \stackrel{(b)}{=} & \arg \max_{\Psi_S} \frac{1}{2}\log\left|\Psi_S\right|\left|\Psi_S -(\textbf{C}+\Psi_S)\Psi_S^{-1}(\textbf{D}+\Psi_S)\right| \nonumber\\
& = & \frac{1}{2}\log\left|\Psi_S\right|\left|\textbf{C}\Psi_S^{-1}\textbf{D}+\textbf{C}+\textbf{D}\right|. \label{eq:cf2}
\end{eqnarray}
Here, (a) follows since exchanging rows does not change the absolute value of the determinant and (b) follows from the properties of the determinant of a block matrix (see \cite{Horn}).
By Hadamard's inequality \cite{Cover_book}, the expression in Eqn.(\ref{eq:cf2}) is maximized when $\Psi_S$ is diagonal. 
 
\par Although we have shown that diagonal $\Psi_S$ maximizes the objective function, it remains to be shown that by choosing  $\Psi_S$ and $\Psi_R$ to be diagonal, the compression rate $\hat{\textbf{C}}_R$ at the relay is not altered. The compression rate is determined by the inequality constraint (\ref{eq:cf4}), which can be rewritten as,
\begin{eqnarray}\label{eq:cf_ineq}
\log \left|\hat{\textbf{C}}_R\right| & \geq & \log \left|\left[\begin{array}{cc} \textbf{D}_{SR}\Psi_S\textbf{D}_{SR}^{\dag}+\textbf{C}_R+\hat{\textbf{C}_R} & \textbf{D}_{SR}\Psi_S\textbf{D}_{SD}^{\dag}\\\textbf{D}_{SD}\Psi_S\textbf{D}_{SR}^{\dag} & \textbf{D}_{SD}\Psi_S\textbf{D}_{SD}^{\dag}+\textbf{C}_D\end{array}\right]\right|\nonumber\\
& & - \log \left|\textbf{D}_{SD}\Psi_S\textbf{D}_{SD}^{\dag}+\textbf{D}_{RD}\Psi_R\textbf{D}_{RD}^{\dag}+\textbf{C}_D\right|.\nonumber\\
& \stackrel{.}{=} & f(\Psi_S,\Psi_R)
\end{eqnarray}
Now we have to establish that diagonal $\Psi_S$ and $\Psi_R$ does not reduce the compression rate constraint set. We prove this by showing that for every choice of arbitrary $\Psi_S$ and $\Psi_R$, we can find a diagonal $\Psi_S^{*}$ and $\Psi_R^{*}$, which either gives the same constraint on the compression rate or improves it,.i.e.,
\begin{eqnarray*}
f(\Psi_S,\Psi_R) & \geq & f(\Psi_S^{*},\Psi_R^{*})
\end{eqnarray*}

First for any choice (diagonal or non-diagonal) of $\Psi_S$, we will find a condition on $\Psi_R$ which will maximally enlarge the compression rate constraint set. We consider the second term in the RHS of (\ref{eq:cf_ineq}), which is a function of $\Psi_R$, where $\Psi_R$ is a general non-negative definite matrix, not necessarily diagonal.

\begin{eqnarray}\label{eq:maxi_psir}
\log \left|\textbf{D}_{SD}\Psi_S\textbf{D}_{SD}^{\dag}+\textbf{D}_{RD}\Psi_R\textbf{D}_{RD}^{\dag}+\textbf{C}_D\right| & \equiv & \log \left|\Psi_S+\underbrace{\textbf{D}_{SD}^{-1}\textbf{D}_{RD}\Psi_R\textbf{D}_{RD}^{\dag}\textbf{D}_{SD}^{\dag -1}+\textbf{D}_{SD}^{-1}\textbf{C}_D\textbf{D}_{SD}^{\dag -1}}_{\equiv \textbf{C}_1}\right|\nonumber\\
& \stackrel{(a)}{=} & \log\left|\textbf{E}_S\Lambda_S\textbf{E}_S^{\dag}+\textbf{C}_1\right|\nonumber\\
& \stackrel{(b)}{=} & \log\left|\Lambda_S+\textbf{E}_S^{\dag}\textbf{C}_1\textbf{E}_S\right|\nonumber\\
& \stackrel{(c)}{\leq} & \log\left|\Lambda_S+\Lambda_{C_1}\right|
\end{eqnarray}
where in (a), $\Psi_S= \textbf{E}_S\Lambda_S\textbf{E}_S^{\dag}$ is the spectral decomposition of $\Psi_S$ with $\textbf{E}_S$ being the unitary eigenvector matrix associated with the diagonal eigenvalue matrix $\Lambda_S$. (b) is true because eigenvector matrix is unitary and (c) follows from the well known Hadamard's inequality (see {\em e.g.}\cite{Cover_book}), where $\Lambda_{C_1}$ is the eigenvalue matrix of $\textbf{C}_1$.

\par Since increase in the second term in the RHS of (\ref{eq:cf_ineq}) will make the constraint set larger, for a fixed choice of non-diagonal $\Psi_S$, equality is achieved in (\ref{eq:maxi_psir}) if $\Psi_R$ is chosen from a family of positive definite matrices (by varying $\Lambda_{C_1}$) diagonalized by $\textbf{E}_S$ and of the form $\Psi_R=\textbf{D}_{RD}^{-1}\textbf{D}_{SD}(\textbf{E}_S\Lambda_{C_1}\textbf{E}_S^{\dag}-\textbf{D}_{SD}^{-1}\textbf{C}_{D}\textbf{D}_{SD}^{\dag -1})\textbf{D}_{SD}^{\dag}\textbf{D}_{RD}^{\dag -1}$, satisfying the power constraint (\ref{eq:power_cons}) at the relay and maximizing the determinant in (\ref{eq:maxi_psir}). For diagonal $\Psi_S$, it is easy to verify that diagonal $\Psi_R$ will make $\textbf{C}_1$ diagonal.

\par Now with such choice of $\Psi_R$, we will show that it is sufficient to consider only diagonal $\Psi_S$. Let us consider the RHS of (\ref{eq:cf_ineq}) again,
\begin{eqnarray}\label{eq:maxi_psis}
\lefteqn{\log \left|\left[\begin{array}{cc} \textbf{D}_{SR}\Psi_S\textbf{D}_{SR}^{\dag}+\textbf{C}_R+\hat{\textbf{C}_R} & \textbf{D}_{SR}\Psi_S\textbf{D}_{SD}^{\dag}\\\textbf{D}_{SD}\Psi_S\textbf{D}_{SR}^{\dag} & \textbf{D}_{SD}\Psi_S\textbf{D}_{SD}^{\dag}+\textbf{C}_D\end{array}\right]\right|}\nonumber\\
& & - \log \left|\textbf{D}_{SD}\Psi_S\textbf{D}_{SD}^{\dag}+\textbf{D}_{RD}\Psi_R\textbf{D}_{RD}^{\dag}+\textbf{C}_D\right|\nonumber\\
& \equiv & \log \left|\left[\begin{array}{cc} \textbf{C}_R+\hat{\textbf{C}_R} & 0\\0 & \textbf{C}_D\end{array}\right]+\left[\begin{array}{cc} \textbf{D}_{SR} & 0\\0 & \textbf{D}_{SD}\end{array}\right]\left[\begin{array}{cc} \Psi_S & \Psi_S\\\Psi_S & \Psi_S\end{array}\right]\left[\begin{array}{cc} \textbf{D}_{SR}^{\dag} & 0\\0 & \textbf{D}_{SD}^{\dag}\end{array}\right]\right|\nonumber\\
& & - \log \left|\Psi_S+\textbf{C}_1\right|\nonumber\\
& \equiv & \log \left|\left[\begin{array}{cc} \textbf{C}_2 & 0\\0 & \textbf{C}_3\end{array}\right]+\left[\begin{array}{cc} \Psi_S & \Psi_S\\\Psi_S & \Psi_S\end{array}\right]\right| - \log \left|\Psi_S+\textbf{C}_1\right|\nonumber\\
& = & \log \left|\left[\begin{array}{cc} \Psi_S & \Psi_S+\textbf{C}_3\\\Psi_S+\textbf{C}_2 & \Psi_S\end{array}\right]\right|- \log \left|\Psi_S+\textbf{C}_1\right|\nonumber\\
& \stackrel{(a)}{=} & \log\left|\Psi_S\right|+\log \left|\left(\Psi_S+\textbf{C}_2\right)\Psi^{-1}_S\left(\Psi_S+\textbf{C}_3\right)-\Psi_S\right|- \log \left|\Psi_S+\textbf{C}_1\right|\nonumber\\
& = & \log\left|\Psi_S\right|+\log\left|\textbf{C}_2+\textbf{C}_3+\textbf{C}_2\Psi^{-1}_S\textbf{C}_3\right|- \log \left|\Psi_S+\textbf{C}_1\right|\nonumber\\
& \equiv & \log\left|\Psi_S\right|+\log\left|\Psi^{-1}_S+\textbf{C}_1^{'}\right|-\log\left|\Psi_S+\textbf{C}_1\right|\nonumber\\
& = & \underbrace{\log\left|\textbf{I}_n+ \Psi_S\textbf{C}_1^{'}\right|-\log\left|\Psi_S+\textbf{C}_1\right|}_{\equiv g(\Psi_S)}
\end{eqnarray}
where $\textbf{C}_2=\textbf{D}_{SR}^{\dag}\left(\textbf{C}_R+\hat{\textbf{C}_R}\right)\textbf{D}_{SR}, \textbf{C}_3=\textbf{D}_{SD}^{\dag}\textbf{C}_D\textbf{D}_{SD}$ and $\textbf{C}_1^{'}=\left(\textbf{C}_2+\textbf{C}_3\right)\textbf{C}^{-1}_2\textbf{C}^{-1}_3$ are all diagonal matrices and (a) follows from the fact that $\det\left(\begin{array}{cc} \textbf{A} & \textbf{B}\\\textbf{C} & \textbf{D}\end{array}\right)=\det\left(\textbf{A}\right)\det\left(\textbf{D}-\textbf{C}\textbf{A}^{-1}\textbf{B}\right)$, with $\textbf{A}=\textbf{D}=\Psi_S, \textbf{B}=\Psi_S+\textbf{C}_3$ and $\textbf{C}=\Psi_S+\textbf{C}_2$. The constraint set $\{\Psi_S:tr(\Psi_S)\leq P_S\}$ is closed and bounded and $g(\Psi_S)$ is continuous everywhere. Thus by the extreme value theorem (see \cite{Rudin}), $g(\Psi_S)$ has a maxima and mininma in the constraint set and the extreme points can be shown to be attained by the diagonal $\Psi_S$ by differentiating $g(\Psi_S)$ w.r.t. $\Psi_S$ and solving the KKT conditions (the proof is similar to the proof of Theorem 3 in \cite{Gamal} and thus omitted for brevity). So for every non-diagonal $\Psi_S$, we can find a diagonal $\Psi_S$ which gives the same value of $g(\Psi_S)$. Thus by choosing $\Psi_S$ and $\Psi_R$ to be diagonal, the compression rate is not reduced and a diagonal $\Psi_S$ maximizes the objective function Eqn. (\ref{eq:cf1}). This completes the proof.            
\end{proof}

\section{Proof of Optimality of CF Achievable Rate}
\begin{proof}
\par We re-examine the achievable rate for compress-and-forward strategy (see Eqn (\ref{eq:CF_rate})). If $\hat{y}_R^{nc}$ is a deterministic, invertible function of $y_R^{nc}$ then,
\begin{eqnarray*}
C_{nCF}^c & = & \sup I(x_S^n;y_D^{nc},\hat{y}_R^{nc}|x_R^n) = \sup I(x_S^n;y_D^{nc},{y}_R^{nc}|x_R^n),
\end{eqnarray*}
subject to the constraint,
\begin{eqnarray*}
I(x_R^n;y_D^{nc}) & \geq & I(y_R^{nc};\hat{y}_R^{nc}|x_R^n,y_D^{nc})\\
I(x_R^n;y_D^{nc}) & \stackrel{(a)}{\geq} & h(y_R^{nc}|x_R^n,y_D^{nc})\\
I(x_R^n;y_D^{nc}) & \geq & I(x_S^n;y_R^{nc}|x_R^n,y_D^{nc})\\
I(x_S^n;y_D^{nc}|x_R^n)+ I(x_R^n;y_D^{nc}) & \geq & I(x_S^n;y_D^{nc}|x_R^n)+I(x_S^n;y_R^{nc}|x_R^n,y_D^{nc})\\
I(x_S^n,x_R^n;y_D^{nc}) & \geq & I(x_S^n;y_R^{nc},y_D^{nc}|x_R^n).
\end{eqnarray*}
Here, (a) follows from the fact that $h(\hat{y}_R^{nc}|y_R^{nc},x_R^n,y_D^{nc})=0$ as $\hat{y}_R^{nc}$ is a deterministic, invertible function of $y_R^{nc}$. Under this condition it is clear that the cut-set upper bound is also given by,
$
C_n^{up}= \sup I(x_S^n;y_R^{nc},y_D^{nc}|x_R^n)
$ which in turn is equivalent to  $C_{nCF}^c$.
\end{proof}

\section{Proof of limiting DF Achievable Rate}
\begin{proof}
\par We need to show that,
\begin{eqnarray*}
\lim_{n\rightarrow\infty}C_{nDF}^c(P_S,P_R) & = & C_{DF}(P_S,P_R).
\end{eqnarray*}
To prove this we will use a theorem on minimax optimization \cite{Luenberger} stated below. Given, $\Phi:X\times Z\mapsto\mathcal{R}$, consider the following minimax optimization problem,
\begin{eqnarray*}
\min_{x\in X}\max_{z \in Z} \Phi(x,z).
\end{eqnarray*}
Define,
\begin{eqnarray*}
r_x(z)=-\Phi(x,z)\; \; \mbox{if} \; \; z\in Z, r(z)=\max_{x\in X} r_x(z),\\
t_z(x)=\Phi(x,z) \; \; \mbox{if} \; \; x\in X, t(x)=\max_{z\in Z} t_z(x).
\end{eqnarray*}
\begin{theorem}\cite{Luenberger}\label{thm:maximin}
If $X$ and $Z$ are convex and compact, $t_z(.)$ and $r_x(.)$ are closed and convex for each $z\in Z$ and $x\in X$, respectively, then
\begin{eqnarray*}
\min_{x\in X}\max_{z \in Z} \Phi(x,z) = \max_{z \in Z} \min_{x\in X} \Phi(x,z).
\end{eqnarray*}
\end{theorem}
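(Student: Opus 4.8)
The plan is to recast the claimed identity as the existence of a saddle point of $\Phi$ and to produce that saddle point by a fixed-point argument, after first disposing of the trivial direction. First I would record that both sides are well defined: since $X$ and $Z$ are compact and the maps $t_z(\cdot)=\Phi(\cdot,z)$ and $r_x(\cdot)=-\Phi(x,\cdot)$ are closed (lower semicontinuous) and convex, the inner optimizations are attained, so $t(x)=\max_{z\in Z}\Phi(x,z)$ and $\min_{x\in X}\Phi(x,z)=-r(z)$ exist. Two structural observations drive the proof. Because $\Phi(\cdot,z)$ is convex for each fixed $z$, the outer objective $t(x)=\sup_{z}\Phi(x,z)$ is a pointwise supremum of convex functions and is therefore convex on $X$; symmetrically, $z\mapsto\min_{x}\Phi(x,z)$ is an infimum of concave functions and hence concave on $Z$. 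The elementary ``weak duality'' inequality $\max_{z}\min_{x}\Phi(x,z)\le\min_{x}\max_{z}\Phi(x,z)$ holds with no convexity at all: for arbitrary $x',z'$ one has $\min_x\Phi(x,z')\le\Phi(x',z')\le\max_z\Phi(x',z)$, and taking $\max$ over $z'$ on the left and $\min$ over $x'$ on the right gives the claim.

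The substance is the reverse inequality, which I would obtain by exhibiting a saddle point $(x^\ast,z^\ast)\in X\times Z$, meaning $\Phi(x^\ast,z)\le\Phi(x^\ast,z^\ast)\le\Phi(x,z^\ast)$ for all $x,z$. A saddle point closes the argument at once, since $\min_x\max_z\Phi\le\max_z\Phi(x^\ast,z)=\Phi(x^\ast,z^\ast)=\min_x\Phi(x,z^\ast)\le\max_z\min_x\Phi$, which together with weak duality forces equality. To produce $(x^\ast,z^\ast)$ I would introduce the best-response correspondences $B_X(z)=\arg\min_{x}\Phi(x,z)$ and $B_Z(x)=\arg\max_{z}\Phi(x,z)$. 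Convexity of $\Phi(\cdot,z)$ makes $B_X(z)$ a nonempty, compact, \emph{convex} subset of $X$, and concavity of $\Phi(x,\cdot)$ makes $B_Z(x)$ a nonempty, compact, convex subset of $Z$; the closedness and compactness hypotheses make the product correspondence $(x,z)\mapsto B_X(z)\times B_Z(x)$ upper hemicontinuous on the convex compact set $X\times Z$. Kakutani's fixed-point theorem then yields a point $(x^\ast,z^\ast)$ with $x^\ast\in B_X(z^\ast)$ and $z^\ast\in B_Z(x^\ast)$, which is exactly the saddle-point condition.

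The main obstacle is verifying the hypotheses of the fixed-point theorem rather than invoking it: one must check that the $\arg\min$ and $\arg\max$ sets are genuinely convex (this is precisely where convexity in $x$ and concavity in $z$ enter and cannot be dropped) and that the correspondences have closed graph, which is where the ``closed'' semicontinuity assumption feeds into Berge's maximum theorem. If the semicontinuity is too weak to apply Kakutani cleanly, I would fall back on the route of first proving the identity for finite $X$ and $Z$ — where it reduces to linear-programming duality in the spirit of von Neumann — and then recovering the general convex compact case by approximating with finite grids and passing to the limit using compactness and the semicontinuity of $\Phi$; this avoids fixed-point machinery at the cost of a limiting argument. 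Either way, the convexity-concavity structure is the indispensable ingredient that upgrades weak duality to the full minimax identity.
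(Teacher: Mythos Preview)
The paper does not actually prove this theorem; it is quoted verbatim from Luenberger's \emph{Optimization by Vector Space Methods} and used as a black-box tool inside the proof of the limiting DF achievable rate (Appendix~C). There is therefore no ``paper's own proof'' to compare against.

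That said, your proposal is a sound and standard route to such a minimax identity: weak duality is elementary, and the reverse inequality via a saddle point produced by Kakutani's fixed-point theorem (with Berge's maximum theorem furnishing upper hemicontinuity of the best-response correspondences) is exactly the classical argument behind Sion-type minimax results. Your fallback through finite grids and LP duality is also legitimate. The one place to be careful is the hypothesis alignment: the theorem as stated assumes $t_z(\cdot)$ and $r_x(\cdot)$ are \emph{closed} and convex, which gives lower semicontinuity of $\Phi(\cdot,z)$ and upper semicontinuity of $\Phi(x,\cdot)$ separately, but not joint continuity. Kakutani via Berge really wants enough joint regularity to guarantee closed-graph best responses; under the bare separate-semicontinuity hypotheses one typically routes instead through a level-set/KKM or Fan-type argument rather than Kakutani directly. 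Your finite-approximation fallback handles this, but if you lead with Kakutani you should flag that step as the place where the semicontinuity assumption is doing real work and may need strengthening.
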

Now,
\begin{eqnarray*}
C_{nDF}^c(P_S,P_R) & = & \max_{{\cal P}^D} \min \{C_{1nDF}^c,C_{2nDF}^c\}\\
& \stackrel{(a)}{=} & \max_{{\cal P}^D} \min_{0\leq\lambda\leq 1}(\lambda C_{1nDF}^c+\bar{\lambda}C_{2nDF}^c),\\
\end{eqnarray*}
where, $\bar{\lambda}=1-\lambda$ and here (a) follows from the fact that $\lambda C_{1nDF}^c+\bar{\lambda}C_{2nDF}^c$ is a line connecting $C_{1nDF}^c$ and $C_{2nDF}^c$, $\lambda=0$ and $\lambda=1$ are two extreme points of this line. Let,
\begin{eqnarray*}
U_n &\equiv & \left\{\alpha(\omega_i):0\leq\alpha(\omega_i)\leq1\right\}\\
V_n & \equiv & \left\{P_S(\omega_i):\frac{1}{n}\sum_{i=1}^{n}P_S(\omega_i)\leq P_S\right\}\\
W_n & \equiv & \left\{P_R(\omega_i):\frac{1}{n}\sum_{i=1}^{n}P_R(\omega_i)\leq P_R\right\}\\
Z_n & \equiv & \left\{z_n=(u_n, v_n, w_n):u_n \in U_n, v_n \in V_n, w_n \in W_n \right\}\\
X & \equiv & \{\lambda:0\leq\lambda\leq 1\}\\
\Phi(x,z_n) & = & \lambda C_{1n}^c+\bar{\lambda}C_{2n}^c.
\end{eqnarray*}
\par It is easy to see that both $X$ and $Z_n$ are convex sets and they are compact also since they are closed and bounded (assuming the peak power in a sub-band is bounded). And since $\Phi(x,z_n)$ is a continuous and bounded function (as capacity is bounded) of $(x,z_n)$, the solution of the problem exists for all $n$. So we can write,
\begin{eqnarray*}
\max_{z_n \in Z_n} \min_{x\in X} \Phi(x,z_n) & \equiv & \max_{u_n \in U_n, v_n \in V_n, w_n \in W_n} \min_{x\in X} \Phi(x,u_n, v_n, w_n)\\
& = & \max_{u_n \in U_n} \max_{v_n \in V_n} \max_{w_n \in W_n} \min_{x\in X} \Phi(x,u_n, v_n, w_n).
\end{eqnarray*}
For a fixed $w_n \in W_n$, $t_{w_n}(x)=\Phi(x,z_n)$ is a closed and convex function of $x$ and similarly, for a fixed $x\in X$, $r_x(w_n)=-\Phi(x,z_n)$ is a closed and convex function of $w_n$ (can be proved using the fact that if $f()$ and $g()$ are concave function of $x$ and $g()$ is nondecreasing, then $g(f(x))$ is a concave function). And hence by applying Theorem \ref{thm:maximin},
\begin{eqnarray*}
\max_{\{z_n\in Z_n\}} \min_{\{x\in X\}}(\lambda C_{1nDF}^c+\bar{\lambda}C_{2nDF}^c) & = & \max_{u_n \in U_n} \max_{v_n \in V_n} \min_{x\in X} \max_{w_n \in W_n} (\lambda C_{1nDF}^c+\bar{\lambda}C_{2nDF}^c).
\end{eqnarray*}
Now since $-\Phi(x,z_n)$ is also convex function of $u_n$ and $v_n$ for a fixed value of the other parameters, we can apply the same theorem twice to finally get,
\begin{eqnarray*}
\max_{\{z_n\in Z_n\}} \min_{\{x\in X\}}(\lambda C_{1nDF}^c+\bar{\lambda}C_{2nDF}^c) & = & \min_{\{x\in X\}} \max_{\{z\in Z_n\}}(\lambda C_{1nDF}^c+\bar{\lambda}C_{2nDF}^c).
\end{eqnarray*}
\par Now we show that,
\begin{eqnarray*}
\lim_{n\rightarrow\infty}\max_{Z_n} (\lambda C_{1nDF}^c+\bar{\lambda}C_{2nDF}^c) & = & \max_Z (\lambda C_{1DF}+\bar{\lambda}C_{2DF})\\
& = & \max_Z \Phi(x,z),
\end{eqnarray*}
where,
\begin{eqnarray*}
Z & \equiv & (u, v, w)\\
& = & \left\{(\alpha(\omega),P_S(\omega),P_R(\omega)):0\leq\alpha(\omega)\leq1, \frac{1}{2\pi}\int_{-\pi}^{\pi}P_S(\omega)\leq P_S, \frac{1}{2\pi}\int_{-\pi}^{\pi}P_R(\omega)\leq P_R\right\}.
\end{eqnarray*}
Since, $\Phi(x,z)$ is a strictly concave function in each of $u, v$ and $w$ and since each of $U, V$ and $W$ are convex constraint sets, $\Phi(x,z)$ achieves its maximum at some unique $z=z^*$. Similarly, $\Phi(x, z_n)$ attains its maxima at some unique $z_n=z_n^*$.
\par If a function is bounded and almost everywhere continuous on the interval $[-\pi, \pi]$ then it is
Riemann integrable on the interval,i.e.,
\begin{eqnarray*}
\lim_{n\rightarrow \infty} \frac{1}{n} \sum_{i=1}^n f(x_i) & = & \frac{1}{2\pi} \int_{-\pi}^{\pi} f(x)dx.
\end{eqnarray*}
Since the capacity of a power constrained system is finite, it can be shown that (for details see Lemma 3 of \cite{verdu}),
\begin{eqnarray*}
\lim_{n\rightarrow\infty} \Phi(x,z_n^*) & = & \Phi(x,z^*)\\
\lim_{n\rightarrow\infty}\max_{Z_n} (\lambda C_{1nDF}^c+\bar{\lambda}C_{2nDF}^c) & = & \max_Z (\lambda C_{1DF}+\bar{\lambda}C_{2DF}).
\end{eqnarray*}
\par As $Z$ is closed and convex, applying Theorem \ref{thm:maximin} again to exchange the min-max, we get our desired result.
\end{proof}

\section{Power Allocation for DF protocol}
\begin{proof}
\par Exchanging integration and infimum operation reduces the original problem into $n$ optimization problems. Consider one such optimization problem in the sub-band $i$ for the DF lower bound on the capacity.
\begin{eqnarray*}
\max_{P_{ti}} \min \left\{C \left(P(\omega_i)\right),C \left(a_{SRi}P_{S}^1(\omega_i)\right)\right\},
\end{eqnarray*}
where, $P_t(\omega_i)=P_S(\omega_i)+P_R(\omega_i)$ and $P_S(\omega_i)=P_S^1(\omega_i)+P_S^2(\omega_i)$. $P_S^1(\omega_i)$ is used by the source for transmission to the relay, $P_S^2(\omega_i)$ for transmission to the destination and $P_R(\omega_i)$ is the power allocated by the relay in the sub-band $\omega_i$.
\par For fixed $P_{2}(\omega_i)=P_{S}^2(\omega_i)+P_{R}(\omega_i)$, the rate $C \left(P(\omega_i))\right)$ is maximized when we set,
\begin{eqnarray*}
P_{S}^2(\omega_i) & = & \frac{a_{SDi}}{a_{SDi}+a_{RDi}}P_{2}(\omega_i)\\
P_{R}(\omega_i) & = & \frac{a_{SDi}}{a_{SDi}+a_{RDi}}P_{2}(\omega_i).
\end{eqnarray*}
We then get a simplified problem,
\begin{eqnarray*}
C_{i}^* &=& \max_{P_{t}(\omega_i)} \min \left\{C_{1}(\omega_i),C_{2}(\omega_i)\right\}\;\; \mbox{where,}\\
C_{1}(\omega_i) & = & C \left(a_{SRi}P_{S}^1(\omega_i)\right)\\
C_{2}(\omega_i) & = & C \left(a_{SDi}P_{S}^1(\omega_i)+(a_{SDi}+a_{RDi})P_{2}(\omega_i)\right).
\end{eqnarray*}
\begin{lemma}
If $a_{SRi}<a_{SDi}$ then $C_{1i}<C_{2i}$ $\forall P_{t}(\omega_i)$, otherwise for maximizing $P_{t}(\omega_i)$, $C_{1i}=C_{2i}$.
\end{lemma}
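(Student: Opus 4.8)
The plan is to fix the total per-band power $P_t(\omega_i)$ and regard $\min\{C_{1i},C_{2i}\}$ as a function of the single split variable $P_S^1(\omega_i)\in[0,P_t(\omega_i)]$, writing $P_2(\omega_i)=P_t(\omega_i)-P_S^1(\omega_i)$. Since $C(\cdot)=\log(1+\cdot)$ is strictly increasing, I would first record two monotonicities: $C_{1i}=C(a_{SRi}P_S^1(\omega_i))$ is strictly increasing in $P_S^1(\omega_i)$, while after substituting $P_2(\omega_i)=P_t(\omega_i)-P_S^1(\omega_i)$ the second rate becomes $C_{2i}=C((a_{SDi}+a_{RDi})P_t(\omega_i)-a_{RDi}P_S^1(\omega_i))$, which is strictly decreasing in $P_S^1(\omega_i)$ because $a_{RDi}>0$. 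Thus $\min\{C_{1i},C_{2i}\}$ is the pointwise minimum of one increasing and one decreasing curve, and its maximizer over the split is dictated entirely by where the two curves meet.

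The single essential computation is the difference of the two logarithmic arguments, $\Delta(P_S^1)=(a_{SDi}+a_{RDi})P_t(\omega_i)-(a_{SRi}+a_{RDi})P_S^1(\omega_i)$, which is affine and strictly decreasing in $P_S^1(\omega_i)$; by monotonicity of $\log$, the sign of $C_{2i}-C_{1i}$ equals the sign of $\Delta$. Evaluating at the endpoints gives $\Delta(0)=(a_{SDi}+a_{RDi})P_t(\omega_i)\ge 0$ and $\Delta(P_t(\omega_i))=(a_{SDi}-a_{SRi})P_t(\omega_i)$, so the entire case split is governed by the comparison of $a_{SDi}$ and $a_{SRi}$.

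Two cases then follow. When $a_{SRi}<a_{SDi}$, we have $\Delta(P_t(\omega_i))>0$ for $P_t(\omega_i)>0$, and since $\Delta$ is decreasing this forces $\Delta>0$ on all of $[0,P_t(\omega_i)]$; hence $C_{1i}<C_{2i}$ for every admissible split and every $P_t(\omega_i)>0$, the minimum is always $C_{1i}$, and being increasing in $P_S^1(\omega_i)$ it is maximized by assigning all power to the source--relay path, $P_S^1(\omega_i)=P_t(\omega_i)$ and $P_2(\omega_i)=0$, so the relay stays silent. When $a_{SRi}\ge a_{SDi}$, we instead have $\Delta(0)>0$ but $\Delta(P_t(\omega_i))\le 0$, so by the intermediate value theorem and strict monotonicity of $\Delta$ there is a unique split at which the two rates coincide; since the minimum agrees with the increasing branch to the left of this point and with the decreasing branch to its right, the standard max--min balancing argument places the optimal split precisely at the crossing, where $C_{1i}=C_{2i}$.

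The main obstacle here is interpretive rather than computational: one must read the claim as a statement about the optimal partition of a fixed $P_t(\omega_i)$ between the cooperative and direct paths, and then check consistency with the outer maximization over $P_t(\omega_i)$ used in Theorem~6, since the balancing condition $C_{1i}=C_{2i}$ is what converts that outer problem into the water-filling form. I would also flag two boundary subcases for completeness: the borderline $a_{SRi}=a_{SDi}$, where $\Delta(P_t(\omega_i))=0$ so the crossing coincides with the endpoint $P_S^1(\omega_i)=P_t(\omega_i)$ (zero relay power) yet $C_{1i}=C_{2i}$ still holds, and the degenerate $P_t(\omega_i)=0$, where both rates vanish trivially.
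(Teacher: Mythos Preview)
Your proof is correct and follows essentially the same idea as the paper: both arguments hinge on the fact that, for a fixed total power $P_t(\omega_i)$, $C_{1i}$ is increasing and $C_{2i}$ is decreasing in the split variable $P_S^1(\omega_i)$. The only cosmetic difference is that for the case $a_{SRi}\ge a_{SDi}$ the paper uses a local perturbation/contradiction argument (if $C_{1i}^*\neq C_{2i}^*$ at the optimum, shift a little power to improve the minimum), whereas you compute the affine gap $\Delta(P_S^1)$ explicitly and invoke the intermediate value theorem to locate the crossing; these are equivalent formalizations of the same balancing principle, and your treatment is in fact a bit more explicit about the endpoint and degenerate cases.
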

\begin{proof}
The first part of the lemma is obvious. For the second part, let $P_{ti}^*=P_{1i}^*+P_{Si}^{1*}$ maximizes the objective function and $C_{1i}^*<C_{2i}^*$. Then by choosing $P_{Si}^{1}=(1+\epsilon)P_{Si}^{1*}, P_{1i}=P_{1i}^*- \epsilon P_{Si}^{1*}$, we can increase $C_{1i}$ as it is a increasing function of $P_{Si}^1$ and decrease $C_{2i}$. So by continuity, at maximizing $P_{ti}^*$, $C_{1i}=C_{2i}$. We can use similar techniques for the case of $C_{2i}<C_{1i}$.
\end{proof}
If $a_{SRi}<a_{SDi}$ the sub-channel is non-degraded,and $C_{1i}<C_{2i}$. So, $C_i^*=\max C_{1i}$ and then solving the optimization problem we get, $C_i^*=C\left(a_{SRi}\left(\nu_t -\frac{1}{a_{SRi}}\right)^+\right)$. The corresponding power allocations are given by,
\begin{eqnarray*}
P_{t}^*(\omega_i) =\left(\nu_t-\frac{1}{a_{SRi}}\right)^+\\
P_{S}^{1*}(\omega_i)= P_{t}^*(\omega_i),P_{S}^{2*}(\omega_i) =P_{R}^*(\omega_i)=0;
\end{eqnarray*}
$P_{2}^*(\omega_i)$ is given no power as it has no effect on the achievable rate of the sub-band.
\par If $a_{SRi}\geq a_{SDi}$ the sub-channel is degraded, and the sub-channel rate is maximized when $C_{1i}^*=C_{2i}^*$,
\begin{eqnarray*}
a_{SRi}P_{S}^{1*}(\omega_i) & = & a_{SDi}P_{S}^{1*}(\omega_i)+(a_{SDi}+a_{RDi})P_{2}^*(\omega_i)\\
P_{2}^*(\omega_i) & = & \frac{a_{SRi}-a_{SDi}}{a_{SDi}+a_{RDi}}P_{S}^{1*}(\omega_i).
\end{eqnarray*}
Now, using the relation $P_{t}^*(\omega_i)=P_{S}^{1*}(\omega_i)+P_{2}^*(\omega_i)$ we get,
\begin{eqnarray*}
P_{t}^*(\omega_i) & = & \left(1+\frac{a_{SRi}-a_{SDi}}{a_{SDi}+a_{RDi}}\right)P_{S}^{1*}(\omega_i).
\end{eqnarray*}
The achievable rate is maximized when $P_{S}^{1*}(\omega_i)=\left(\nu_t -\frac{1}{a_{SRi}}\right)^+$ which implies that,
\begin{eqnarray*}
P_{2}^*(\omega_i) = \frac{a_{SRi}-a_{SDi}}{a_{SDi}+a_{RDi}}\left(\nu_t -\frac{1}{a_{SRi}}\right)^+\\
\Rightarrow P_{S}^{2*}(\omega_i) = a_{SDi}\frac{a_{SRi}-a_{SDi}}{(a_{SDi}+a_{RDi})^2}\left(\nu_t-\frac{1}{a_{SRi}}\right)^+\\
P_{R}^*(\omega_i) = a_{RDi}\frac{a_{SRi}-a_{SDi}}{(a_{SDi}+a_{RDi})^2}\left(\nu_t-\frac{1}{a_{SRi}}\right)^+,
\end{eqnarray*}
where $\nu_t$ is the power price chosen to satisfy the overall total input power constraint.
\end{proof}

\section{Proof of Theorem $\ref{thm:DF_asynch}$}
\begin{proof}
Before proving the theorem, we introduce two key lemmas.
\begin{lemma}\label{lem:rearrange}
\begin{eqnarray*}
\det\left(\textbf{I}_{2n} + \frac{1}{\sigma_D^2}E[x^nx^{nT}]\textbf{G}\right) & = & \det\left(\textbf{I}_{2n} + \frac{1}{\sigma_D^2}\left[\begin{array}{cc} \Sigma_R & \Sigma_{RS}\\\Sigma_{SR} & \Sigma_S\end{array}\right]\left[\begin{array}{cc} \textbf{I}_n & \textbf{S}\\\textbf{S}^T & \textbf{I}_n\end{array}\right]\right),
\end{eqnarray*}
where,
\begin{eqnarray*}
\textbf{S}^T & = & \rho_{RS}\textbf{I}_n + \rho_{SR}\left[\begin{array}{cccccc} 0 & 1 & 0 & & & \\0 & 0 & 1 & & & \\ & 0 & 0 & \vdots & 0 & \\ & & & & 1 & 0\\ & & & & 0 & 1\\1 & & & & 0 & 0\end{array}\right],\\
\textbf{G} & = & \left[\begin{array}{cccccccc} 1 & \rho_{RS} &  &  &  &  &  &\rho_{SR}\\\rho_{RS} & 1 & \rho_{SR} & & & & & \\ & \rho_{SR} & 1 & \rho_{RS} & & & &\ddots\\ & & & & & \rho_{SR} & 1 & \rho_{RS}\\\rho_{SR} & & & & & &\rho_{RS} & 1\end{array}\right].
\end{eqnarray*}
\end{lemma}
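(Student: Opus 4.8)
The plan is to read this Lemma as a permutation--similarity identity for determinants. The left-hand side is written in the \emph{interleaved} coordinate ordering, in which the stacked input vector lists the relay and source symbols slot by slot as $[x_R(1),x_S(1),x_R(2),x_S(2),\dots]$, so that $\textbf{G}$ is precisely the $2n\times 2n$ (circularized) Gram matrix generated by the taps $\textbf{H}(0)=\left[\begin{smallmatrix}1 & \rho_{RS}\\ \rho_{RS} & 1\end{smallmatrix}\right]$ and $\textbf{H}(\pm1)$, while $E[x^nx^{nT}]$ is the input covariance in that same ordering. The right-hand side is written in the \emph{grouped} ordering $[x_R(1),\dots,x_R(n),x_S(1),\dots,x_S(n)]$, in which the covariance takes the block form $\left[\begin{smallmatrix}\Sigma_R & \Sigma_{RS}\\ \Sigma_{SR} & \Sigma_S\end{smallmatrix}\right]$. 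Because passing from one ordering to the other is an orthogonal change of coordinates, the two determinants must agree; the real content is only to identify the image of $\textbf{G}$.

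First I would introduce the deinterleaving permutation $\textbf{P}$, defined by $\textbf{P}_{i,2i-1}=1$ and $\textbf{P}_{n+i,2i}=1$ for $i=1,\dots,n$ (and zero elsewhere), which sends the interleaved relay coordinate $2i-1$ to position $i$ and the interleaved source coordinate $2i$ to position $n+i$; this is the transpose of the permutation already used in the proof of Theorem~\ref{thm:cf}, and it is orthogonal, $\textbf{P}\textbf{P}^T=\textbf{P}^T\textbf{P}=\textbf{I}_{2n}$. Conjugating the left-hand expression by $\textbf{P}$ and inserting $\textbf{P}^T\textbf{P}=\textbf{I}_{2n}$ between the two factors, I use $\det(\textbf{I}_{2n}+\textbf{A}\textbf{B})=\det\!\big(\textbf{I}_{2n}+(\textbf{P}\textbf{A}\textbf{P}^T)(\textbf{P}\textbf{B}\textbf{P}^T)\big)$, valid by orthogonality of $\textbf{P}$, to rewrite everything in the grouped ordering.

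Next I would evaluate the two conjugated factors. The covariance factor is immediate: $\textbf{P}\,E[x^nx^{nT}]\,\textbf{P}^T$ is, by the very definition of $\Sigma_R,\Sigma_S,\Sigma_{RS}$, the grouped covariance $\left[\begin{smallmatrix}\Sigma_R & \Sigma_{RS}\\ \Sigma_{SR} & \Sigma_S\end{smallmatrix}\right]$, since $\textbf{P}$ merely collects the relay and source coordinates into two contiguous blocks. For the channel factor $\textbf{P}\textbf{G}\textbf{P}^T$ I would note that $\textbf{G}$ has no relay--relay or source--source coupling (the only nonzero off-diagonal taps, $\rho_{RS}$ from $\textbf{H}(0)$ and $\rho_{SR}$ from $\textbf{H}(\pm1)$, couple a relay coordinate to a source coordinate), so after grouping the two diagonal blocks reduce to the unit diagonal $\textbf{I}_n$, while the off-diagonal block collects exactly the intra-slot couplings $\rho_{RS}$ (on the diagonal) together with the one-step cross-slot couplings $\rho_{SR}$ (on the cyclic off-diagonal), giving $\textbf{S}^T=\rho_{RS}\textbf{I}_n+\rho_{SR}\textbf{Z}$ with $\textbf{Z}$ the cyclic shift, and its transpose $\textbf{S}$ in the conjugate block. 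Substituting both factors yields the claimed equality.

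I expect the accounting of this off-diagonal block to be the main obstacle: one must track the alternating $\rho_{RS}/\rho_{SR}$ band of $\textbf{G}$, together with its corner (wraparound) entries, through $\textbf{P}$ and check that they land on precisely the diagonal and the circular sub-/super-diagonal of $\textbf{S}^T$ and $\textbf{S}$, using the symmetry $\textbf{H}(-1)=\textbf{H}^T(1)$ to pair the $\rho_{SR}$ entries correctly. Once this index bookkeeping is settled, orthogonality of $\textbf{P}$ makes the determinant equality automatic.
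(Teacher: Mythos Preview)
Your approach is correct and is essentially what the paper intends: the paper does not actually write out a proof of this lemma but simply remarks that it ``can be easily derived from Lemma~1 of \cite{verdu},'' and Verd\'u's Lemma~1 is precisely the permutation (deinterleaving) identity you spell out. Your explicit construction of $\textbf{P}$, the use of $\det(\textbf{I}+\textbf{A}\textbf{B})=\det(\textbf{I}+\textbf{P}\textbf{A}\textbf{P}^T\cdot\textbf{P}\textbf{B}\textbf{P}^T)$, and the index bookkeeping that identifies $\textbf{P}\textbf{G}\textbf{P}^T=\left[\begin{smallmatrix}\textbf{I}_n & \textbf{S}\\ \textbf{S}^T & \textbf{I}_n\end{smallmatrix}\right]$ are exactly the content of that citation, so there is nothing to add.
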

This Lemma can be easily derived from Lemma 1 of \cite{verdu}. The key to showing Theorem $\ref{thm:DF_asynch}$ is proving that
the DFT decomposition is again optimal for DF. To this end, we will need the following Lemma.
\begin{lemma}\label{lem:diagonal}
\begin{eqnarray*}
\lefteqn{\det\left(\textbf{I}_{2n} + \frac{1}{\sigma_D^2}\left[\begin{array}{cc} \Psi_R & \Psi_{RS}\\\Psi_{SR} & \Psi_S\end{array}\right]\left[\begin{array}{cc} \textbf{I}_n & \textbf{D}\\\textbf{D}^* & \textbf{I}_n\end{array}\right]\right)}\\
& \leq & \prod_{i=1}^n\left\{\det\left(\textbf{I}_2+\frac{1}{\sigma_D^2}\left[\begin{array}{cc} \psi_{Rii} & \psi_{RSii}\\\psi_{SRii} & \psi_{Sii}\end{array}\right]\left[\begin{array}{cc} 1 & d_{ii}\\d_{ii}^* & 1\end{array}\right]\right)\right\},
\end{eqnarray*}
where, $\textbf{D}$ is a diagonal matrix with $d_{ii}$ as it's $i$-th diagonal element.
\end{lemma}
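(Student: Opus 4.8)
The plan is to reduce the stated inequality to the block (Fischer) form of Hadamard's inequality after an interleaving permutation that renders the coupling matrix block-diagonal while leaving the covariance factor Hermitian and positive semidefinite. Write $\Psi=\left[\begin{array}{cc}\Psi_R & \Psi_{RS}\\ \Psi_{SR} & \Psi_S\end{array}\right]$ and $\textbf{M}=\left[\begin{array}{cc}\textbf{I}_n & \textbf{D}\\ \textbf{D}^* & \textbf{I}_n\end{array}\right]$. Since $\Psi$ is a covariance matrix it is Hermitian positive semidefinite, and $\textbf{M}$ is Hermitian positive semidefinite provided $|d_{ii}|\le 1$, which holds here because each $d_{ii}$ is the DFT value $\rho(\omega_i)$ of a correlation sequence and is bounded by unity. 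First I would reuse the orthonormal permutation $\textbf{P}$ introduced in the proof of Theorem \ref{thm:cf}. Because $\textbf{P}^T\textbf{P}=\textbf{I}_{2n}$ and $\det(\textbf{I}_{2n}+\textbf{P}\textbf{A}\textbf{P}^T)=\det(\textbf{I}_{2n}+\textbf{A})$, the left-hand side equals $\det(\textbf{I}_{2n}+\frac{1}{\sigma_D^2}\tilde\Psi\tilde{\textbf{M}})$, where $\tilde\Psi=\textbf{P}\Psi\textbf{P}^T$ and $\tilde{\textbf{M}}=\textbf{P}\textbf{M}\textbf{P}^T$. Exactly as in Theorem \ref{thm:cf}, the diagonality of $\textbf{D}$ makes $\tilde{\textbf{M}}$ block-diagonal with $i$-th $2\times2$ block $\textbf{M}^{(i)}=\left[\begin{array}{cc}1 & d_{ii}\\ d_{ii}^* & 1\end{array}\right]$, while the $i$-th diagonal block of $\tilde\Psi$ is $\Psi^{(i)}=\left[\begin{array}{cc}\psi_{Rii} & \psi_{RSii}\\ \psi_{SRii} & \psi_{Sii}\end{array}\right]$.

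Next I would symmetrize so that a positive-definiteness argument applies. As $\tilde{\textbf{M}}$ is block-diagonal and positive semidefinite, its square root $\tilde{\textbf{M}}^{1/2}$ is block-diagonal with blocks $(\textbf{M}^{(i)})^{1/2}$, and
\begin{eqnarray*}
\det\left(\textbf{I}_{2n}+\frac{1}{\sigma_D^2}\tilde\Psi\tilde{\textbf{M}}\right)=\det\left(\textbf{I}_{2n}+\frac{1}{\sigma_D^2}\tilde{\textbf{M}}^{1/2}\tilde\Psi\tilde{\textbf{M}}^{1/2}\right)=\det\left(\textbf{I}_{2n}+\frac{1}{\sigma_D^2}\textbf{B}\right),
\end{eqnarray*}
where $\textbf{B}=\tilde{\textbf{M}}^{1/2}\tilde\Psi\tilde{\textbf{M}}^{1/2}$ is Hermitian positive semidefinite. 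Because $\tilde{\textbf{M}}^{1/2}$ is block-diagonal, the $i$-th $2\times2$ diagonal block of $\textbf{B}$ is $(\textbf{M}^{(i)})^{1/2}\Psi^{(i)}(\textbf{M}^{(i)})^{1/2}$. I would then invoke Fischer's inequality (see \cite{Horn}): for the Hermitian positive-definite matrix $\textbf{I}_{2n}+\frac{1}{\sigma_D^2}\textbf{B}$ partitioned into $n$ consecutive $2\times2$ diagonal blocks, its determinant is at most the product of the determinants of those diagonal blocks. Finally, using $\det(\textbf{I}_2+XY)=\det(\textbf{I}_2+YX)$ with $X=(\textbf{M}^{(i)})^{1/2}$, each diagonal-block determinant satisfies $\det(\textbf{I}_2+\frac{1}{\sigma_D^2}(\textbf{M}^{(i)})^{1/2}\Psi^{(i)}(\textbf{M}^{(i)})^{1/2})=\det(\textbf{I}_2+\frac{1}{\sigma_D^2}\Psi^{(i)}\textbf{M}^{(i)})$, which is precisely the $i$-th factor on the right-hand side; assembling the product over $i$ completes the proof.

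The main obstacle I anticipate is justifying the positive semidefiniteness of $\textbf{M}$, equivalently $|d_{ii}|\le 1$, since this is what guarantees both that $\tilde{\textbf{M}}^{1/2}$ exists and that $\textbf{B}$ is positive semidefinite so Fischer's inequality is applicable. Once $\textbf{I}_{2n}+\frac{1}{\sigma_D^2}\textbf{B}$ is established to be Hermitian positive definite, the remaining work is the routine bookkeeping of the block structure through the permutation $\textbf{P}$ and the two standard determinant identities invoked above.
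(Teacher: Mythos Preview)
Your approach is correct and coincides with the paper's own argument: the paper's proof of this lemma is a one-line sketch stating that one uses the permutation matrix $\textbf{P}$ (the same one you invoke from the proof of Theorem~\ref{thm:cf}) to block-diagonalize and then appeals to a Hadamard-type bound. Your symmetrization via $\tilde{\textbf{M}}^{1/2}$ and subsequent application of Fischer's inequality is exactly the natural way to fill in that sketch, and your handling of the PSD condition on $\textbf{M}$ (equivalently $|d_{ii}|\le 1$) is the right concern to flag; it holds here because $\textbf{M}$ arises from the unitary (DFT) and permutation transforms of the noise covariance $\textbf{G}$, which is positive semidefinite.
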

\par The proof of this Lemma uses appropriate permutation matrix $\textbf{P}$ to block-diagonalize the LHS, similar to the proof of Lemma $\ref{lem:opt_CF}$. With Lemma $\ref{lem:rearrange}$ and $\ref{lem:diagonal}$ in hand, we can now prove the Theorem $\ref{thm:DF_asynch}$.

\par The input-output relation of a discrete-time Gaussian circular MIMO relay channel model is given by,
\begin{eqnarray}
{y}_R^c(i) & = & x_S(i)+{n}_R^c(i),\\
\bar{y}_D^c(i) & = & \left[\begin{array}{c} {y}_{DR}^c(i)\\{y}_{DS}^c(i)\end{array}\right] = \left[\begin{array}{cc} 0 & \rho_{SR}\\0 & 0\end{array}\right]\left[\begin{array}{c} x_R(i-1)_n\\x_S(i-1)_n\end{array}\right] + \left[\begin{array}{cc} 1 & \rho_{RS}\\\rho_{RS} & 1\end{array}\right]\left[\begin{array}{c} x_R(i)\\x_S(i)\end{array}\right]\nonumber\\
& & + \left[\begin{array}{cc} 0 & 0\\\rho_{SR} & 0\end{array}\right]\left[\begin{array}{c} x_R(i+1)_n\\x_S(i+1)_n\end{array}\right]+ \left[\begin{array}{c} {n}_{DR}^c(i)\\{n}_{DS}^c(i)\end{array}\right],          
\end{eqnarray}
for $1\leq i \leq n$; where $((i)_n)$ equals $i$ modulo $n$ except when s is zero or an integer multiple of $n$, in which case $((i)_n)=n$. The noise processes ovar each $n$-block $\{{n}_{R}^c(i)\}$ and $\left\{\left[\begin{array}{cc} {n}_{DR}^c(i) & {n}_{DS}^c(i)\end{array}\right]\right\}$ are circular and its autocorrelation is a periodic repitition of the autocorrelation of the original noise samples within an $n$-block as defined in \cite{Andrea01}. The output of the channel in vector form can be written as,
\begin{eqnarray}\label{eq:vector_relay}
\left[\begin{array}{c} {y}_R^c(1) \\{y}_R^c(2) \\\vdots \\{y}_R^c(n-1) \\{y}_R^c(n)\end{array}\right] & = & \textbf{I}_n\left[\begin{array}{c} {x}_S(1) \\x_S(2) \\\vdots \\x_S(n-1) \\x_S(n)\end{array}\right]+ \left[\begin{array}{c} {n}_R^c(1) \\{n}_R^c(2) \\\vdots \\{n}_R^c(n-1) \\{n}_R^c(n)\end{array}\right],
\end{eqnarray}
\begin{eqnarray}\label{eq:vector_dest}
\left[\begin{array}{c} {y}_{DR}^c(1) \\{y}_{DS}^c(1) \\{y}_{DR}^c(2) \\\vdots \\{y}_{DR}^c(n) \\{y}_{DS}^c(n)\end{array}\right] & = & \textbf{G}\left[\begin{array}{c} x_R(1) \\x_S(1) \\x_R(2) \\\vdots \\x_R(n) \\x_S(n)\end{array}\right] + \left[\begin{array}{c} {n}_{DR}^c(1) \\{n}_{DS}^c(1) \\{n}_{DR}^c(2) \\\vdots \\{n}_{DR}^c(n) \\{n}_{DS}^c(n)\end{array}\right],
\end{eqnarray}
where, the Gaussian noise processes $\{{n}_{R}^c(i)\}$ and $\left\{\left[\begin{array}{cc} {n}_{DR}^c(i) & {n}_{DS}^c(i)\end{array}\right]\right\}$ have covariance matrix $\sigma_R^2\textbf{I}_n$ and $\sigma_D^2\textbf{G}$, respectively. So,
\begin{eqnarray}\label{eq:mac_simple}
I(x_S^n,x_R^n;\bar{y}_D^{nc}) & = & h(\bar{y}_D^{nc}) - h(\bar{y}_D^{nc}|x_S^n,x_R^n)\nonumber\\
& = & h(\textbf{G}x^n+\bar{n}_D^{nc}) - h(\textbf{G}x^n+\bar{n}_D^{nc}|x_S^n,x_R^n)\nonumber\\
& \stackrel{(a)}{=} & h(\textbf{G}x^n+\bar{n}_D^{nc}) - h(\bar{n}_D^{nc})\nonumber\\
& \stackrel{(b)}{\leq} & \frac{1}{2} \log\det\left(cov(\textbf{G}x^n+\bar{n}_D^{nc})\right) - \frac{1}{2} \log\det\left(\sigma_D^2\textbf{G}\right)\nonumber\\
& = & \frac{1}{2} \log\det\left(\textbf{I}_{2n} + \frac{1}{\sigma_D^2}E[x^nx^{nT}]\textbf{G}\right)\nonumber\\
& \stackrel{(c)}{=} & \det\left(\textbf{I}_{2n} + \frac{1}{\sigma_D^2}\left[\begin{array}{cc} \Sigma_R & \Sigma_{RS}\\\Sigma_{SR} & \Sigma_S\end{array}\right]\left[\begin{array}{cc} \textbf{I}_n & \textbf{S}\\\textbf{S}^T & \textbf{I}_n\end{array}\right]\right)\nonumber\\
& \stackrel{(d)}{=} & \frac{1}{2} \log\det\left(\textbf{I}_{2n} + \frac{1}{\sigma_D^2}\left[\begin{array}{cc} \Psi_R & \Psi_{RS}\\\Psi_{SR} & \Psi_S\end{array}\right]\left[\begin{array}{cc} \textbf{I}_n & \textbf{D}\\\textbf{D}^* & \textbf{I}_n\end{array}\right]\right)\nonumber\\
& \stackrel{(e)}{\leq} & \prod_{i=1}^n\left\{\det\left(\textbf{I}_2+\frac{1}{\sigma_D^2}\left[\begin{array}{cc} \psi_{Rii} & \psi_{RSii}\\\psi_{SRii} & \psi_{Sii}\end{array}\right]\left[\begin{array}{cc} 1 & d_{ii}\\d_{ii}^* & 1\end{array}\right]\right)\right\}        
\end{eqnarray}
where, $x^n=\{x_R(i),x_S(i)\}_{i=1}^n$ and (a) is justified since noise is independent of the inputs, (b) follows as Gaussian input maximizes entropy for a given input covariance matrix, (c) follows from the fact that $\textbf{S}$ is a circulant matrix and hence it gets diagonalized by the DFT matrix $\textbf{F}$ i.e., $\textbf{S}=\textbf{F}\textbf{D}\textbf{F}^\dag$, (d) follows from the Lemma $\ref{lem:rearrange}$ and (e) is true because of Lemma $\ref{lem:diagonal}$. 

\par Here, $d_{ii}$ is the $i$-th eigenvalue of the circulant matrix $\textbf{S}$ and is given by the DFT of the first column of $\textbf{S}$ and thus $d_{ii}=\rho_{RS}+\rho_{SR}e^{-j\omega_i}=\rho(\omega_i)$. Let, $\psi_{Rii}=P_R(\omega_i), \psi_{Sii}=P_S(\omega_i)$ be the are the power allocated by the source and the relay for $i$-th component of the channel and $\alpha(\omega_i)$ be the correlation between $X_{Si}$ and $X_{Ri}$ as defined in \cite{Cover79}, then $\psi_{RSii} = \sqrt{(1-\alpha(\omega_i))P_S(\omega_i)P_R(\omega_i)}$ and subtituting these values in ($\ref{eq:mac_simple}$) we get,
\begin{eqnarray}\label{eq:mac_memory}
I(x_S^n,x_R^n;\bar{y}_D^{nc}) & \leq & \frac{1}{2} \sum_{i=1}^n \log\left(1 + \frac{1}{\sigma_D^2}(P_S(\omega_i)+P_R(\omega_i)+2\sqrt{(1-\alpha(\omega_i))P_S(\omega_i)P_R(\omega_i)}\rho(\omega_i))\right.\nonumber\\
& & \left. +\frac{1}{\sigma_D^4}\alpha(\omega)P_S(\omega_i)P_R(\omega_i)(1-\rho^2(\omega_i))\right)
\end{eqnarray}
Similarly,
\begin{eqnarray}\label{eq:broad_memory}
I(x_S^n;{y}_R^{nc}|x_R^n) & = & h(x_c^n + {n}_R^{nc}) - h(\bar{n}_R^{nc})\nonumber\\
& \leq & \frac{1}{2} \log\det\left(\textbf{I}_n + \frac{1}{\sigma_R^2}\Sigma\right)\nonumber\\
& = & \frac{1}{2} \log\det\left(\textbf{I}_n + \frac{1}{\sigma_R^2}\Psi\right)\nonumber\\
& \stackrel{(a)}{\leq} & \frac{1}{2} \sum_{i=1}^n \log\left(1 + \frac{1}{\sigma_R^2}\alpha(\omega_i)P_S(\omega_i)\right) 
\end{eqnarray}
where, $x_c^n=x_S^n|x_R^n$ and $\Sigma$ is the conditional covariance matrix and it can be expressed in terms of the input covariance matrices and its given by $\Sigma=E[x_c^n(x_c^n)^T]=\Sigma_S - \Sigma_{SR}\Sigma_{R}^{-1}\Sigma_{SR}^{\dag}$. (a) can be shown using Lemma $\ref{lemma:diag}$. Now combining Eqn. ($\ref{eq:mac_memory}$) and ($\ref{eq:broad_memory}$) and taking the limit in the block-length $n$, we get the expression of Theorem $\ref{thm:DF_asynch}$.
\end{proof}

\end{document}